\documentclass[11pt]{article}
\usepackage{fullpage}
\usepackage[utf8]{inputenc}
\usepackage{amsmath,amsfonts,amsthm,amssymb}
\usepackage{graphicx} 
\usepackage{enumerate}
\usepackage{enumitem}
\usepackage{xspace}
\usepackage[T1]{fontenc}

\usepackage{xcolor}
\usepackage{nameref}
\definecolor{ForestGreen}{rgb}{0.1333,0.5451,0.1333}
\definecolor{DarkRed}{rgb}{0.65,0,0}
\definecolor{Red}{rgb}{1,0,0}
\usepackage[linktocpage=true,
pagebackref=true,colorlinks,
linkcolor=DarkRed,citecolor=ForestGreen,
bookmarks,bookmarksopen,bookmarksnumbered]
{hyperref}
\usepackage{cleveref}

\usepackage{indentfirst}
\usepackage{algpseudocode}
\usepackage{algorithm}  
\usepackage{algorithmicx} 
\usepackage{verbatim}
\usepackage{todonotes}
\usepackage{appendix}
\usepackage{multirow}

\usepackage{thmtools}
\usepackage{thm-restate}

\declaretheorem[numberwithin=section]{theorem}
\declaretheorem[numberlike=theorem]{lemma}

\declaretheorem[numberlike=theorem]{corollary}

\declaretheorem{conjecture} 
\declaretheorem[numberlike=theorem,style=definition]{example} 
\declaretheorem[numberlike=theorem,style=definition]{definition}

\declaretheorem[numberlike=theorem,style=definition]{remark}
\declaretheorem[sibling=conjecture,name=Open Question]{question}

\usepackage{mathrsfs}  

\newcommand{\partition}{\mathbb{P}}
\newcommand{\gridnd}{\mathbb{Z}_{\ge 0}^d}

\Crefname{ALC@unique}{Line}{Lines}

\let\epsilon=\varepsilon
\let\eps=\varepsilon
\usepackage{mdframed}

\usepackage{graphicx} 
\usepackage{array} 
\usepackage{tabularx, booktabs}

\def\vs{{\mathbf{s}}}
\def\vz{{\mathbf{z}}}
\def\vw{{\mathbf{w}}}

\def\radiusbd{{k}}

\newcommand{\unionassign}{\triangleleft}

\definecolor{bleudefrance}{rgb}{0.19, 0.55, 0.91}

\newcommand\fang[1]{\textcolor{green}{[Fang: #1]}}

\title{Packing Compact Subgraphs with Applications to Districting}
\author{Ho-Lin Chen\thanks{
National Taiwan University. Email: holinchen@ntu.edu.tw, math.tmt514@gmail.com, r12922222@ntu.edu.tw. Huang and Chou are currently supported by NSTC grant 114-2222-E-002-004-MY3. Chen is supported by NSTC grant 113-2221-E-002-204-MY3.} 
\and Po-Yu Chou$^\ast$ 
\and Prathamesh Dharangutte\thanks{Department of Computer Science,
Rutgers University. Email: \{prathamesh.d,jg1555\}@rutgers.edu. Dharangutte and Gao would like to
acknowledge NSF support through IIS-2229876, DMS-2220271, CNS2515159, DMS-2311064, and CCF-2118953.} \and 
Jie Gao$^\dag$ \and
Shang-En Huang$^\ast$ \and
Fang-Yi Yu\thanks{Department of Computer Science,
George Mason University, Email: fangyiyu@gmu.edu
}
}
\date{}

\begin{document}

\maketitle

\begin{abstract}
Packing disjoint subgraphs in a given graph is a fundamental problem with many applications. Motivated by political districting, we focus on connected subgraphs that are compact (e.g., having constant radius from a single center vertex) and that satisfy additional composition requirements, such as a minimum population/weight threshold or balanced weight types (e.g., political affiliations). We aim to maximize coverage by disjoint districts that meet these requirements. 

In this work, we present new results that substantially improve the previously known bounds on balanced star districts for planar and minor-free graphs \cite{Dharangutte2025-ry}. In particular, we improve the approximation factor from $O(\log n)$ to $O(1)$ for packing balanced star districts using the exact same algorithm, but with a refined analysis. We also extend the results beyond planar graphs to minor-free graphs and an even broader family of graphs of bounded expansion. Additionally, we obtain an $O(1)$ approximation for packing radius-$k$ districts (with a constant $k$) in planar and apex-minor-free graphs. In order to get a $(1+\eps)$ approximation on the max coverage, we show that this can be achieved if we allow a slight relaxation of the balancedness parameters (by a factor that can be made arbitrarily close to $1$), for bounded radius-$k$ districts on planar and apex-minor-free graphs. 

We show that all of these results can also be obtained if we enforce a minimum weight threshold for each district as the composition requirement, rather than balancedness. We present various results on hardness and hardness of approximation for this variant, by graph and district types. 

\end{abstract}

\newpage
\thispagestyle{empty}
\tableofcontents
\thispagestyle{empty}

\newpage
\setcounter{page}{1}

\section{Introduction}\label{sec:intro}

We consider a general family of districting problems, formulated as packing compact, connected subgraphs, with the objective of maximizing coverage. Given a graph $G=(V, E)$ with vertex set $V$ of $|V|=n$ and edge set $E$.
Each vertex $v$ represents an atomic region with integer \emph{weight} $w$ (e.g., the population counts in the region, or reward).
A district is a \emph{connected} subgraph on a vertex set $S\subseteq V$ in $G$, and two districts must be \emph{vertex-disjoint}. We aim to maximize the total weight 
of the vertices covered by districts. 

Depending on the application scenarios, a district needs to meet additional requirements. One commonly seen requirement is that any two vertices in the same district are not very far, that is, the district is relatively \emph{compact}, or simply not too elongated. This is a desirable property that often appears in discussions of political districting. 
Further, there may be additional requirements on the population composition or budgeting constraints.  We discuss the two dimensions separately.

\paragraph{Compactness} To characterize this requirement, we consider two possible ways of limiting distances between vertices in district $S$: diameter-$k$ or radius-$k$, for a constant $k$.  A district $S$ meets the diameter-$k$ constraint if every two vertices in $S$ are within $k$ hops from each other.  A district $S$ has radius-$k$ if there is a center vertex $v\in S$ and all other vertices of $S$ are within $k$ hops from $v$. 
Further, we can consider the diameter or radius in either the \emph{strong} notion---within the induced subgraph on $S$, or the \emph{weak} notion---in the original graph $G$. Notice that the weak notion is hereditary: for a subgraph $G'$ with weak diameter $k$, any subgraph of $G'$ continues to have weak diameter at most $k$. But this is obviously not always true for the strong diameter or radius. 

\paragraph{Composition} In addition, the districts may also need to satisfy additional properties that involve the composition of the vertices, such as:
\begin{itemize} 
    \item \emph{Balancedness}: 
    in political districting there are two types of weights $w_1(v)$ and $w_2(v)$ for each vertex $v$, with $w(v)=w_1(v)+w_2(v)$. For any vertex subset $S$ and each $i \in \{1, 2\}$ we define $w_i(S)=\sum_{v\in S} w_i(v)$. 
    We call $S$ $c$-balanced if $\min(w_1(S), w_2(S))\geq (w_1(S) + w_2(S))/c$. That is, the two types of weights are comparable in size.

    \item \emph{Weight threshold}: a district is valid if $w(S)\geq B$ for a threshold $B$. This is a natural property that requires all districts to be significant in weight/size. 
Notice that we consider a lower bound type of constraint. The upper bound version is often trivial -- for example, taking all feasible singleton districts is the optimal solution. 

\end{itemize}


\subsection{Motivation}

Packing disjoint subgraphs is a fundamental problem that has many applications. The most notable applications are in political districting, in which connectivity, contiguity and compactness is highly preferred and sometimes enforced by state law~\cite{altman1998traditional}.
Furthermore, the $c$-balanced property guarantees that each political group has a decent amount of representation, which can be viewed as compliance with the Voting Rights Act of 1965~\cite{hebert2010realist}. The weight threshold is another natural property such that each district is significant in size/population. In general, the districting problem can also be applied to the allocation of other types of resources and services~\cite{bucarey2015shape,kalcsics2005towards}. 

Earlier work of~\cite{Dharangutte2025-ry} considered the $c$-balanced districting problem, which is one case of the districting problem defined earlier. 
\cite{Dharangutte2025-ry} provided various hardness results and approximation algorithms for different families of graphs. There are two major directions left open in~\cite{Dharangutte2025-ry}. First, most of the positive results (without additional assumptions on the districts or on very special graphs such as complete graphs or trees) are only for \emph{star} districts, which are (strong) radius-1 subgraphs. It was not clear how the results could be extended to higher, constant values of radius-$k$. Second, and more importantly, one of the most prominent applications in political districting considers planar graphs (e.g., the geographical subdivision into townships and counties). The best approximation factor for packing $c$-balanced star districts in a planar graph in~\cite{Dharangutte2025-ry} is $O(\log n)$ without a matching lower bound. Resolving this issue by either developing a constant-factor approximation or proving a stronger hardness result is a major open question. 


\subsection{Related Work}
Our problem is connected to computational (re)districting for schools and elections, which dates back to the 1960s~\cite{hess1965nonpartisan}. Since then, extensive work (see \cite{becker2020redistricting}) has studied them as an algorithmic problem, typically focusing on specific constraints such as balance or compactness. For balancedness, recent work introduces vote-band metrics~\cite{deford2020computational}, which require a certain fraction of votes to fall within a specified range (e.g., 45-55\%) for competitive elections. Subsequently, \cite{chuang2024drawing,Dharangutte2025-ry} introduce the $c$-balanced district problem. Other research focuses on optimizing balance scores directly~\cite{gillani2023redrawing}. Regarding compactness, one line of work treats it as a transportation cost~\cite{10.1145/3490486.3538271,franklin1973computed,clarke1968operations}, such notions also relate to the fair clustering problem~\cite{Bohm2020-kd, Chierichetti2018-qs, Jia2020-xu,Chhabra2021-nb}. Other research focuses on optimizing compactness scores~\cite{bar2020gerrymandering,jin2017spatial,kim2011optimization,jacobs2018partial} or using Voronoi or power diagrams with some variant of $k$-means~\cite{weaver1963procedure,cohen2017balanced,cohen2018balanced,fryer2011measuring}.
Besides the optimization approach, another popular method uses sampling to generate a distribution over districts and create a collection of district plans for selection~\cite{altman2011bard,cirincione2000assessing,deford2021recombination,najt2019complexity,chen2026balancedspanningtreedistributions}.

Finally, several papers take a fair division approach~\cite{landau2009fair,pegden2017partisan,de2018analysis}. The problem is quite different, however, as fairness in this context is defined concerning parties (types) and the number of seats they would win (i.e., the number of districts where they would have a majority) compared to other districts, rather than the internal composition of individual districts.


Besides its applications in political districting, packing compact subgraphs from a given family, such as paths, cycles, and stars, is a fundamental problem in its own right. Let $\mathcal{H}$ be a family of graphs. A $\mathcal{H}$-packing of a graph $G$ is a set of vertex disjoint subgraphs such that each subgraph is isomorphic to some element of $\mathcal{H}$. These types of problems have been studied in many papers with various hardness and approximation results, 
such as packing stars~\cite{Ning1989-tg,babenko_et_al:LIPIcs.STACS.2011.519,Li2021-hp,Xi2024-oc,Ravelo25}, 
induced stars~\cite{Kelmans1997-fn},
constant sized graphs~\cite{Loebl1993-nb,Caprara2002-iu,letzter2026}, 
cycles and paths~\cite{Hell1988-mj,Huang2021-ro,Ravelo25,BentertFGKLPRSS26},
cliques and bicliques~\cite{Holyer1981-kv,Hell1984-tm,Hell1986-jh}, and
low-diameter spanning trees~\cite{ChuzhoyPT20}. 
Another line of work \cite{abbas1999clustering, dondi2019tractability, DondiMSZ19,DondiL23,MontiS25,zhangpartitioning} explores covering and partitioning graph with $k$-clubs, as a generalization to the clique cover problem. $S \subseteq V$ is a $k$-club in graph $G$ if $S$ induces a subgraph in $G$ with diameter at most $k$ (cliques are $k$-club for $k = 1$). $k$-club with at least $t$ vertices is called a $(t,k)$-club.
The Minimum $k$-club cover problem asks for minimum number of $k$-clubs that cover all vertices in the graph and \cite{DondiMSZ19, zhangpartitioning} explore hardness and approximation algorithms for different values of $k$. The Maximum Disjoint $(t,k)$-club covering problem asks for vertex disjoint $(t,k)$-clubs that cover maximum number of vertices in the graph $G$. \cite{dondi2019tractability, MontiS25} study this problem for $k = 2,3$ on general and bipartite graphs.
Note that none of the above work consider the composition properties of the subgraphs. Thus, the results are only marginally relevant to this paper.  

\subsection{New Results}

In this paper, we resolve both problems left open in~\cite{Dharangutte2025-ry}, significantly expanding the theoretical understanding of this problem. Our results are summarized in \Cref{tab:results}.

\renewcommand{\arraystretch}{1.1}
\begin{table}[t]
\caption{A summary of new approximation results on the threshold and balanced districting problem, $\delta, \epsilon > 0$ are constants. New results are highlighted in bold. }
\label{tab:results}
\centering
\small
\begin{tabularx}{\textwidth}{ >{\centering\arraybackslash}p{3.2cm} | >{\centering\arraybackslash}p{3.8cm} | >{\centering\arraybackslash}X }
\toprule
\multicolumn{1}{c|}{\textbf{Graph Type}} & 
\multicolumn{1}{c|}{\textbf{District Type}} & 
\multicolumn{1}{c}{\textbf{Results}} \\
\toprule
\multirow{3}{*}{Planar} & star (strong radius-$1$) & $O(\log n)$-approx \cite{Dharangutte2025-ry} \\
\cline{2-3}
&  \textbf{strong radius-$k$} & \textbf{$O(1)$-approx (\Cref{thm:k-apex-minor})} \\
\cline{2-3}
& \textbf{weak/strong radius-$k$} & \textbf{$(1+\eps)$-approx, $\delta$-relaxation ~~(\Cref{thm:ptas_relax_aminor_free})} \\
\hline
\multirow{2}{*}{Apex-Minor-Free} & \textbf{strong radius-$k$} & \textbf{$O(1)$-approx (\Cref{thm:k-apex-minor})}\\
\cline{2-3}
& \textbf{weak/strong radius-$k$} & \textbf{$(1+\eps)$-approx, $\delta$-relaxation (\Cref{thm:ptas_relax_aminor_free})} \\
\hline
\multirow{2}{*}{$H$-Minor-Free} & star (strong radius-$1$) & $O(h^2 \log n)$-approx, $h = |H|$ \cite{Dharangutte2025-ry}\\
\cline{2-3}
&  \textbf{star (strong radius-$1$)} & \textbf{$O(1)$-approx (\Cref{thm:star-bounded-expansion})} \\
\hline
Bounded Expansion & \textbf{star (strong radius-$1$)} & \textbf{$O(1)$-approx (\Cref{thm:star-bounded-expansion})} \\
\bottomrule
\end{tabularx}
\end{table}

\renewcommand{\arraystretch}{1}

\paragraph{$O(1)$-approximation} Compared to the $O(\log n)$-approximation of packing star districts on a planar graph in~\cite{Dharangutte2025-ry}, we improved the results in two directions.
First, we show an algorithm (\Cref{thm:star-bounded-expansion}) with a \emph{constant} approximation factor for packing $c$-balanced \emph{star} districts in graphs of \emph{bounded expansion}, which is a concept of graph family that are `everywhere sparse' introduced by Ne\v{s}et\v{r}il and Ossona de Mendez~\cite{Nesetril2008-lt, Nesetril2008-vp, Nesetril2008-af}. Graphs with bounded expansion include minor-free graphs (and planar graphs) and graphs of bounded degree, as well as graphs that appear in complex network models such as the configuration model and the Chung–Lu model~\cite{DEMAINE2019199}.
This improves the approximation factor from $O(\log n)$ to $O(1)$ for a much bigger family of graphs. 
Second, we show a \emph{constant} approximation for packing $c$-balanced or threshold districts of \emph{bounded strong radius} on \emph{planar graphs and apex-minor-free} graphs\footnote{An apex graph is a graph that can be made planar by the removal of a single vertex and apex-minor-free graphs are a minor-closed family of graphs that has an apex graph as one of its forbidden minors.} (\Cref{thm:k-apex-minor}). This advances the previous results by improving the approximation factor from $O(\log n)$ to $O(1)$ and supporting packing of districts of bounded strong radius. Both results also apply to the threshold districts (instead of $c$-balanced ones for the composition requirement).

\paragraph{$(1+\eps)$-approximation} Next, we wonder what can be said if we want a $(1+\eps)$ approximation for the district packing problem. 
We show that PTAS can be obtained for packing \emph{bounded weak radius} districts on \emph{planar and apex-minor-free graphs}, if we allow a small relaxation of $c$-balanced property to $c(1+\delta)$-balanced, for any $\delta>0$ (\Cref{thm:ptas_relax_aminor_free}). The same holds for the weight-threshold version if we relax the threshold $B$ to $B(1-\delta)$ (\Cref{thm:ptas_relax_aminor_free}).

\paragraph{Hardness results} Finally, we complement the positive results with hardness results in \Cref{sec:hardness}.
For the majority of the positive results, we consider compactness as defined by a bounded radius $k$. For maximum packing of subgraphs of bounded \emph{diameter} (both weak and strong) on general graphs, we show that even finding \emph{one} district with weight threshold is NP-hard (\Cref{thm:hardness-diam-single}) by reducing it from the max clique problem.
Additionally, the separation oracle for the LP formulation relies on finding a violating district, which itself is a feasible district; implying that implementing the separation oracle for districts of bounded diameter in general graphs is NP-hard. 
For maximum packing of districts with a weight threshold, we show that it is NP-hard by reduction from the maximum independent set problem (\Cref{thm:hardness-max-packing}). Specifically, we show it is NP-hard to approximate to a factor better than $n^{1/2-\delta}$ for $\delta > 0$ on general graphs, and the maximum packing problem remains NP-hard even on planar graphs. This hardness holds even when we require the subgraphs to have bounded radius, with $k=1$ (stars). The problem is trivial for a complete graph but remains NP-hard on a tree by a reduction from the Knapsack problem.

Last, the $c$-balanced property considers two types of weights; if three or more weights are considered, the problem of packing compact graphs is hard to approximate within any factor (see \Cref{thm:three-colors}).

\medskip

We will now formally define the problems and then provide an overview of the challenges and new techniques to obtain the results.

\section{Preliminary}\label{sec:framework}


\subsection{Problem Statement}\label{sec:model}


Consider a connected undirected graph  $G=(V,E)$ with $n=|V|$ vertices and $m=|E|$ edges. We define a general district packing problem. The input graph is associated with an \emph{objective weight} function $w: V \rightarrow \mathbb{Z}_{\geq 0}$ and potentially a list of \emph{feature weight} functions $w_1, \dots, w_d: V \rightarrow \mathbb{Z}_{\geq 0}$. These feature weights can be independent of each other and independent of objective weights.
A \textit{district} is a subset of vertices $T \subseteq V$ such that the induced subgraph $G[T]$ is connected and satisfies specific validity constraints: \emph{compactness} (constraints on topology) and \emph{composition} (constraints on feature weights). A \emph{(partial) districting} solution is a collection of vertex-disjoint districts $\mathcal{T} = \{T_1, \dots\}$. We remark that finding a complete partition of the graph into vertex-disjoint districts (i.e., $\bigcup_{T \in \mathcal{T}} T=V$) may not be possible; rather, we seek a packing of disjoint districts with a maximum total objective weight.


For the compactness constraints, besides connectivity, we consider radius or diameter bounds.  Specifically, the distance $d_G(u,v)$ between vertices $u$ and $v$ in $G$ is the number of edges on the shortest path between $u$ and $v$. 
 A district $T$ has bounded \emph{strong radius} $\radiusbd$ if there exists a vertex $\gamma \in T$ such that for all other vertices $u \in T$ , $d_{G[T]}(u,\gamma) \leq \radiusbd$. Similarly, $T$ has bounded \emph{weak radius} $k$ if there exists a vertex $\gamma \in V$ such that for all other vertices $u\in T$, $d_G(u,\gamma) \leq k$. A districting $\mathcal{T}$ satisfies strong (or weak) radius bound $\radiusbd$ if every $T\in \mathcal{T}$ has bounded strong (or weak) radius $\radiusbd$.
A district $T$ has bounded \emph{strong diameter} if all pairs of vertices $u, v$ in $T$ have distance in the induced subgraph at most $k$, i.e., $d_{G[T]}(u,v) \leq k$. It has bounded \emph{weak diameter} if for every pair of vertices $u, v \in T$, their distance in the original graph $G$ is at most $k$, i.e., $d_G(u,v) \le k$.

Composition constraints impose limits on the feature weights of each district. First, we can extend our weight to a single district $T$ and a districting solution $\mathcal{T}$.  For any weight function (representing either the objective $w$ or a feature $w_i$), the district weight is $w(T) = \sum_{v \in T} w(v)$, and the total weight of the districting solution $\mathcal{T}$ is $w(\mathcal{T}) = \sum_{T \in \mathcal{T}} w(T)$.
We consider two types of composition requirements:


\paragraph{Balanced Districting.} 
Given feature weights $w_1, w_2$ and a balancedness parameter $c \geq 2$, a district $T$ is \emph{$c$-balanced}~\cite{Dharangutte2025-ry} if:
\begin{equation}\label{eq:def_balanced}
    \min\{w_1(T), w_2(T)\} \geq \frac{1}{c}\big(w_1(T) + w_2(T)\big).
\end{equation}
A districting solution $\mathcal{T}$ is $c$-balanced if every district $T \in \mathcal{T}$ satisfies Eq.~\eqref{eq:def_balanced}. 
For bicriteria approximation, we may also refer to a \emph{$\delta$-relaxed $c$-balanced} condition (where $0 < \delta < 1$):
\begin{equation}\label{eq:def_relax_balanced}
    \min\{w_1(T), w_2(T)\} \geq \frac{1}{c(1+\delta)}\big(w_1(T) + w_2(T)\big).
\end{equation}
\paragraph{Threshold Districting.} 
Given a feature weight $w_1$ and a threshold parameter $B \geq 0$, a district $T$ is a \emph{$B$-threshold} district if:
\begin{equation}\label{eq:def:threshold}
    w_1(T) \geq B.
\end{equation}
A districting solution $\mathcal{T}$ is $B$-threshold if every district $T \in \mathcal{T}$ satisfies Eq.~\eqref{eq:def:threshold}.

Similarly, a \emph{$\delta$-relaxed $B$-threshold} district is defined by the condition:
\begin{equation}\label{eq:def_relax_threshold}
    w_1(T) \geq B(1-\delta).
\end{equation}
We will use $S,T$ or $U$ to denote districts and $\mathcal{S}$ for the set of valid districts (e.g., the set of $c$-balanced districts with strong radius bound $\radiusbd$).

\paragraph{Approximate and Bicriteria Solutions.}
Given a graph $G$, weights $w, \{w_i\}$, and constraint parameters ($\radiusbd$ and $c$ or $B$), our goal is to find a districting solution $\mathcal{T}$ that maximizes the total objective weight $w(\mathcal{T})$ and every district $T \in \mathcal{T}$ satisfies the (compactness and composition) constraints.

We say a districting solution $\mathcal{T}$ is an \emph{$\alpha$-approximation} if every district in $\mathcal{T}$ satisfies the exact constraints and
$$ \alpha \cdot w(\mathcal{T}) \geq w(\mathcal{T}_{\text{OPT}}), $$
where $\mathcal{T}_{\text{OPT}}$ is an optimal solution satisfying the exact constraints.
We also consider \emph{bicriteria $(\alpha, \delta)$-approximations}. In this setting, the solution $\mathcal{T}$ satisfies the weight approximation guarantee ($\alpha \cdot w(\mathcal{T}) \geq w(\mathcal{T}_{\text{OPT}})$) but is permitted to satisfy the $\delta$-relaxed composition constraints.

For instance, in the $c$-balanced districting problem with strong radius bound $\radiusbd$:
\begin{itemize}
    \item An {$\alpha$-approximation} requires that each district in $\mathcal{T}$ is strictly $c$-balanced and has strong radius at most $\radiusbd$.
    \item A {bicriteria $(\alpha, \delta)$-approximation} allows the districts in $\mathcal{T}$ to be $\delta$-relaxed $c$-balanced (i.e., balanced with parameter $c(1+\delta)$), while still comparing the solution against the strictly $c$-balanced optimum $\mathcal{T}_{\text{OPT}}$.
\end{itemize}

\paragraph*{Graph Types.} A graph $G=(V, E)$ is \emph{planar} if there exists an embedding of all vertices to the Euclidean plane such that all edges can be drawn without intersections other than the endpoints.
A \emph{face} of a planar embedding is a connected region separated by the embedded edges. $G$ is said to be \emph{outerplanar} if there exists an embedding of $G$ such that there is a face containing all vertices. Often, this face is assumed to be the outer face. A graph $H$ is said to be a \emph{minor} of $G$ if $H$ is isomorphic to the graph obtained by a sequence of vertex deletions, edge contractions, and edge deletions from $G$. We say that $G$ is \emph{$H$-minor-free} if $G$ does not have $H$ as its minor. A graph is an apex graph if there is a vertex whose removal results in a planar graph. A graph $G$ is apex-minor free if it does not have some apex graph $H$ as a minor. See~\cite{Eppstein2000-ry,Demaine04equivalence} for more discussions on apex-minor-free graphs. We now formally define bounded expansion graphs.

\begin{definition}
\label{def:bounded_exp_graph}
Let $G=(V, E)$ be an undirected (multi-)graph and let $f:\mathbb{Z}_{\ge 0}\to\mathbb{Z}_{\ge 0}$ be a non-decreasing function. We say that $G$ belongs to the class of \emph{$f$-bounded expansion graphs} if the following holds.
Let $(V_1, V_2, \ldots, V_{n'})$ be a partition of $V$, where each part $V_i$ induces a connected graph.
Let $k$ be the maximum diameter of the induced subgraphs $G[V_i]$ for all $i$. Consider the contracted graph $H$ from $G$ where each part $V_i$ is contracted to a single vertex. Then, for any subgraph of $H'\subseteq H$, the density of $H'$ satisfies $\frac{2|E(H')|}{|V(H')|} \le f(k)$.
\end{definition}
At a high-level, a bounded expansion graph $G$ keeps a bounded density (and thus bounded degeneracy) if one contracts bounded diameter subgraphs all at once in $G$.

\subsection{Algorithmic Framework}

We first examine the districting problem from a bird's-eye view and explain how the technical tools introduced in this work can be applied more generally.

\paragraph{Linear Program Formulation}
Recall that a candidate district is represented by a set of vertices $S\subseteq V$ as defined in~\cref{sec:model}. We use $\mathcal{S}$ as the set of all candidate districts that meet the requirements. This set $\mathcal{S}$ can be of potentially exponential size. 
For each candidate district $S\in \mathcal{S}$, we define a variable $x_S\in \{0, 1\}$ indicating whether or not this district is chosen. Due to the disjointness of districts, among all the districts that cover the same vertex $v$, only one of them can take a value of $1$. Denote by $w(S)$ the weight of $S$. The integer program is defined as 
\begin{equation}\label{eq:lp-main}
\begin{aligned}
\text{maximize}\ &\sum_{S} w(S) x_S \\
\text{subject to}\ &\forall v\in V, \sum_{S\ni v} x_S \le 1\\
\ &\forall S, x_S\in \{0, 1\}
\end{aligned}
\end{equation}

To solve an integer solution, we use two steps. First, solve for an (approximate) solution to the LP problem. Then round the fractional solution to an integer solution. We discuss the two problems separately. 
To relax this integer program to a fractional solution, we allow $x_S$ to take real values in $[0, 1]$ and define $x'_S=x_S\cdot w(S)$, $x'_S \in [0, w(S)]$. Rewriting the LP using $x'_S$, the primal and dual LP problems are as follows.

\begin{mdframed}
\textbf{Primal:} Maximizing $\sum_S x'_S$, subject to $\sum_{S: v\in S} \frac{x'_S}{w(S)} \le 1$ for all $v\in V$; $x'_S\ge 0$ for all $S$. \\
\textbf{Dual:} Minimizing $\sum_v y'_v$, subject to $\sum_{v\in S} \frac{y'_S}{w(S)} \ge 1$ for all $S$; $y'_v\ge 0$ for all $v$.
\end{mdframed}

Solving the LP problem is straightforward if we are given a set of polynomially many valid districts. One scenario in which this could happen is when a districting already exists on $G$ and, for a new districting solution, only minor changes to the previous solution are acceptable.  
In general, the number of potential districts could be exponentially large, even if we consider only stars (e.g., under the compactness restriction). 
We cannot explicitly write the variables and constraints in the LP and must rely on a separation oracle. For the dual LP problem, a \textit{strongly violating dual constraint} is a district $S$ such that $S$ is valid and for the dual variables values of vertices in $S$ and $0 < \varepsilon < 1$, $\sum_{v \in S} y'_v < (1-\varepsilon) w(S)$. We formulate the task of implementing a separation oracle as follows:

\begin{mdframed}
\textbf{Input:}
Graph $G$, dual variables $\{y'_v\}$, weight function $w: V \rightarrow Z_{\geq 0}$, implicit set $\mathcal{S}$ of all valid districts.
\\
\textbf{Goal:} Either returns a violating district $S \in \mathcal{S}$ such that its weight $w(S)$ is at least half to the maximum weight among all violating districts, and 
\begin{equation}\label{eq:violating-district}
    \sum_{v\in S} y'_v < (1-\epsilon/2) w(S); 
\end{equation}
    or reports that \emph{all} districts $S\in \mathcal{S}$ satisfy $\sum_{v \in S} y'_v \ge (1-\epsilon) w(S)$.
\end{mdframed}

\paragraph{Randomized Rounding}

Once we have an (approximate) fractional solution to the LP, we use the same randomized rounding as in~\cite{Dharangutte2025-ry} to find an integer solution $I$: sort all districts with non-zero $x$-values in decreasing order of their weights; in this order a district $S$ is selected with probability proportional to $x_S$, if it does not overlap with any districts already selected in the integer solution. This rounding procedure introduces another approximation factor of $O(\tau)$, where $\tau$ is called the \emph{correlation ratio} and bounds the correlation term $\sum_{A,B \in \mathcal{S}, A \cap B\neq \emptyset} x_A x_B\leq \tau \sum_{S\in \mathcal{S}}x_S$. 

We state this lemma from \cite{Dharangutte2025-ry} for completeness. Suppose $\mathcal{S}_{\geq \delta}$ is the set of districts from the fractional LP solution that have $x$-value at least $\delta$ and $w(I)$ is the total weight of the integer solution $I$ after rounding. The expectation is on the random coin flips in the rounding process. 
\begin{lemma}[Lemma B.5 in~\cite{Dharangutte2025-ry}]\label{lem:rounding}
    Suppose that there exists a $\tau\in \mathbb{R}_{>0}$ such that for all $\delta>0$, $$ \sum_{A,B \in \mathcal{S}_{\geq \delta}, A \cap B\neq \emptyset} x_A x_B\leq (\tau/2) \cdot \sum_{S\in \mathcal{S}_{\geq \delta}}x_S,$$
    then $E[w(I)]\geq \frac{1}{2\tau} \sum_S w(S)\cdot x_S$.
\end{lemma}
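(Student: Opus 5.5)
The plan is to analyze the rounding of~\cite{Dharangutte2025-ry} through an alteration (``mark, then resolve conflicts'') argument, and then to reduce the weighted conflict term to the unweighted hypothesis by a layer-cake decomposition in the threshold $\delta$. I expect that reduction to be the main obstacle, and it is not resolved below.

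\textbf{Model of the rounding.} I read ``selected with probability proportional to $x_S$'' as follows:
\begin{itemize}
\item each district $S$ with $x_S>0$ is independently marked with probability $p\,x_S$, with scaling $p=1/\tau$;
\item districts are scanned in decreasing order of $w(\cdot)$;
\item a marked district is kept if it meets no previously kept district.
\end{itemize}
If the paper's scaling constant differs, only $p$ changes.

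\textbf{Step 1: expected marked weight.} Linearity gives
\[
E\Bigl[\sum_{S \text{ marked}} w(S)\Bigr]=p\sum_S w(S)\,x_S .
\]

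\textbf{Step 2: charging blocked districts.} A marked district $S$ that is not kept is blocked by a kept district $A$ that overlaps $S$ and precedes it, so $w(A)\ge w(S)$. I charge $w(S)$ to the pair $\{A,S\}$. Both districts must be marked, and marks are independent, so
\[
E[\text{lost weight}]\le p^2\sum_{A\cap B\neq\emptyset} x_Ax_B\,\min\{w(A),w(B)\},
\]
where the sum is over unordered pairs of overlapping districts. Hence it suffices to prove
\[
\sum_{A\cap B\ne\emptyset}x_Ax_B\min\{w(A),w(B)\}\;\le\;\frac{\tau}{2}\sum_S w(S)\,x_S .
\]
With $p=1/\tau$ this yields
\[
E[w(I)]\ \ge\ \Bigl(p-\frac{p^2\tau}{2}\Bigr)\sum_S w(S)\,x_S=\frac{1}{2\tau}\sum_S w(S)\,x_S .
\]

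\textbf{Step 3: using the hypothesis.} The hypothesis controls conflicts only inside the $x$-superlevel families $\mathcal{S}_{\ge\delta}$. I would write each pair contribution as an integral over $\delta$. For example,
\[
\min\{x_A,x_B\}=\int_0^\infty \mathbf{1}[A,B\in\mathcal{S}_{\ge\delta}]\,d\delta .
\]
Integrating the hypothesis against a suitable measure $\mu(d\delta)$ then bounds the left side of Step~2's target by $\frac{\tau}{2}\sum_S x_S\,\mu([0,x_S])$. One also needs the LP fact that in an optimal (or basic) solution the weight and the $x$-value interact monotonically on conflicting pairs. This would let $\min\{w(A),w(B)\}$ be dominated by the weight of the district with the smaller $x$-value, which is the one that drops out of $\mathcal{S}_{\ge\delta}$ first.

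\textbf{Main obstacle and fallback.} The hard step is Step~3. The rounding order is by weight, but the hypothesis is stated on $x$-threshold families, and a direct layer-cake in $\delta$ does not by itself produce the factor $w$. If no clean monotonicity between $x$ and $w$ is available, I would change the analysis to process districts in decreasing order of $x_S$ within the charging argument only. That is, I would bound the loss via pairs $\{A,S\}$ with $x_A\ge x_S$, so that each pair appears in $\mathcal{S}_{\ge x_S}$. I would then apply the hypothesis at $\delta=x_S$ for each $S$ and sum.

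This fallback is only a candidate. The step that needs checking is whether the weight ordering can be replaced this way: the paper's rounding blocks $S$ by heavier districts, whereas this bound counts pairs by $x$-value, so the two charging schemes must be shown compatible.
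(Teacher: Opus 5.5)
Your Steps 1 and 2 are correct. Independent marking with probability $x_S/\tau$, followed by a scan in decreasing weight order, is equivalent to the paper's sequential rounding. Charging each blocked $S$ to a heavier marked district that overlaps it reduces the lemma exactly to your target inequality. (The paper does not reprove this lemma; it imports it from prior work.)

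The gap is Step 3, and as you set it up it cannot be closed. You read $\mathcal{S}_{\ge\delta}$ as an $x$-superlevel family, which is what the sentence before the lemma literally says. Under that reading the hypothesis carries no information about weights, and the lemma is false. For example:
\begin{itemize}
\item Let $A_1,\dots,A_m$ be stars with centers $c_i$ and leaves $u_{ij}=u_{ji}$ for $j\neq i$. Then $A_i\cap A_j=\{u_{ij}\}$, yet no vertex lies in more than two stars.
\item Give each star weight $W$ and $x=1/2$, and add $m^2$ pairwise disjoint single-vertex districts of weight $1$ and $x=1/2$.
\item Every $x$-superlevel family is either all districts or empty. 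Under any pair-counting convention the conflict sum is at most $m^2/2$, while $\sum_S x_S\ge m^2/2$. So $\tau=2$ satisfies the hypothesis.
\item However, $\sum_S w(S)x_S\ge mW/2$, so the conclusion demands $E[w(I)]\ge mW/8$.
\item Every packing contains at most one $A_i$, so it weighs at most $W+m^2<2W$ when $W>m^2$. This contradicts the conclusion once $m>16$.
\end{itemize}
For the same reason, no LP-monotonicity fact can rescue Step 3. The lemma is about an arbitrary feasible $x$, which in the application is only approximately optimal, and no such fact holds in this example. Your $x$-ordered fallback also fails. It only produces bounds of the form $\sum_S x_S(\cdots)$ with no weight factor, and in the actual process $S$ is blocked by a heavier district, not by one with larger $x$.

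The missing idea is to layer by weight rather than by $x$: take $\mathcal{S}_{\ge\delta}=\{S: w(S)\ge\delta\}$. This is what the decreasing-weight order of the rounding is built for. The hypothesis still holds in the applications, because the correlation bounds of this paper (e.g.\ \Cref{thm:main}) apply to every subfamily of districts. Use $\min\{w(A),w(B)\}=\int_0^\infty\mathbf{1}[w(A)\ge\delta]\,\mathbf{1}[w(B)\ge\delta]\,d\delta$ and $w(S)=\int_0^\infty\mathbf{1}[w(S)\ge\delta]\,d\delta$. Then
\[
\sum_{\{A,B\}:\,A\cap B\neq\emptyset}x_Ax_B\min\{w(A),w(B)\}
\le\int_0^\infty\sum_{A,B\in\mathcal{S}_{\ge\delta},\,A\cap B\neq\emptyset}x_Ax_B\,d\delta
\le\frac{\tau}{2}\int_0^\infty\sum_{S\in\mathcal{S}_{\ge\delta}}x_S\,d\delta
=\frac{\tau}{2}\sum_S w(S)\,x_S .
\]
This is exactly your Step 2 target. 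With $p=1/\tau$ it gives $E[w(I)]\ge\frac{1}{2\tau}\sum_S w(S)x_S$ with no further work.
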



\section{Technical Overview}\label{sec:technical-details}


This section summarizes the main algorithmic ideas and the core technical challenges underlying Sections~\ref {sec:rounding} and~\ref{sec:minorfree}.
The new progress requires addressing two major challenges. First, we need to identify the candidate districts that meet the specified requirements (connectivity, compactness, balancedness, or weight threshold). Next, we choose a disjoint subset of the candidate districts with maximum total weight. The first issue is related to \emph{feasibility}, and the second issue regards \emph{disjointness} or \emph{packing}. 

At a high level, our positive results rely on two complementary approaches. Section~\ref{sec:rounding} develops a refined analysis of a linear programming (LP) relaxation combined with randomized rounding, yielding constant-factor approximations on minor-free and bounded-expansion graph classes. Section~\ref{sec:minorfree} strengthens the result for apex-minor-free graphs by combining Baker’s layering technique with a new dynamic programming (DP) framework on bounded-treewidth graphs, leading to a PTAS under relaxed parameters.

\subsection{Refined Rounding Analysis (Section~\ref{sec:rounding})}

Recall that $x_S$ is the primal variable in the LP formulation for a district $S$, $x_S\in [0,1]$. We consider districts of strong radius $k$. The LP framework allows addressing feasibility and packing in an orthogonal manner. The separation oracle helps to find a feasible district that substantially improves the LP solution, and the randomized rounding manages packing.

The performance of the randomized rounding procedure is governed by a second-order correlation term
$
\sum_{A \cap B \neq \emptyset} x_A x_B,
$
which measures the extent to which overlapping districts interfere during rounding. 
Given a graph $G$ and  $\mathcal{S}$ (a collection of connected subgraphs of $G$), we say that $G, \mathcal{S}$ has \emph{correlation ratio} $\tau$ if  $\sum_{S: v\in S} x_S\le 1$, and $\sum_{A \cap B \neq \emptyset} x_A x_B
\le
\tau \sum_{S} x_S,$ for all  $v$ and $\{x_S\}$. 
\Cref{lem:rounding} gives an $O(\tau)$ approximation guarantee if we can establish any bound on $\tau$. In particular, in \Cref{thm:main} we show that if $G$ is $H$-minor-free for some $H$ of size $h$, then $\tau$ can be bounded by a function $q(h, k)$ that depend solely on $h$ and $k$.
Establishing this bound yields a constant-factor approximation.

The main technical difficulty is that intersecting districts may overlap in highly structured ways. Even when each district has a strong radius $k$, a fixed district $A$ could, a priori, intersect with many other districts $B$.
A na\"{i}ve charging argument could therefore lead to an $O(n)$ blow-up.\footnote{For example, consider $A$ being a star of $\Omega(n)$ vertices, with each leaf intersecting distinct districts. In this case, the terms $x_A$ involved in the correlation sum are at least $\Omega(n)x_A$. This shows some districts will be charged for $\Omega(n)$ times in a na\"{i}ve charging argument.}

To overcome this, we exploit the structural sparsity of minor-free graphs. Specifically, an $H$-minor-free graph has bounded \emph{degeneracy}---there is an orientation of the edges such that the out-degree of any vertex is bounded by $O(h\sqrt{\log h})$~\cite{Thomason01,AlonKS23}. Using this property, we design a probabilistic charging scheme based on a sequence of random edge orientations and contractions. For each intersecting pair $(A,B)$, we identify a carefully structured ``intersecting pivot'' along a canonical path between their centers. We then show that with probability $\Omega_{h,k}(1)$, the pair can be charged to one endpoint district in a way that ensures:

\begin{enumerate}
    \item Each intersecting pair is charged with constant probability.
    \item Each district receives only $O_{h,k}(1)$ total charge.
\end{enumerate}

This argument combines bounded degeneracy, controlled path contractions, and a forward-trajectory encoding that limits the number of distinct pivots that can be associated with a fixed district. The result is a constant bound on the correlation term, completing the rounding analysis. 

The analysis here is independent of how the LP fractional solution is computed and thus can be combined with any algorithm or implementation that produces an (approximate) fractional solution. 

A similar argument extends to bounded-expansion graphs (\Cref{subsec:expansion}), since such graphs have constant `density' everywhere and also enjoy bounded degeneracy. With a slightly modified contraction, the correlation term is bounded by a factor that depends only on the density parameter.  
We remark that the randomized rounding algorithm we use is fairly standard~\cite{10.1145/73393.73394,ChanH12}, but the analysis of its performance is new and may find further applications.

The new analysis of randomized rounding, combined with the separation oracle for star districts in~\cite{Dharangutte2025-ry}, immediately yields an $O(1)$-approximation algorithm for packing $c$-balanced or $B$-threshold star districts in planar, minor-free, and bounded-expansion graphs.

In order to handle districts of a general radius $k$ with $k>1$, we also need to extend the separation oracle for star districts in~\cite{Dharangutte2025-ry} to accommodate a general radius $k$ (\Cref{sec:separation}). To do that, we formulate a vector-valued subset sum problem on a connected compact subgraph. Similar to the standard subset sum, we trim the solution space by keeping a succinct representation of the subsets that approximate every possible solution with a multiplicative factor and maximize a linear score. This new framework can be applied to both $c$-balanced districts and threshold districts, as well as to a generic connected knapsack problem (see~\cite{Dey2024-jg}). 
For the connected subgraph sum problem, we present an FPTAS on graphs of \emph{bounded treewidth}. On a planar graph or an apex-minor-free graph, the $k$-hop neighborhood of any center vertex has bounded treewidth~\cite{Eppstein2000-ry,Bodlaender1998-ma}. Thus, we can enumerate every possible center vertex and apply the separation oracle in each bounded radius-$k$ neighborhood.
This provides an efficient separation oracle for planar and apex minor-free graphs.


\subsection{PTAS for Apex Minor-Free Graphs (Section~\ref{sec:minorfree})}

Section~\ref{sec:minorfree} improves the approximation guarantee for apex-minor-free graphs using a different paradigm with two elements: the Baker's layering approach~\cite{Baker1994-ih} for planar and apex minor-free graphs~\cite{Demaine2005-sj} and a PTAS for \emph{subgraph packing} for graphs of bounded treewidth. 

We explain the main idea for planar graphs. The Baker's approach considers a BFS from an arbitrary root and removes the vertices on the layers $=i \mod t$, for some $0\leq i\leq t-1$. With a fixed $i$, this will partition the graph into connected components, each with at most $t$ consecutive layers of the BFS. Thus, each piece is $t$-outerplanar and thus has treewidth at most $3t-1$~\cite{Bodlaender1998-ma}. 

Suppose, for now, we have a magic algorithm $\mathcal{A}$ that solves compact subgraph packing in a graph of bounded treewidth. We apply algorithm $\mathcal{A}$ for each piece, such that all obtained districts from the connected components combined form a feasible packing solution for the original graph $G$. If we try this for all possible values of $i$ between $0$ and $t-1$, and take the maximum solution, this is a $(1+\eps)$ approximation to the optimal with $t=O(\radiusbd/\eps)$. The reason is that in an optimal solution, only the subgraphs that are `influenced' by the removed layers are destroyed. These vertices are within $O(\radiusbd)$ hops from the removed layers since each compact subgraph has radius $\radiusbd$. 
By the pigeonhole principle, at least one of the $t$ candidate solutions keeps all but $O(\radiusbd/t)=O(\eps)$ fraction of the optimal weight. 

The central technical task is therefore to find the algorithm $\mathcal{A}$ that solves the problem of packing compact districts on bounded-treewidth graphs. Unlike the separation oracle problem (which asks for a \emph{single} feasible district), packing requires simultaneously selecting \emph{multiple} disjoint connected subgraphs subject to either balancedness or threshold constraints. This introduces a significantly more complex state space.

We develop a dynamic programming scheme over a tree decomposition that encodes partial district configurations within each bag. The DP must simultaneously track the connectivity information of partially formed districts and the cumulated weight (for threshold constraints) or the weight balance (for $c$-balanced districts). Our DP mechanism has to relax the parameter $c$ slightly (for the $c$-balanced districts) or the weight threshold $B$ (for the threshold districts) by a factor. 

\section{$O(1)$-approximation on Planar Graphs and Beyond
}\label{sec:rounding}

\subsection{Constant Ratio for Minor-Free Graphs}

In this section, we briefly describe the linear programming approach for the district packing problem, together with the randomized rounding framework based on the support variables of an optimal linear program solution.
The approximation guarantee of this framework relies on bounding certain second-order terms involving the support variables.
In particular, throughout this section we aim to prove the following key technical \Cref{thm:main} for minor-free graphs and \Cref{thm:bdd-expansion} for bounded expansion graphs.

\begin{restatable}{lemma}{correlationterm}
\label{thm:main}
For all $k\ge 1$ and $h\ge 1$, there exists $q(h, k) = h^{O(k^2)}$ such that the following holds.
Let $G$ be an $H$-minor-free graph with $|H|\le h$, $\mathcal{S}$ be the collection of connected subgraphs with strong radius $k$, and $x: V\to \mathbb{R}_{\ge0}$ is a nonnegative weight function satisfying $\sum_{S\in \mathcal{S}: S\ni v} x_S\le 1$ for all $v$ for all $v$.  Then,
\[
\sum_{A,B \in \mathcal{S} \text{ and } A \cap B \neq \emptyset} x_A x_B
\;\le\;
q(h,k) \cdot \sum_{S \in \mathcal{S}} x_S .
\]  
\end{restatable}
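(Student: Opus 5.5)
The approach is to avoid charging a pair $(A,B)$ to an arbitrary common vertex; that costs $|A|$ per district and only gives $\sum_A |A|x_A$. Instead, each district $A$ gets a \emph{bounded} set of pivot vertices $\Pi(A)$, with $|\Pi(A)|\le q(h,k)$, such that every intersecting pair is caught by a pivot of one of its members. Given that, we can write
\[
\sum_{A\cap B\neq\emptyset} x_Ax_B \;\le\; 2\sum_{A} x_A \sum_{p\in \Pi(A)}\sum_{B\ni p} x_B \;\le\; 2\sum_A x_A\,|\Pi(A)| \;\le\; 2q(h,k)\sum_A x_A,
\]
using the packing constraint $\sum_{B\ni p}x_B\le 1$ at each pivot. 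So the whole lemma reduces to a structural covering statement: for every intersecting pair $(A,B)$, either $\Pi(A)\cap B\neq\emptyset$ or $\Pi(B)\cap A\neq\emptyset$. If a deterministic rule is hard, it is enough to show this holds with probability $\Omega_{h,k}(1)$ over a random choice of pivot sets; the same computation then works in expectation, losing only a constant.

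\textbf{Sparsity used.} $H$-minor-free graphs, and all their minors, are $d$-degenerate with $d=O(h\sqrt{\log h})$. Fix an acyclic orientation with out-degree at most $d$. A vertex $u$ then has at most $d^{j}$ ``forward'' vertices reachable by directed paths of length $\le j$. These forward trajectories from the center $a$ of $A$ are the natural candidate pivots, since there are only $d^{O(k)}$ of them. The difficulty is that a pair $A,B$ with centers $a,b$ meets along a path $a\leadsto v\leadsto b$ of length at most $2k$ lying in $A\cup B$. This path need not be directed forward from either endpoint, so a plain degeneracy argument in $G$ fails.

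\textbf{Contraction step.} To handle mixed-direction paths, I would process the path from $a$ through the shared vertex $v$ to $b$ level by level, contracting as I go.
\begin{enumerate}
\item On a canonical shortest path $P_{AB}$ (BFS tree of $A$ from $a$ to $v$, then BFS tree of $B$ from $v$ to $b$), find the first ``backward'' edge relative to the orientation.
\item Contract the prefix of $P_{AB}$ up to that edge into a single vertex. Contracted pieces are connected and have radius at most $2k$, so the result is still a minor of $G$ and remains $d$-degenerate.
\item Reorient the contracted graph, using fresh randomness, and repeat.
\end{enumerate}
Each round either shortens the path or makes its next edge forward. After at most $2k$ rounds of contractions, each of depth at most $2k$, the intersecting pivot of the pair is reached from $a$ or from $b$ by a forward trajectory of length at most $2k$ in some iterated minor. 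The forward-trajectory encoding records, for district $A$, the sequence of out-neighbour choices across all rounds. Since there are at most $d^{O(k)}$ choices per round over $O(k)$ rounds, this gives $|\Pi(A)|\le d^{O(k^2)}=h^{O(k^2)}$, which matches the claimed $q(h,k)$.

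\textbf{Main obstacle.} The hardest part will be making the contraction canonical and independent of the pair. The contractions must be chosen once, globally, as a random family of vertex-disjoint radius-$O(k)$ connected pieces, because a pair-specific contraction could not be charged consistently to $A$. I would first try a random clustering in each round: every vertex picks a uniformly random out-neighbour and contracts into it with probability $1/2$. I would then argue that a fixed length-$2k$ path between two centers collapses in the intended way with probability at least $(2d)^{-O(k)}$. Conditioning on this event is what produces the ``charged with constant probability'' guarantee. I also expect to have to verify that each contracted piece stays a shallow (depth-$O(k)$) minor, so that degeneracy is preserved at every level.
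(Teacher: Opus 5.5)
Your reduction to bounded pivot sets and your overall architecture match the paper's: random contractions along a bounded out-degree orientation of successive minors, forward trajectories, and a per-pair success probability of $d^{-O(k^2)}$. But the step you defer is where the lemma actually lives, and as written it has a gap. The packing inequality $\sum_{B\ni p}x_B\le 1$ only helps if each pivot $p$ is a \emph{single original vertex lying in $B$}. A vertex of an iterated minor reached by a forward trajectory is a contracted piece $C$, and the districts meeting $C$ can have total weight up to $|C|$. So you need two things you do not state:
\begin{itemize}
\item a naming convention: each piece is represented by its unique vertex with $\mathit{pick}=\bot$, which is unique because the orientation is acyclic;
\item an event forcing the vertices of $P_{A,B}$ to contract only along $P_{A,B}$, so that every representative on the contracted path is an original path vertex, hence in $A$ on $a$'s side and in $B$ on $b$'s side.
\end{itemize}
Your claim that each round ``shortens the path or makes its next edge forward'' until the intersection is reached from $a$ or $b$ also stalls on the valley pattern $a\to u\leftarrow v\to w\leftarrow b$. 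Here neither center has a forward path along $P_{A,B}$ to $v$, and neither $u$ nor $w$ can contract along the path. Progress therefore requires contracting $v$ itself (or a center), and then saying which original vertex names the new piece. This is exactly what the paper's event $E_{A,B}$ handles. Centers never contract, and $i_{A,B}$ is frozen for $k-1$ rounds until the path is one of nine directed-minimal patterns. In round $k$ the intersection vertex may contract into a path neighbour, so that an $A$-side name $j'$ becomes adjacent to a $B$-side name $j$. The orientation of $j'j$ in $\vec G_k$ then decides between outputting $(A,B,j)$ and $(B,A,j')$, and $J_A$ is counted as forward-trajectory endpoints plus one out-step.

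Once naming and path-restricted contraction are in place, your ``collapse'' idea can be finished more simply than the paper does it. Let every vertex, centers and intersection included, be contractible. In each of $2k$ rounds, require that one edge $x\to y$ of the current contracted path gets $\mathit{pick}[x]=y$ while all other path vertices pick $\bot$; once the path is a single vertex, require that it picks $\bot$. Under your rule this has conditional probability at least $(2d)^{-(2k+1)}$ per round. Then all of $P_{A,B}$ ends in one piece named by some $r\in V(P_{A,B})\subseteq A\cup B$. Setting $\Pi(S)=\{\text{name of the final piece containing }c_S\}$ gives $\Pi(A)=\Pi(B)=\{r\}$, so $r\in B$ or $r\in A$ and the pair is charged. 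This yields $q=2(2d)^{2k(2k+1)}$ with $|\Pi(A)|=1$ and no trajectory counting at all.

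Finally, drop the shallowness check. Every minor of an $H$-minor-free graph is $H$-minor-free, so $d$-degeneracy persists for arbitrary pieces. The pick forests you propose are not of radius $O(k)$ anyway, so that verification would fail. Shallowness only matters for the bounded-expansion variant, where the paper changes the pick rule for exactly this reason.
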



Combining the above lemma with the linear programming formulation and the randomized rounding framework yields the following results.

\begin{theorem}\label{thm:k-apex-minor}
There exists a polynomial-time algorithm that computes an $O(1)$-approximate solution for packing $c$-balanced or threshold districts with \emph{bounded strong radius} on \emph{planar graphs} and \emph{apex-minor-free graphs}.
\end{theorem}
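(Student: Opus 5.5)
The plan is to combine three ingredients: an approximately optimal fractional LP solution, the correlation bound of \Cref{thm:main}, and the rounding guarantee of \Cref{lem:rounding}. Planar graphs are $K_5$-minor-free, and apex-minor-free graphs exclude some fixed apex graph $H$. So in both cases $G$ is $H$-minor-free with $h=|H|$ constant. Since $k$ is also a constant, \Cref{thm:main} gives a correlation ratio $q(h,k)=h^{O(k^2)}=O(1)$.

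\textbf{Step 1: solve the LP.} First we obtain a fractional solution $\{x_S\}$ of \eqref{eq:lp-main} whose value is within a constant factor of the LP optimum, and hence of $w(\mathcal{T}_{\text{OPT}})$. The support must have polynomial size. We run the ellipsoid method, or a multiplicative-weights scheme, on the dual, using the approximate separation oracle of Section~2.2. This oracle must find a valid district $S$ (strong radius $k$, $c$-balanced or $B$-threshold) with $\sum_{v\in S} y'_v<(1-\epsilon/2)w(S)$, or certify that none with slack $(1-\epsilon)$ exists.

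To implement it, we enumerate every center $\gamma$. The $k$-hop neighborhood $N_k(\gamma)$ has bounded treewidth, by \cite{Eppstein2000-ry} for apex-minor-free graphs and by $k$-outerplanarity \cite{Bodlaender1998-ma} in the planar case. Inside this neighborhood we run the connected vector subset-sum FPTAS over a tree decomposition. It tracks the connectivity pattern of the partial district on each bag, BFS distances to $\gamma$ within the induced subgraph (needed for strong radius), and trimmed approximate values of $(w_1,w_2,w,\sum y')$. The dual objective and the $c$-balance or threshold constraint must be certified exactly, not in relaxed form. To ensure this, we trim only along coordinates where a multiplicative error is harmless: we keep exact-feasibility witnesses and approximate only the score $\sum y' - (1-\epsilon)w$. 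If needed, we restrict the oracle to districts of weight within a factor $2$ of the maximum violating weight, as the oracle specification allows.

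The oracle's approximation only loses a $(1-\epsilon)$ factor in the LP value. The ellipsoid method then produces polynomially many candidate districts, and solving the restricted primal over these gives $\{x_S\}$ with value at least $(1-\epsilon)\mathrm{LP}\ge(1-\epsilon)w(\mathcal{T}_{\text{OPT}})$.

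\textbf{Step 2: round.} Apply the randomized rounding of \Cref{lem:rounding}. Every $\mathcal{S}_{\ge\delta}$ is a subfamily of strong-radius-$k$ connected subgraphs, and $x$ restricted to it still satisfies the packing constraints. So \Cref{thm:main} gives the hypothesis of \Cref{lem:rounding} with $\tau=2q(h,k)$. Hence
\[
E[w(I)]\ge \frac{1}{4q(h,k)}\sum_S w(S)x_S .
\]
Every selected district is valid and the selected districts are disjoint. Derandomization, or repetition plus Markov's inequality, yields an $O(1)$-approximation with high probability.

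\textbf{Main obstacle.} I expect the hardest step to be the separation oracle. It must respect the \emph{strong} radius, which means tracking distances to $\gamma$ inside the induced subgraph across bag boundaries. It must also keep the composition constraint exact while trimming the state space to polynomial size. My first attempt would guess $w_1(S)$ up to powers of $(1+\epsilon)$, enforce the constraint against the guessed bucket, and absorb the resulting loss into the $(1-\epsilon)$ slack of the oracle.
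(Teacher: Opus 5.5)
Your overall architecture is the paper's, and the rounding half is correct: an approximate LP solution via a separation oracle that enumerates centers and works inside bounded-treewidth $k$-hop neighborhoods, followed by \Cref{lem:rounding} with $\tau=2q(h,k)$ from \Cref{thm:main}. The gap is in the oracle, at the point you flagged yourself: nothing in your plan returns districts that are \emph{exactly} $c$-balanced, or exactly $B$-threshold, in polynomial time.

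If ``keeping exact-feasibility witnesses'' means carrying $w_1$ and $w_2$ exactly through the DP, the state space is pseudo-polynomial, because weights are binary-encoded integers. Trimming a signed score $\sum y'-(1-\epsilon)w$ multiplicatively is also not well defined near $0$. Your fallback, bucketing $w_1$ in powers of $(1+\epsilon)$ and checking the constraint against the bucket, only certifies a $(1+O(\epsilon))$-relaxed constraint. An arbitrary bucket representative can have $w_1$ slightly below $B$, or a slightly unbalanced ratio, and such errors compound through the DP. That loss cannot be ``absorbed into the $(1-\epsilon)$ slack.'' The slack concerns the dual inequality $\sum_{v\in S}y'_v\ge(1-\epsilon)w(S)$, whereas the composition constraint decides whether $S\in\mathcal{S}$ at all. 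A relaxed-balanced district placed in the LP support would survive rounding and make the output infeasible. That is the $\delta$-relaxed regime of \Cref{thm:ptas_relax_aminor_free}, not the exact statement here.

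The missing idea is the paper's $(\ell,\epsilon)$-trimming. You approximate \emph{all} coordinates $(w_1,w_2,w,y)$ on a geometric grid. In each bucket, however, you keep the partial solution that maximizes a linear function $\ell$ encoding the tight side of the composition constraint. For thresholds this is $\ell_{th}(\vz)=z_1$. For balance it is $\ell_1(\vz)=(c-1)z_1-z_2$ and $\ell_2(\vz)=(c-1)z_2-z_1$, in two separate runs. Because $\ell$ is linear, $\ell$-dominance is preserved when the DP adds vertex weights at forget nodes and sums children at join nodes (\Cref{lem:trimming}), while the per-node multiplicative errors sum to at most $\epsilon$.

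At the root, take the maximum-weight violating valid district $S_{\max}$, say with $w_1(S_{\max})\le w_2(S_{\max})$. It has a representative $S$ from the $\ell_1$-run, and this $S$ has the following properties:
\begin{itemize}
\item It satisfies $(c-1)w_1(S)-w_2(S)\ge(c-1)w_1(S_{\max})-w_2(S_{\max})\ge 0$ exactly.
\item It satisfies the slack side $(c-1)w_2(S)\ge w_1(S)$ by coordinatewise approximation once $e^{2\epsilon}<c-1$.
\item It still has $y(S)<(1-\epsilon/2)w(S)$ and $w(S)\ge w(S_{\max})/2$.
\end{itemize}
This is \Cref{lem:separation_balanced2subgraphsum}, and \Cref{lem:separation_threshold2subgraphsum} with $w_1(S)\ge w_1(S_{\max})\ge B$, both built on \Cref{lem:allsubsetsum}. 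It is what guarantees that the LP support contains only exactly valid districts.
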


\begin{theorem}\label{thm:star-bounded-expansion}
There exists a polynomial-time algorithm that computes an $O(1)$-approximate solution for packing $c$-balanced or threshold \emph{star} districts on \emph{$H$-minor-free graphs} and \emph{bounded-expansion graphs}.
\end{theorem}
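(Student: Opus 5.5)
The plan is to assemble three ingredients: an approximate LP solver driven by a separation oracle, the correlation bound of \Cref{thm:main}, and the rounding guarantee of \Cref{lem:rounding}. The statement concerns star districts (strong radius $1$) on $H$-minor-free and bounded-expansion graphs.

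\textbf{Step 1: solve the LP.} For star districts, I would take the separation oracle from~\cite{Dharangutte2025-ry}. It handles both $c$-balanced and $B$-threshold stars. For each center $v$, it searches subsets of $N(v)$ with a subset-sum-style trimming, and it either returns a strongly violating valid star of near-maximum weight or certifies approximate dual feasibility. Feeding this oracle into the standard multiplicative-weights / ellipsoid scheme for packing LPs gives, in polynomial time, a fractional solution $\{x_S\}$. It is supported on polynomially many valid stars, satisfies $\sum_{S\ni v}x_S\le 1$, and has value at least $(1-\epsilon)$ times the LP optimum. That optimum is in turn at least $w(\mathcal{T}_{\text{OPT}})$. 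This step is identical to the earlier work, so I would cite it rather than redo it.

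\textbf{Step 2: bound the correlation ratio.} For $H$-minor-free graphs, I would apply \Cref{thm:main} with $k=1$. This gives $\sum_{A\cap B\neq\emptyset}x_Ax_B\le q(h,1)\sum_S x_S$ with $q(h,1)=h^{O(1)}$. The hypothesis of \Cref{thm:main} only requires the packing constraint, and restricting to $\mathcal{S}_{\ge\delta}$ (setting the other variables to zero) preserves it. Hence the bound holds for every threshold $\delta$, and \Cref{lem:rounding} applies with $\tau=2q(h,1)$. This yields $E[w(I)]\ge \frac{1}{4q(h,1)}\sum_S w(S)x_S$, which is $\Omega(1)\cdot w(\mathcal{T}_{\text{OPT}})$ for fixed $H$. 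The rounding only selects districts from the support and keeps them disjoint, so feasibility is exact. The expectation can be turned into a high-probability guarantee by repetition, or derandomized by conditional expectations.

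\textbf{Step 3: bounded-expansion graphs.} Here I need the analogue of \Cref{thm:main}, namely the bound announced as \Cref{thm:bdd-expansion}, for stars. I expect this to be the main obstacle, because \Cref{thm:main} is stated only for minor-free graphs. I would redo the charging argument for stars directly, using two facts.
\begin{enumerate}
\item Two stars $A,B$ with centers $a,b$ intersect only if $d(a,b)\le 2$.
\item Contracting each star is a depth-$1$ contraction, so the relevant contracted graphs have density at most $f(2)$ by \Cref{def:bounded_exp_graph}.
\end{enumerate}
Bounded degeneracy then gives an acyclic orientation with out-degree $O(f(2))$. For each intersecting pair, I would pick a pivot: either a shared vertex, or a center–leaf–center path oriented forward from one endpoint. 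I would then charge the pair to the district from which the forward trajectory of length at most $2$ starts.

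Each district $A$ has $O(f(2)^2)$ forward trajectories. For a fixed pivot vertex $u$, the sum over $B\ni u$ of $x_B$ is at most $1$, so $x_A\sum_B x_B$ summed over $B$ charged to $A$ is at most $O(f(2)^2)\,x_A$. The delicate part is handling the leaf at which the two stars actually meet, which may be reachable from $A$ only along backward-oriented edges. To fix this, I would use the paper's random-contraction trick: contract a random matching of center–leaf edges, so that with constant probability the meeting point collapses onto a center. This keeps the expected charge $O_f(1)$ per district.

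With $\tau=O_f(1)$, Steps 1–2 go through verbatim and give the constant factor for bounded-expansion graphs.
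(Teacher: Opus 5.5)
Your Steps 1 and 2 are the paper's argument. You use the star separation oracle of \cite{Dharangutte2025-ry} to solve the LP, then \Cref{lem:rounding} fed by the correlation bound. Your remark that zeroing out variables outside $\mathcal{S}_{\ge\delta}$ preserves the packing constraint is the right justification for the ``for all $\delta$'' hypothesis. For bounded-expansion graphs the paper re-derives nothing: it applies \Cref{thm:bdd-expansion} with $k=1$, which also covers $H$-minor-free graphs, since those have bounded expansion. You name exactly that bound, so citing it would finish the proof. Instead you propose to re-prove it for stars, and the key step of that re-proof fails as stated.

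The hard configuration is $c_A\leftarrow u\rightarrow c_B$, where both centers are out-neighbors of the meeting leaf $u$. Under a bounded out-degree orientation a center can have unboundedly many in-neighbors, each a meeting leaf with a different star. If you contract a random \emph{matching}, each center absorbs at most one of them. So when $c_A$ and $c_B$ both have in-degree about $n$, the specific leaf $u$ collapses onto either center with probability that can be $O(1/n)$. That destroys the per-pair constant probability the probabilistic-method step needs.

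What works, and what the paper's $\mathcal{R}'$ does, is a many-to-one contraction. Every vertex independently picks one of its at most $d$ out-neighbors, each with probability $1/(d+1)$. Then $u$ lands on $c_A$ or $c_B$ with probability at least $2/(d+1)$ regardless of in-degrees, and with constant probability the centers themselves pick nothing. The price is that pick-chains can be long. You must therefore also bound the depth of the contracted parts, so that the contracted graph is a shallow minor whose degeneracy is controlled by $f$. The paper does this by randomly shuffling the colors of the greedy $(d+1)$-coloring and allowing picks only toward lower colors. For stars, a random head/tail split in which tails may pick only heads would also suffice, since it gives depth-one parts.

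Finally, your fact 2 (``contracting each star'') gives no usable density bound, because the support stars overlap. The bound you need is for the partition you actually contract.
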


We prove \Cref{thm:k-apex-minor} and \Cref{thm:star-bounded-expansion} at the end of this section.
The high level approach of proving \Cref{thm:main} is to partition the pairs of intersecting districts into different cases, and give the upper bound to each of the case. To start with the simplest case, we first introduce the following notions of \emph{center vertices} and \emph{intersecting pivots}.

\paragraph{Center Vertices and Intersecting Pivots.}
For each district $S \in \mathcal{S}$, we designate a \emph{center} vertex $c_S \in S$ to the district, where every vertex in $S$ is within distance $k$ from $c_S$.
If there are multiple choices, $c_S$ can be chosen arbirarily. Note that since $S$ has a strong radius at most $k$, there is at least one center vertex in $S$.

For each pair of districts $(A, B)$ that intersect, we further define the following terms.

\begin{definition}
Let $A, B\in \mathcal{S}$ be two districts with $A\cap B\neq \emptyset$. Let $P_{A, B}$ be an arbitrarily chosen (undirected) path that satisfies the following conditions: 
\begin{enumerate}[itemsep=0pt]
    \item $P_{A, B}$ is a path connecting $c_A$ and $c_B$. If $c_A=c_B$ then $P_{A, B}$ is a single vertex.
    \item There exists an \emph{intersecting pivot} $i_{A, B}\in P_{A, B} \cap A\cap B$.
    \item The length of the subpath between $c_A$ and $i_{A, B}$, and the length of the subpath between $c_B$ and $i_{A, B}$ are both at most $k$. 
    \item All vertices on the path between $c_A$ and $i_{A, B}$ are in $A$, and all vertices on the path between $i_{A, B}$ and $c_B$ are in $B$.
\end{enumerate}
We require $P_{A, B}=P_{B, A}$ and $i_{A, B}=i_{B, A}$.
\end{definition}

Now, with the definition of the intersecting pivot, we are able to partition the set of intersecting district pairs $(A, B)$ into two types --- an ``easy'' case where $i_{A, B} \in \{c_A, c_B\}$, or otherwise a ``non-trivial'' case.


For the easy case, without loss of generality, we may assume  $i_{A, B}=c_A$. In this case, we bound the total correlation using the primal constraints:
\[\sum_{A\cap B\neq \emptyset\text{ and } i_{A, B}=c_A} x_A x_B \leq \sum_{A\in \mathcal{S}} x_A \sum_{B \ni c_A} x_B  \leq \sum_{S \in \mathcal{S}} x_S\]


Therefore, it remains to bound the correlation sum for all intersecting districts $(A, B)$, where $c_A, c_B$ and $i_{A, B}$ are all distinct vertices.

\subsection{A Probabilistic Method for Challenging Case}
The analysis of the easy case does not simply generalize to the non-trivial case where $i_{A, B}, c_A$ and $c_B$ are all distinct. One potentially convincing reason is that if we fix the district $A$, all the intersecting pivots $i_{A, B}$ might be distinct, which implies that the sum over all intersecting districts $\sum_{B: A\cap B\neq \emptyset} x_Ax_B$ could be as large as $|A|\cdot x_A = \Omega(n) \cdot x_A$ in the worst case.

Fortunately, one can utilize the bounded degeneracy property of $H$-minor-free graphs for a fixed minor $H$ of size $h$.
A \emph{$d$-degenerate graph} is an undirected graph in which every subgraph has at least one vertex of degree at most $d$. The \emph{degeneracy} of a graph is the smallest value $d$ for which it is $d$-degenerate.
The following fact bounds the degeneracy of $H$-minor-free graphs.

\begin{lemma}[\cite{Thomason01, AlonKS23}]
Let $G=(V, E)$ be an $H$-minor free graph for a fixed graph $H$ of $h$ vertices, then the degeneracy of $G$ is $d=O(h \sqrt{\log h})$. 
\end{lemma}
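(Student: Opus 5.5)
The plan is to reduce the degeneracy claim to the classical extremal bound for complete minors. The bound is due to Kostochka and Thomason, and is sharpened in~\cite{Thomason01}: there is an absolute constant $c_0$ such that every graph with average degree at least $c_0\, h\sqrt{\log h}$ contains $K_h$ as a minor. Granting this, the argument has three short steps, in this order.

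\textbf{Step 1 (from $H$ to $K_h$).} Since $|V(H)| = h$, $H$ is a subgraph of $K_h$, and so $H$ is a minor of $K_h$. The minor relation is transitive. Hence if $G$ contained $K_h$ as a minor it would contain $H$ as a minor. So an $H$-minor-free graph $G$ is also $K_h$-minor-free.

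\textbf{Step 2 (heredity).} Take any subgraph $G' \subseteq G$. It arises from $G$ by vertex and edge deletions, so every minor of $G'$ is a minor of $G$. Therefore $G'$ is $K_h$-minor-free as well.

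\textbf{Step 3 (density to degeneracy).} By the extremal bound, $G'$ has average degree $2|E(G')|/|V(G')| < c_0 h\sqrt{\log h}$. Its minimum degree is at most its average degree, so $G'$ has a vertex of degree less than $c_0 h\sqrt{\log h}$. This holds for every subgraph $G'$, which is exactly the definition of $d$-degeneracy with $d = O(h\sqrt{\log h})$.

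Equivalently, repeatedly deleting a minimum-degree vertex gives an ordering of $V$. Orienting each edge from the earlier-deleted endpoint to the later one yields out-degree at most $d$ at every vertex. This is the form used later in the charging argument for \Cref{thm:main}.

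The main obstacle is the extremal bound itself; the three steps above are routine. If we had to prove it rather than cite it, we would follow the Thomason/Alon--Krivelevich--Sudakov route.

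\begin{enumerate}
\item Take a minor-minimal graph $G_0$ among minors of $G$ with $|E| \ge D|V|$, where $D = C h\sqrt{\log h}$ for a large constant $C$.
\item Use minimality under edge contraction to show that every edge of $G_0$ lies in at least about $D$ triangles. Contracting an edge loses one edge plus one edge per triangle containing it, and must drop the density below $D$.
\item Use minimality under deletion to bound degrees, so that the neighbourhood of a minimum-degree vertex is a small, very dense graph.
\item Inside that dense neighbourhood, build $h$ disjoint connected branch sets that are pairwise adjacent. Pick random small sets and connect them through short paths. A union bound over the $\binom{h}{2}$ pairs succeeds exactly when the degree is of order $h\sqrt{\log h}$; this is the source of the $\sqrt{\log h}$ factor.
\end{enumerate}

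For the purposes of this paper, though, we would cite the bound as stated rather than reprove it.
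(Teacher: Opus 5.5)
Your proposal is correct and matches the paper, which gives no proof of its own and simply cites the Kostochka--Thomason extremal bound for $K_h$ minors (Thomason's sharp form, and the short proof of Alon, Krivelevich and Sudakov). Your three steps are exactly the reduction the citation leaves implicit: $H\subseteq K_h$, so $H$-minor-freeness implies $K_h$-minor-freeness; minor-freeness is closed under taking subgraphs; and minimum degree is at most average degree, which also gives the greedy out-degree-$d$ orientation used in the charging scheme.
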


With this property, we are able to design a randomized charging scheme, such that for each pair of intersecting districts $(A, B)$, with probability $\Omega_{h, k}(1)$, the cost $x_Ax_B$ will be charged to either $x_A$ or $x_B$ and simultaneously each $x_A$ or $x_B$ will be charged by at most $O_{h, k}(1)$ times.  
This completes the proof by the standard probabilistic method.

We summarize the above discussion into the following lemma, which directly leads to \Cref{thm:main}.

\begin{lemma}\label{lemma:randomized-charging-scheme}
Let $G=(V, E)$ be an $H$-minor-free graph (where $H$ has $h$ vertices) and $\mathcal{S}$ be a set of subgraphs with strong radius $k$.
Then, there exists a randomized algorithm $\mathcal{R}$ that takes $(G, \mathcal{S}, \{x_S\})$ as input, and output a list $\mathcal{O}$ of ordered tuples $(A, B, j) \in \mathcal{S}\times \mathcal{S}\times V$, such that the following holds.
\begin{enumerate}[itemsep=0pt]
    \item[(1)] $j\in B$.
    \item[(2)] For each district $A\in \mathcal{S}$, the set $J_A:= \{ j \ |\  (A, B, j) \in \mathcal{O}\}$ contains at most $d^{k+1} {\binom{2k}{k}}$ distinct vertices.
    \item[(3)] For any intersecting pair $(A, B)$, with probability at least $1/(d+1)^{k^2}$, either $(A, B, j)\in \mathcal{O}$ or $(B, A, j')\in\mathcal{O}$, for some $j\in B$ or some $j'\in A$.
\end{enumerate}
\end{lemma}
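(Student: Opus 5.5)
The plan is to build $\mathcal{R}$ from a degeneracy orientation plus random contractions, applied to the canonical paths $P_{A,B}$. Only intersecting pairs with $c_A$, $c_B$, $i_{A,B}$ pairwise distinct need attention. For the others we output $(A,B,c_A)$ when $i_{A,B}=c_A$, since then $c_A\in B$; this contributes at most one vertex to $J_A$. For a non-trivial pair, $P_{A,B}$ splits at $i_{A,B}$ into a segment $P_A\subseteq A$ of length $\le k$ from $c_A$ and a segment $P_B\subseteq B$ of length $\le k$ to $c_B$. The idea is to "walk" from $c_A$ towards $c_B$ using only forward (out-going) edges of an acyclic orientation of out-degree $\le d$. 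A walk from $c_A$ of length $\ell$ then has at most $d^{\ell}$ possible endpoints independent of $B$, which is where the bound in (2) should come from.

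\textbf{Step 1 (orientation and random choices).} Fix an acyclic orientation of $G$ with out-degree $\le d$, which exists by the degeneracy bound \cite{Thomason01,AlonKS23}. In one round, every vertex $v$ independently picks one of its $\le d$ out-neighbours or ``stay'', each with probability $\ge 1/(d+1)$. For an edge $u\to v$ of a path, the event that $u$ picks $v$ therefore has probability $\ge 1/(d+1)$, and events at distinct tails are independent. We run $k$ rounds. In each round we contract the chosen edges and re-orient the contracted graph, which is a minor of $G$ and hence still $H$-minor-free and $d$-degenerate. The goal is that each round shortens every surviving $P_{A,B}$ on its $B$-side by one edge with probability $\ge (d+1)^{-k}$. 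We ask that all $\le k$ tails on $P_B$ that point towards $c_A$ pick correctly, while the edges pointing away from $c_A$ stay as out-edges of the contracted vertex. After $k$ rounds the success probability is $\ge (d+1)^{-k^2}$, matching (3).

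\textbf{Step 2 (charging rule and the two cases).} On success, the image of $P_{A,B}$ collapses to the images of $c_A$ and $c_B$, which are either equal or joined by an edge oriented one way or the other. If the edge goes from the $A$-side to the $B$-side, or the images coincide, we output $(A,B,j)$. Here $j\in B$ is the original vertex on $P_B$ that the last successful contraction chose, i.e. the vertex reached from $c_A$ by following chosen out-pointers through $P_A\cup P_B$. Symmetrically, if the edge points from the $B$-side to the $A$-side we output $(B,A,j')$ with $j'\in A$. This gives (1) and (3).

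\textbf{Step 3 (bounding $J_A$, the main obstacle).} We must show that, for fixed $A$ and a fixed run of the randomness, every $j\in J_A$ is encoded by a short forward trajectory starting at $c_A$. The trajectory consists of at most $k$ ``inside-$A$'' moves along $P_A$ and at most $k$ ``contraction'' moves. Their interleaving is one of $\binom{2k}{k}$ patterns, and each move is one of $\le d$ out-neighbour choices in the current contracted graph. We also need one extra factor $d$ for the final edge, giving $d^{k+1}\binom{2k}{k}$. The delicate point is that contracted vertices are sets, so ``out-neighbour of a super-vertex'' must still determine a single original vertex. I would handle this by attaching to each super-vertex the representative chosen when it was formed. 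Then each forward step is determined by (current representative, choice index), so the number of reachable $j$ is controlled by counting choice sequences rather than set sizes. I expect most of the work to go into making this encoding rigorous, in particular into ensuring that the moves along $P_A$ (which need not be directed) can be absorbed into the $\binom{2k}{k}$ interleaving count. If that fails, I would instead also contract $P_A$ randomly, at the cost of a worse exponent in (3).
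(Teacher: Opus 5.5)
There is a genuine gap, and it is the one you flagged in Step 3: if only $P_B$ is contracted, property (2) is false, and no refinement of the encoding can rescue it. The route from $c_A$ to the vertex adjacent to the image of $B$ runs along $P_A$, and its edges need not be out-edges.

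Concretely, take $k=2$ and let $A$ be a spider with centre $c_A$ and legs $c_A a_\ell p_\ell$ for $\ell=1,\dots,m$. Let $p_\ell$ be the pivot of a distinct district $B_\ell=\{p_\ell,b_\ell\}$ centred at $b_\ell$. Since $c_A$ has out-degree at most $d$ in every orientation, all but $d$ of the edges $c_Aa_\ell$ point into $c_A$, so the moves $c_A\to a_\ell$ are not forward moves. Meanwhile, after $b_\ell$ is contracted into $p_\ell$, the final edges between $a_\ell$ and the image of $B_\ell$ may all be oriented out of $a_\ell$ (this gives out-degree $2$, which is allowed). On the positive-probability runs where the $a_\ell$ stay, your rule charges $(A,B_\ell,j_\ell)$ with pairwise distinct $j_\ell\in B_\ell$, so $|J_A|\ge m-d=\Omega(n)$.

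Several other points in the proposal are also inconsistent:
\begin{itemize}
\item Step 2's premise, that the image of $P_{A,B}$ collapses to the images of $c_A$ and $c_B$, cannot hold when $|P_A|\ge 2$ and $P_A$ is untouched.
\item Defining $j$ by ``following chosen out-pointers from $c_A$'' is not well defined, because pointers on the path may point back toward $c_A$.
\item Step 1 as written does not shorten $P_B$ at all when every edge of $P_B$ points away from $c_A$.
\end{itemize}

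The missing idea is what the paper does. It contracts along the entire path, on both sides, subject to three constraints:
\begin{itemize}
\item $c_A$ and $c_B$ are never contracted (others may merge into them);
\item $i_{A,B}$ is not contracted before round $k$;
\item every other path vertex either stays or picks a path neighbour.
\end{itemize}
Each side then shrinks to a \emph{directed-minimal} shape: a single edge, or $c_A\to u\leftarrow i_{A,B}$. The second shape cannot be shortened in that round without moving $c_A$ or $i_{A,B}$, which is why full collapse is not the goal. In this shape, the path vertex $j''$ preceding the $B$-part is reached from $c_A$ by a forward trajectory (out-edges with non-decreasing round index), and there are at most $d^k\binom{2k}{k}$ such endpoints.

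In round $k$ one takes $j'\in A$ and $j\in B$ to be the surviving path vertices nearest $i_{A,B}$. One charges $(A,B,j)$ if the last edge is oriented $j'\to j$ in $\vec G_k$, and $(B,A,j')$ otherwise. The out-degree bound in $\vec G_k$ gives the extra factor $d$ in (2). Your fallback of ``also contracting $P_A$'' points in this direction, but the constraints on $c_A$ and $i_{A,B}$, the directed-minimal endpoint, the forward-trajectory condition, and the role swap are the substance of the argument, and the proposal does not supply them.
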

Before proving \Cref{lemma:randomized-charging-scheme}, we first show that \Cref{lemma:randomized-charging-scheme} implies the correctness of \Cref{thm:main}, the main technical lemma of this section.

\begin{proof}[Proof of \Cref{thm:main}]
By item (3) in \Cref{lemma:randomized-charging-scheme}, we obtain the expected bound:
\[
\operatorname*{\mathbf{E}}_{\mathcal{R}}\left[\sum_{(A, B, j)\in \mathcal{O}} x_Ax_B\right] \ge \frac{1}{(d+1)^{k^2}} \sum_{A\cap B\neq \emptyset} x_Ax_B .
\]
On the other hand, by items (1) and (2) we have \[
\sum_{(A, B, j)\in \mathcal{O}} x_Ax_B = 
\sum_{A\in\mathcal{S}} x_A\sum_{j \in J_A} \left(\sum_{B:\ (A, B, j)\in\mathcal{O}} x_B \right) \le 
d^{k+1} {\textstyle \binom{2k}{k}} \sum_{S\in \mathcal{S}} x_S.
\]
By the probabilistic method argument, there must exist a set of intersecting pairs $\mathcal{O}'$ that satisfies both inequalities,
which completes the proof of \Cref{thm:main} with $q(h, k) = (d+1)^{k^2}d^{k+1}\binom{2k}{k} = h^{O(k^2)} $.
\end{proof}

The rest of the section devotes to proving  \Cref{lemma:randomized-charging-scheme}. We first describe the randomized algorithm $\mathcal{R}$ below. 

\paragraph{The Algorithm $\mathcal{R}$.} 
The algorithm consists of two parts: in the first part, the algorithm randomly produces a sequence of minors of $G$: $G_0=G, G_1, \ldots, G_k$ via a sequence of random contractions, where $k$ is the radius/diameter constraint.
In the second part, the algorithm considers each intersecting district pair $(A, B)$, obtains some vertex $j\in V$, and decides if the tuple $(A, B, j)$ will be added to the output list and counted toward the correlation sum.
Note that due to asymmetry, $(A, B)$ and $(B, A)$ are considered separately.

\paragraph{Part 1:}
The algorithm initializes a copy $G_0\gets G$ of the input graph.
Run the following procedure in $k$ rounds.
In each round $r=1, 2, \ldots, k$, the algorithm computes an arbitrary orientation $\vec{G}_{r-1}$ from $G_{r-1}$, where each vertex has out-degree at most $d$.
The algorithm then assigns a randomly chosen neighbor (or not choosing at all) for each vertex, denoted as the variable $\textit{pick}_r[v] \in V\cup \{\bot\}$. Specifically, for each vertex $v\in V$, each out-going neighbor $v'\in out(v)$ will be assigned to $\textit{pick}_r[v]$ with probability $\frac{1}{d+1}$, and for the remaining probability $1-|out(v)|/(d+1)$, the algorithm chooses nothing (sets $\textit{pick}_r[v] \leftarrow \bot$) for $v$.

Finally, at the end of each round, the algorithm forms the (undirected) contracted graph $G_r$ by contracting $v$ toward $\textit{pick}_r[v]$ all at once. The contraction naturally induces a partition of $V(G_{r-1})$, where each part is contracted to a single vertex and there is at most one vertex $v$ in the part with $\textit{pick}_r[v]=\bot$. If such vertex $v$ exists, we name the contracted vertex in $G_r$ to be $v$. Otherwise, we do not need this vertex anymore so they can be completely removed from $G_r$. We emphasize that this naming convention is important to our analysis.

\paragraph{Part 2:} 
The algorithm computes an orientation $\vec{G}_k$ from the latest contracted graph $G_k$.
For each pair of intersecting district $(A, B)$ with $i_{A, B}$ different from $c_A$ and $c_B$,
the algorithm checks if the following event $E_{A, B}$ holds:
\begin{enumerate}[itemsep=0pt]
    \item The vertices $c_A, c_B$ are never contracted. That is, $\textit{pick}_r[c_A]=\textit{pick}_r[c_B]=\bot$ for all $r$.
    \item $\textit{pick}_r[i_{A, B}]=\bot$ for all $r=1, 2, \ldots, k-1$. Note that we do not restrict $\textit{pick}_k[i_{A, B}]$.
    \item For every other vertex $v$ on $P_{A, B}$ except $c_A, c_B, i_{A, B}$, whenever $\textit{pick}_r[v]\neq \bot$, we must have $\textit{pick}_r[v]\in P_{A, B}$. In other words, if a vertex on $P_{A, B}$ participates in a contraction, the vertex contracts ``along the path'' and shortens the path.
    \item There are three possible values for $\textit{pick}_k[i_{A, B}]$ (contract to the left, contract to the right, or stay.)
    Define $j'$ and $j$ to be the two vertices on $P_{A, B}$ closest to $i_{A, B}$ that are not contracted in the end, such that $j'\in A$ and $j\in B$. In particular, if $\textit{pick}_k[i_{A, B}]=\bot$ we let $j'=j=i_{A, B}$.

    The event $E_{A, B}$ requires that, if $j'\neq j$, then on the last orientation $\vec{G}_k$ there must be a directed edge from $j'$ to $j$. (That is, swap the role of $A$ and $B$ when needed.)
    \item Let $P'_{A, B}$ be the contracted path after the last round. Note that $j\in P'_{A, B}$ and $j\neq c_A$.
    Let $j''$ be the neighbor of $j$ on $P'_{A, B}$ toward $c_A$.
    Let us define the term \emph{forward trajectory} on the union of oriented graphs $G^\star := \vec{G}_0 \cup \vec{G}_1 \cup \cdots \cup \vec{G}_{k-1}$, where each directed edge is associated with the round index. A forward trajectory is then a path $P^\star$ on $G^\star$ such that the round indices is non-decreasing.

    The event $E_{A, B}$ requires
    the existence of a \emph{forward trajectory} $P_A^\star$ from $c_A$ to $j''$ with $V(P_A^\star)\subseteq V(P_{A, B})$.
\end{enumerate}
If $E_{A, B}$ holds, the algorithm adds the tuple $(A, B, j)$ to the output list $\mathcal{O}$.

\paragraph{Analysis.} We verify each item in the statement of \Cref{lemma:randomized-charging-scheme} as follows.

\begin{lemma}\label{lem:chk1}
    If $(A, B, j)\in\mathcal{O}$, then $j\in B$.
\end{lemma}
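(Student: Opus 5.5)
The plan is to unwind the definition of $j$ in item 4 of the event $E_{A,B}$ and check that it lies on the $B$-side of the path. Since $(A,B,j)$ enters $\mathcal{O}$ only when $E_{A,B}$ holds, $j$ is by definition a vertex of $P_{A,B}$ that is not contracted by the end. More precisely, it is the vertex closest to $i_{A,B}$ on the side toward $c_B$, including $i_{A,B}$ itself.

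The first step is the trivial subcase. If $\textit{pick}_k[i_{A,B}]=\bot$, then $j=i_{A,B}\in A\cap B\subseteq B$, and we are done.

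Otherwise $i_{A,B}$ is contracted in round $k$. Because of the naming convention, a surviving vertex of $G_k$ is an original vertex $v$ with $\textit{pick}_r[v]=\bot$ for all $r$. By item 3 of $E_{A,B}$, every non-special vertex of $P_{A,B}$ that contracts does so along the path. By item 1, $c_A$ and $c_B$ survive. So the path contracts to $P'_{A,B}$, whose vertices are original vertices of $P_{A,B}$ in the same linear order, and this includes $c_A$ and $c_B$.

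The key step is to show that $j$ is a surviving vertex lying on the subpath from $i_{A,B}$ to $c_B$. The point to emphasize is that $j$ is defined as the nearest surviving vertex in that direction. Such a vertex exists because $c_B$ survives. It lies in the closed segment $P_{A,B}[i_{A,B},c_B]$, and by condition 4 of the definition of $P_{A,B}$, all of that segment is contained in $B$. Hence $j\in B$.

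The only subtlety, and the step I would be most careful with, is the "swap the role of $A$ and $B$" clause in item 4. If the orientation requires an edge $j'\to j$, the tuple is recorded with the roles matched so that the second district is the one on whose side $j$ lies. I would state explicitly that when the orientation forces the swap, the tuple added is $(B,A,j')$, with $j'$ on the $A$-side segment. Under that reading the same argument gives $j'\in A$. Thus in every recorded tuple the third coordinate lies in the second district, and the lemma follows.
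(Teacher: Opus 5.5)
Your proposal is correct and follows the paper's argument: the paper's proof is the one-line observation that Item 4 of $E_{A,B}$ defines $j$ as the nearest surviving vertex of $P_{A,B}$ on the $B$-side of $i_{A,B}$ (with $j=i_{A,B}$ when $i_{A,B}$ stays), so $j\in B$. You make this explicit by citing condition 4 of the definition of $P_{A,B}$ (the segment from $i_{A,B}$ to $c_B$ lies in $B$), and your treatment of the swap clause — the tuple is recorded as $(A,B,j)$ under $E_{A,B}$ or as $(B,A,j')$ under $E_{B,A}$ — matches the paper's separate handling of the two orders.
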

\begin{proof}
This is guaranteed by Item 4 of the event $E_{A, B}$.
\end{proof}

\begin{lemma}\label{lem:chk2}
    For each district $A\in \mathcal{S}$ the set $J_A$ contains at most $d^{k+1}{\binom{2k}{k}}$ distinct vertices.
\end{lemma}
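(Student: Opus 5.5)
We fix $A$ and bound the number of vertices $j$ for which some tuple $(A,B,j)$ can enter $\mathcal{O}$, by encoding each such $j$ through a bounded amount of information starting from $c_A$. By Item 5 of $E_{A,B}$, there is a forward trajectory $P_A^\star$ in $G^\star=\vec G_0\cup\cdots\cup\vec G_{k-1}$ from $c_A$ to $j''$. Here $j''$ is the neighbor of $j$ on the contracted path $P'_{A,B}$ in $G_k$, on the side toward $c_A$. The plan is to show that $j''$ lies in a set of at most $d^{k}\binom{2k}{k}$ vertices determined by $c_A$ and the random orientations alone. We then show that, given $j''$, there are at most $d$ choices for $j$ (or $j=j''$ is forced in degenerate cases). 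Multiplying gives $d^{k+1}\binom{2k}{k}$.

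\textbf{Step 1: bound the length of the trajectory.} The trajectory uses only vertices of $P_{A,B}$ lying between $c_A$ and $j''$, so it is contained in the $c_A$-to-$i_{A,B}$ subpath, which has length at most $k$. Hence $P_A^\star$ has at most $k$ edges. Each edge carries a round index in $\{1,\dots,k\}$, and these indices are non-decreasing along the trajectory. A non-decreasing sequence of at most $k$ values from $[k]$ can be described as a monotone lattice path, so there are at most $\binom{2k}{k}$ index patterns. Padding with a ``stop'' symbol if the length is smaller than $k$ keeps this within $\binom{2k}{k}$ up to the convention chosen.

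\textbf{Step 2: count endpoints given the pattern.} At each step, the current vertex follows an out-edge of the orientation $\vec G_{r-1}$ for the prescribed round $r$. Every vertex has out-degree at most $d$ in each $\vec G_{r-1}$, so each step has at most $d$ choices. The number of endpoints $j''$ reachable from $c_A$ is therefore at most $d^{k}\binom{2k}{k}$.

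The subtle point here, which I expect to be the main obstacle, is that edges of $\vec G_{r-1}$ live on contracted graphs. I need the naming convention (a contracted part keeps the name of its unique $\bot$ vertex) to guarantee two things. First, the trajectory is a genuine walk on vertex names of $V$. Second, being ``along $P_{A,B}$'' (Item 3) makes the consecutive surviving path vertices adjacent in the relevant round's graph. I would argue this by induction on $r$: after round $r$, consecutive surviving vertices of the contracted path are adjacent in $G_r$. Each such adjacency arises from an edge present in some earlier $\vec G_{r'}$ whose orientation points forward toward $j''$, and this is exactly what Item 5 certifies. So only the counting of out-neighbors is needed, and the monotonicity of the round indices is what keeps the count at $\binom{2k}{k}$ instead of $k^k$.

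\textbf{Step 3: recover $j$ from $j''$.} The vertices $j''$ and $j$ are adjacent in $G_k$ along $P'_{A,B}$. We need $j$ to be an out-neighbor of $j''$ in $\vec G_k$, which gives at most $d$ choices. If $j'=j$, then $j''$ is the previous surviving vertex, and the edge between them may need its orientation fixed by a convention. If $j'\ne j$, then Item 4 forces a directed edge $j'\to j$ with $j''=j'$, so $j$ is among the at most $d$ out-neighbors of $j''$. I would handle the case $j'=j=i_{A,B}$ by enlarging the trajectory encoding by one step, or by absorbing it into the factor $d^{k+1}$ using the slack in the bound. Summing over all patterns then gives $|J_A|\le d^{k+1}\binom{2k}{k}$.
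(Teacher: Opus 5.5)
Your Steps 1--2 and the $j'\neq j$ half of Step 3 are the paper's argument. You encode the Item~5 trajectory by its non-decreasing round indices (at most $\binom{2k}{k}$ patterns) and its out-neighbor choices (at most $d^k$). You then pay a factor $d$ for the Item~4 edge $j'\to j$ in $\vec G_k$, using $j''=j'$ when $j'\neq j$.

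\textbf{The gap is the case $j'=j=i_{A,B}$.} You were right to flag it, but neither of your patches closes it. In that case Item~4 is vacuous, and nothing in $E_{A,B}$ orients the edge $j''j$ of $G_k$ or forces a forward edge $j''\to j$ in any earlier round. So $j$ is merely some neighbor of $j''$ in $G_k$. It may be an in-neighbor, and in-degrees are unbounded. Extending the trajectory by one step needs exactly the forward edge you do not have, and constant slack cannot absorb an unbounded number of candidates.

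In fact, with the event as written the bound fails. Take $k=1$ and let $A$ be a star with center $c$ and leaves $\ell_1,\dots,\ell_m$. Attach a pendant vertex $b_i$ to each $\ell_i$, let $B_i=\{\ell_i,b_i\}$ with $c_{B_i}=b_i$, and orient $\ell_i\to c$, $b_i\to\ell_i$ (this is the greedy degeneracy orientation). Then $\textit{pick}_1[c]=\bot$ always. Whenever $\textit{pick}_1[\ell_i]=\textit{pick}_1[b_i]=\bot$, which happens with probability $(d/(d+1))^2$ independently over $i$, the event $E_{A,B_i}$ holds with $j'=j=\ell_i$, $j''=c_A$ and the trivial trajectory. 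So $J_A$ contains about $m\,(d/(d+1))^2$ leaves in expectation.

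The paper's one-paragraph proof has the same blind spot. It bounds the number of trajectory endpoints and calls them $j'$, which is correct only when $j'\neq j$. So this case cannot be rescued by counting; the event itself has to change. One natural repair is to require in Item~5 a forward trajectory from $c_A$ to $j'$ rather than to $j''$; the two coincide when $j'\neq j$. Your argument then gives at most $d^k\binom{2k}{k}$ choices of $j'$ and $j\in\{j'\}\cup N^+_{\vec G_k}(j')$, so $|J_A|\le (d+1)d^k\binom{2k}{k}$. Lemma~\ref{lem:chk3} would then have to be re-verified for the modified event.

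A smaller point concerns Step 1. Item~5 only gives $V(P^\star_A)\subseteq V(P_{A,B})$, so your claim that the trajectory stays on the $c_A$ side, and hence has at most $k$ edges, is an extra assumption. The paper makes the same assumption tacitly. Without it you only get at most $2k$ edges, which changes the constant but keeps it a function of $d$ and $k$ alone.
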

\begin{proof}
Each forward trajectory can be uniquely encoded with a sequence of ``next round'' or ``go to the $i$-th neighbor'', where the ``next round'' appears exactly $k$ times and the ``go to'' appears for at most $k$ times (this is because the radius of $A$ is at most $k$).
So there are at most $d^{k}\binom{2k}{k}$ vertices of $j' \in A$.

By Item 4 of the event $E_{A, B}$, there is a directed edge going from $j'$ to $j$ in the last oriented graph $\vec{G}_k$. This bounds the size of $J_A$ to be at most $d^{k+1}\binom{2k}{k}$.
\end{proof}

\begin{lemma}\label{lem:chk3}
    The probability of an intersecting pair $(A, B)$ appearing in the output list (in either the form $(A, B, j)$ or $(B, A, j')$) is at least $1/(d+1)^{k^2}$.
\end{lemma}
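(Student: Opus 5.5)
We will exhibit, for each intersecting pair $(A,B)$ with $c_A,c_B,i_{A,B}$ distinct, one explicit assignment of the random choices $\textit{pick}_r[\cdot]$ restricted to the vertices of $P_{A,B}$, over all $k$ rounds, under which $E_{A,B}$ or $E_{B,A}$ holds. We then lower bound the probability of this assignment. Each vertex's choice in each round is independent and takes any fixed value (a specific out-neighbor, or $\bot$) with probability at least $1/(d+1)$. The value $\bot$ has probability $1-|out(v)|/(d+1)\ge 1/(d+1)$, since out-degrees are at most $d$. So a prescribed pattern on $m$ (vertex, round) slots has probability at least $(d+1)^{-m}$. The goal is therefore to fix at most $k^2$ slots. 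The path $P_{A,B}$ has at most $2k+1$ vertices, and the centers and pivot are forced to $\bot$ in most rounds. So the real work is choosing which slots to fix and showing that the others do not matter.

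The concrete pattern is as follows. Set $\textit{pick}_r[v]=\bot$ for every vertex $v$ of $P_{A,B}$ and every round $r$, so that nothing on the path is contracted. The one exception is that $\textit{pick}_k[i_{A,B}]$ is also set to $\bot$. This gives Items 1–3 trivially, with Item 3 vacuous. It also gives $j=j'=i_{A,B}$, which makes Item 4 vacuous. Vertices off the path may contract into path vertices, but by the naming convention a path vertex with $\textit{pick}=\bot$ keeps its name. The path therefore survives in every $G_r$, so $P'_{A,B}=P_{A,B}$.

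The trouble is Item 5. It needs a forward trajectory from $c_A$ to $j''$, the predecessor of $i_{A,B}$, using only path vertices and edges of $\vec G_0,\dots,\vec G_{k-1}$ with non-decreasing rounds. Orientations are arbitrary, so the subpath from $c_A$ to $j''$ may be oriented the wrong way. This is why the scheme uses contractions. I would not fix everything to $\bot$. Instead, I would process the subpath $c_A=u_0,u_1,\dots,u_\ell=j''$ with $\ell\le k-1$ edge by edge, walking from $c_A$ toward $j''$ across rounds $r=1,\dots,k-1$. Suppose that in round $r$ the edge $u_tu_{t+1}$ is oriented forward in $\vec G_{r-1}$. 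Then the trajectory can use it, so fix $\textit{pick}=\bot$ for the interior vertices involved. If instead it is oriented backward, $u_{t+1}\to u_t$, then fix $\textit{pick}_r[u_{t+1}]=u_t$, which contracts $u_{t+1}$ into $u_t$ and shortens the path. The next edge is then examined in the next round's orientation, where it is again either traversable forward or contractible.

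The invariant for the induction on rounds is: after round $r$, the current contracted path from $c_A$ has a forward trajectory to the current frontier vertex, and the remaining unprocessed length has decreased. Since the subpath has length at most $k-1$ and we have $k-1$ rounds, the frontier reaches $j''$ in time. Each round fixes at most $O(k)$ slots, about one per path vertex, for at most $k-1+1$ rounds plus the pivot and center constraints. This totals at most $k^2$ fixed slots, giving probability at least $(d+1)^{-k^2}$. The $B$ side is handled symmetrically, or simply fixed to $\bot$.

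The main obstacle is making the invariant precise when forward and backward edges interleave. A forward trajectory permits several same-round hops, so several consecutive forward edges can be consumed in one round. Backward edges, however, need contraction, which changes the graph and orientation for later rounds. I would first try the rule of "consume all forward edges in round $r$, then contract the first backward edge". I would then verify that the slot count stays within $k^2$ by charging each fixed slot to a (round, path-position) pair.
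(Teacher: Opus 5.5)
The step that fails is the slot count. Your scheme fixes $\textit{pick}_r[v]$ for every surviving vertex $v$ of the contracted path in every round $r=1,\dots,k$, and $P_{A,B}$ may have $2k+1$ vertices. Contractions occur only at backward edges met by the $A$-side frontier. So if, for example, the $A$-side is oriented away from $c_A$ in $\vec G_0$, nothing is ever contracted and you fix $k(2k+1)$ slots, each of probability exactly $1/(d+1)$ when out-degrees equal $d$. Your argument therefore yields $(d+1)^{-k(2k+1)}$, not $(d+1)^{-k^2}$; ``$O(k)$ slots per round for $k$ rounds'' does not give at most $k^2$.

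Do not try to squeeze this down to exactly $k^2$, though. For $k=1$ the path is $c_A,i_{A,B},c_B$, and both $E_{A,B}$ and $E_{B,A}$ require $\textit{pick}_1[c_A]=\textit{pick}_1[c_B]=\bot$. That has probability $(d+1)^{-2}$ when both centers have out-degree $d$, so the exponent $k^2$ in the statement is not attainable for small $k$. The paper's sketch does not attain it either. Its per-round factor $(d+1)^{-k}$ ignores that Items 1--3 constrain up to $2k+1$ path vertices, and its total exponent $k(k-1)+5$ exceeds $k^2$ for $k<5$. What is true, and all that \Cref{thm:main} needs for $q(h,k)=h^{O(k^2)}$, is a bound of the form $(d+1)^{-O(k^2)}$; state your conclusion that way. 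Also, the vertex you contract in round $r$ depends on $\vec G_{r-1}$, hence on earlier randomness including off-path picks. So justify the product by conditioning on the history through round $r-1$: the round-$r$ picks are fresh and independent.

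Apart from the count, your construction is sound and genuinely different from the paper's. The frontier rule you were unsure about works. Let $m$ be the length of the current contracted subpath from the frontier to $i_{A,B}$. In round $r$, either forward edges of $\vec G_{r-1}$ carry the frontier to the predecessor of $i_{A,B}$, or there is a first backward edge $u_{t+1}\to u_t$ before that point with $u_{t+1}\neq i_{A,B}$. In the second case, contracting $u_{t+1}$ into the new frontier $u_t$ (which keeps its name) lowers $m$ by at least one. Starting from $m\le k$, you are done within $k-1$ rounds. Then $j=j'=i_{A,B}$ makes Item 4 vacuous, and Item 5 holds by the invariant.

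The paper instead shrinks both subpaths until the oriented path in $\vec G_{k-1}$ is one of nine directed-minimal shapes. Each side then has length at most two, so the trajectory in Item 5 is trivial or a single edge of $\vec G_{k-1}$. In round $k$ it lets $i_{A,B}$ stay or contract left or right, and uses $\vec G_k$ to choose between $E_{A,B}$ and $E_{B,A}$. Your one-sided route avoids that case analysis and never consults $\vec G_k$. The price is that it always lands in the branch $j=j'$. That is legitimate for the lemma as stated. However, it is exactly the branch in which the argument of \Cref{lem:chk2}, which bounds $j$ through the edge $j'\to j$ in $\vec G_k$, gives no control. So your scenario does not produce the structure the paper's last round is built to supply.
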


\begin{proof}
For any $r$, $0\le r\le k$, let $P^{(r)}_{A, B}$ be the contracted path of $P_{A, B}$ in $G_r$.
Let $\vec{P}^{(r)}_{A, B}$ be the associated sequence of directed edges on $\vec{G}_r$ after the orientation.

It suffices to observe that, for rounds $r=1,2, \ldots, k-1$, if the oriented sequence of edges $P^{(r-1)}_{A, B}$ do not belong to one of the following 9 cases, with probability at least $1/(d+1)^k$, the path between $c_A$ and $i_{A, B}$ and the path between $i_{A, B}$ and $c_B$ will become shorter.
These 9 cases are:
\begin{itemize}[itemsep=0pt]
    \item $c_A \rightarrow u \leftarrow i_{A, B} \rightarrow v \leftarrow c_B$
    \item $c_A \rightarrow u \leftarrow i_{A, B} \leftrightarrow  c_B$ (2 cases)
    \item $c_A \leftrightarrow i_{A, B} \rightarrow v \leftarrow c_B$ (2 cases)
    \item $c_A \leftrightarrow i_{A, B} \leftrightarrow c_B$ (4 cases)
\end{itemize} 
By the observation, after $k-1$ rounds, the oriented sequence of $P^{(k-1)}_{A, B}$ on $\vec{G}_{k-1}$ must be in one of the 9 cases stated above (we call them \emph{directed-minimal} paths).
Now, consider the last round $k$. For each of the 9 cases, there is at least one scenario (i.e., with probability at least $1/(d+1)^5$) such that $E_{A, B}$ or $E_{B, A}$ holds, which completes the proof.
\end{proof}


\begin{proof}[Proof of \Cref{lemma:randomized-charging-scheme}]
It follows from \Cref{lem:chk1}, \Cref{lem:chk2}, and \Cref{lem:chk3}.
\end{proof}

\subsection{Bounded Expansion Graphs}\label{subsec:expansion}

While the correlation ratio is constant on $H$-minor-free graphs (\Cref{thm:main}), it has an $\Omega(\sqrt{n})$ lower bound on general graphs~\cite{Dharangutte2025-ry}. This dichotomy naturally raises the question of how broadly the constant bound persists, and whether one can obtain a smooth tradeoff between these two regimes. In this subsection, we take a step in this direction by considering bounded expansion graphs~\cite{Nesetril2008-af,Nesetril2008-vp,Nesetril2008-lt} (refer \Cref{def:bounded_exp_graph}), a class that formalizes the notion of \emph{everywhere sparsity}.


To better demonstrate our results, we first introduce a recursively defined sequence $\{d^f_r\}_{r=0}^\infty$.

\begin{definition}\label{def:dr}
    Let $f: \mathbb{N}\to \mathbb{R}_{\ge 1}$ be any increasing function.
    Define $d^f_0 := f(0)$.
    For each integer $r\ge 1$, define 
    \[
    d^f_r := f\left(2^r\cdot d^f_0d^f_1\cdots d^f_{r-1}\right).
    \]
\end{definition}
If the context is clear we will omit $f$ and use $d_r$ to denote $d^f_r$.
We remark that when $f$ is a polynomial, $d^f_r$ is doubly exponential in $r$.

\begin{theorem}\label{thm:bdd-expansion}
Let $G$ be an $f$-bounded expansion graph. Let $\mathcal{S}$ be a collection of districts with a strong radius bound $k$. Let $\{x_S\}$ be any feasible fractional solution for packing $\mathcal{S}$.
Then, we have:
\[
\sum_{A\cap B\neq\emptyset} x_Ax_B \le \left(4^k d_k^{2k^2 + k}\right)\cdot{\textstyle\binom{2k}{k}} \cdot \sum_{S\in\mathcal{S}} x_S\ .
\]
\end{theorem}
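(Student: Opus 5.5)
The plan is to rerun the argument for \Cref{thm:main} verbatim: the easy-case bound, then the randomized charging scheme of \Cref{lemma:randomized-charging-scheme}, combined by the probabilistic method. The only place minor-freeness entered was the degeneracy bound $d$, which was used in every round $r$ on the contracted graph $G_{r-1}$. Minors of an $H$-minor-free graph are again $H$-minor-free, so there one degeneracy bound sufficed for all rounds. In a bounded expansion graph this is false: contracting can raise the density, so I will replace the single $d$ by the round-dependent bound $d_{r-1}$ on the degeneracy of $G_{r-1}$.

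The key step is an invariant for round $r$. Each vertex of $G_r$ is the contraction of a connected set of original vertices, and I claim this set has diameter (in $G$) at most $2^r d_0 d_1\cdots d_{r-1}$. Given this, \Cref{def:bounded_exp_graph} with diameter parameter $2^r d_0\cdots d_{r-1}$ bounds the density of every subgraph of $G_r$ by $f(2^r d_0\cdots d_{r-1}) = d_r$. This is exactly \Cref{def:dr}, and it gives an orientation of $G_r$ with out-degree at most $d_r$.

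I will prove the diameter invariant by induction on $r$. In round $r$, each part of the partition of $V(G_{r-1})$ is a rooted in-tree under the $\textit{pick}_r$ pointers, since the pointers follow out-edges of an acyclic degeneracy orientation. Its depth, however, is not obviously bounded. I would fix this as suggested by the phrase ``slightly modified contraction'': only allow contractions of depth one, i.e.\ forbid picking a vertex $v'$ that itself picks. This can be done by first flipping a coin for each vertex to be a ``receiver'' or ``sender''. The cost is another constant factor (the $4^k$, say $2$ per endpoint per round) in the success probability of \Cref{lem:chk3}, since the path vertices need the right roles. With stars of depth one, a part consists of a center and contracted neighbours, each of diameter at most $D_{r-1}$. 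Bounding paths through the part by the number of subparts, which is at most the in-degree, needs care because in-degree is not bounded by $d$. So the cleaner alternative is to bound the weak radius around the named representative: $R_r\le 2R_{r-1}+1$. That gives diameter roughly $2^r$ times earlier factors, which matches the shape of \Cref{def:dr} and which I would verify carefully. This invariant is the main obstacle.

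With the invariant, I redo \Cref{lem:chk2} and \Cref{lem:chk3} with $d$ replaced by $d_k\ge d_r$ (monotone since $f$ is increasing). The count of forward trajectories becomes at most $d_k^{k}\binom{2k}{k}$, times $d_k$ for the final edge $j'\to j$. The success probability becomes at least $1/(4^k (d_k+1)^{k^2})$, or with my crude bounds $\ge 1/(4^k d_k^{2k^2})$. The probabilistic-method step from the proof of \Cref{thm:main} then yields the constant $4^k d_k^{2k^2+k}\binom{2k}{k}$, after absorbing $(d_k+1)\le d_k^2$ and similar slack. The easy case contributes only $\sum_S x_S$, which this constant absorbs.
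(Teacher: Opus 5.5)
Your overall architecture matches the paper's: rerun the charging scheme with the round-dependent bound $d_{r-1}$ in place of $d$, and restrict the random picks so that contracted parts have bounded diameter, at a constant-factor cost in success probability. You differ in how the picks are restricted.

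The paper takes the $(d_{r-1}+1)$-coloring produced by the greedy degeneracy ordering and randomly permutes the color classes. It then lets $v$ pick $u$ only if $c(u)<c(v)$. Pick chains therefore strictly decrease in color, so each part is an in-tree of depth at most $d_{r-1}$. Part diameters grow by a factor of roughly $2d_{r-1}$ per round, which is where the nested recursion $d_r=f(2^r d_0\cdots d_{r-1})$ comes from. Each required contraction $x\to y$ costs only $\Pr[c(y)<c(x)]=1/2$.

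Your random sender/receiver split instead makes every part a depth-one star. The part diameter then satisfies $D_r\le 3D_{r-1}+2$, i.e.\ $D_r\le 3^r-1$, independently of the degeneracies. So the degeneracy of $G_r$ is at most $f(3^r-1)$, which is at most $d_r$: if $f(0)\ge 2$ then $2^r d_0\cdots d_{r-1}\ge 4^r$, and if $f(0)<2$ then $G$ and all its minors are forests.

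Your route is therefore valid, and it has a much better dependence on $f$: for polynomial $f$ it is singly rather than doubly exponential in $k$. The price is $1/4$ per required contraction, since both the sender and the receiver role must come up. With one contraction needed on each half of $P_{A,B}$ per round, that gives $16^{-k}$, not the $4^{-k}$ you claim. Together with the extra factor $d_k$ you keep for the last edge $j'\to j$, your bookkeeping does not literally reproduce the stated constant. This is a discrepancy in constants only.

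Two corrections to your diameter invariant:
\begin{itemize}
\item The recursion is $R_r\le 3R_{r-1}+1$, not $2R_{r-1}+1$. From the contact point of a leaf subpart, the rest of that subpart can be $2R_{r-1}$ away.
\item You should bound the radius inside the induced subgraph on each part, not the weak radius. The bounded-expansion definition is stated for diameters of the induced subgraphs $G[V_i]$. The star argument gives exactly this bound.
\end{itemize}
Your worry about unbounded in-degrees is moot: any path inside a star crosses at most three subparts.
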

It is straightforward to check that (from \Cref{def:dr}), whenever $k$ is a constant, despite being enormous, the approximation ratio is still  a constant.

Algorithm $\mathcal{R}$ in~\cref{lemma:randomized-charging-scheme} does not directly imply \Cref{thm:bdd-expansion}, because in each contraction which produces $G_r$ from $G_{r-1}$, the diameter of each contracted vertex subset is not bounded.
Fortunately, with a slight tweak to algorithm $\mathcal{R}$, we are able to guarantee the diameter on each contraction, thereby proving \Cref{thm:bdd-expansion}.

\begin{proof}[Proof of \Cref{thm:bdd-expansion}]

We first describe the modified algorithm $\mathcal{R}'$, and then provide the modified analysis that is analogous to \Cref{lem:chk2} and \Cref{lem:chk3}.

\medskip 
\noindent\textbf{The Modified Algorithm $\mathcal{R}'$.} 
In each round $r$ of contraction, the algorithm first computes an \emph{acyclic} orientation $\vec{G}_{r-1}$ via the greedy ordering algorithm\footnote{The algorithm repeatedly identifies a vertex with the minimum current degree, orients all incident edges as outgoing edges, and then remove this vertex and all incident edges until the graph become empty.}  for computing the degeneracy $d_{r-1}$. This also computes a $(d_{r-1}+1)$-vertex-coloring for $G$ with colors $\{1, 2, \ldots, d_{r-1}+1\}$.
Then, the algorithm computes a random shuffle $\sigma$ to the colors. Each vertex $v$ obtains a color $c(v)\in\{1, 2, \ldots, d_{r-1}+1\}$.

The algorithm then assigns the value $\mathit{pick}_r[v]$ for each vertex $v$ with a slightly modified distribution: for each neighbor $u$ of $v$, if $c(u) < c(v)$, assigns $\mathit{pick}_r[v] = u$ 
with probability $1/(d_{r-1}+1)$. Otherwise, set $\mathit{pick}_r[v]=\bot$.

The rest of the algorithm is the same as algorithm $\mathcal{R}$.

\medskip 
\noindent\textbf{The Analysis.}
To complete the proof of \Cref{thm:bdd-expansion}, it suffices to show two statements, analogous to \Cref{lem:chk2} and \Cref{lem:chk3}.
First, we analyze contracted components' diameter. 
Consider any round $r$ of contraction and any vertex $v$. The sequence formed by successively applying $\mathit{pick}_r[\cdot]$ has length at most $2d_{r-1}$ on $G_{r-1}$. Therefore, by induction, the diameter of each contracted component of $G$ at round $r$ for the shallow minor is at most $f\left( 2^r d_0d_1\cdots d_{r-1}\right)=d_r$.

Let $(A, B)$ be any intersecting pair and $P_{A, B}$ be a connecting path from $c_A$ to $c_B$.
To count the number of forward trajectories, we simply observe that the outdegrees for each vertex in any contracted graphs after all $k$ rounds is $d_k$. Any forward trajectories takes at most $k$ steps hopping forward and $k$ steps advancing to the next round. Thus the size of $J_A$ in the output list can be bounded by $d_k^k\binom{2k}{k}$ as desired.

Suppose the contracted path $P_{A, B}^{(r)}$ at round $r$ has length at least 5. So none of the nine directed-minimal cases occur.
In this case, one of the partial path between $c_A$ and $i_{A, B}$ or between $c_B$ and $i_{A, B}$ must have at least length three.
There must be an edge $x\to y$ in the oriented graph that does not invalidate the event $E_{A, B}$ (or $E_{B, A}$) whenever $\mathit{pick}_r[x]=y$.
Now, the probability that the algorithm assigns $\mathit{pick}_r[x]=y$ is no longer $1/(d_{r-1}+1)$. 
In fact, since shuffling colors on $\vec{G}_{r-1}$ is independent to the initial assignment to $\mathit{pick}_r[x]$, we know that $\Pr(c(y) < c(x))  = 1/2$.
Hence, we conclude that with probability $(A, B)$ appearing in the output list (in either form $(A, B, j)$ or $(B, A, j')$) is at least
\[
\prod_{r=1}^k \left(\frac{1}{2(d_{r-1}+1)^k}\right)^2 = 4^{-k} \cdot \prod_{r=1}^{k} \left(d_{r-1}+1 \right)^{-2k} \ge 4^{-k}d_k^{-2k^2},
\]
and the statement follows.
\end{proof}

\paragraph{Remark.} Although \Cref{thm:bdd-expansion} applies to strong radius districts, it does not immediately yield an efficient algorithm for our packing problem. The main obstacle lies in algorithmically extracting a polynomial-sized support from the linear program. In particular, employing either the multiplicative weight update method or the ellipsoid method necessitates an efficient implementation of the separation oracle. However, designing such an oracle appears to be surprisingly challenging, and we currently do not have a solution, even for everywhere sparse graphs.

On the other hand, when every vertex neighborhood has bounded treewidth (as is the case for \emph{apex-minor-free graphs}) an efficient implementation of the separation oracle becomes possible.
In the next subsection, we present such a separation oracle.

\subsection{Separation Oracle for Districts of Bounded Radius }\label{sec:separation}
Here, we outline how to implement separation oracles to search for a weakly violating connected subgraph. The main idea is to formulate the problem as a \emph{vector-valued connected subgraph sum} problem where each vertex has a vector that captures its objective weight, features, and dual variables.

Given an undirected graph $G = (V,E)$ where each node $v\in V$ has a vector-valued weight $\vw(v)\in \mathbb{Z}_{\ge 0}^d$, the \emph{Connected Subgraph Sum} problem defines the set $W^*$ of all possible total weights achievable by a connected and compact subgraphs of $G$:
\begin{equation}\label{eq:subgraphsum}
    W^* = \left\{\vw(T) = \sum_{v\in T} \vw(v): \text{the induced subgraph } G[T] \text{ is connected and compact}\right\}.
\end{equation}
We define the set of feasible weight vectors $W^*$ only based on the topological constraints (connectivity and compactness). The composition constraints (e.g., balancedness or thresholds) are encoded in the vector values and handled subsequently via the linear score function $\ell$ and the domination idea.

As the set $W^*$ can be exponentially large, explicitly computing and verifying composition constraints in~\cref{eq:subgraphsum} is infeasible.  Instead, we propose the \emph{$(\ell,\epsilon)$-Trimmed Connected Subgraph Sum} problem. This seeks a succinct subset that approximates every feasible weight vector multiplicatively, while maximizing a specific linear score $\ell$, which encodes the necessary information to satisfy the composition requirements.

\begin{mdframed}
\textsc{$(\ell,\epsilon)$-Trimmed Connected Subgraph Sum}\\
\textbf{Input:} A graph $G = (V, E)$ with a vector-valued weight function $\vw:V\to \mathbb{Z}_{\ge 0}^d$, an error parameter $\epsilon\ge 0$, and a linear function $\ell:\mathbb{Z}_{\ge 0}^d\to \mathbb{Z}$.\\
\textbf{Goal:}
Compute a subset $W\subseteq \mathbb{Z}_{\ge 0}^d$ such that $W\subseteq W^*$ and $W$ is an $(\ell, \epsilon)$-trimming of $W^*$. That is, for every $\vz = (z_1,\dots,z_d)\in W^*$, there exists a representative $\vz' = (z_1',\dots,z_d')\in W$ satisfying:
\begin{align}
e^{-\epsilon} &\le \frac{z_i'}{z_i}\le e^\epsilon \quad \text{for all } i = 1,\dots,d, \text{ and}\label{eq:subgraphsum1}\\
\ell(\vz') &\ge \ell(\vz).\label{eq:subgraphsum2}
\end{align}
\end{mdframed}

We say that $\vz'$ \emph{$\epsilon$-approximates} $\vz$ if condition (\ref{eq:subgraphsum1}) holds, and that $\vz'$ \emph{$\ell$-dominates} $\vz$ if condition (\ref{eq:subgraphsum2}) holds.

Crucially, this problem is computationally tractable on graphs of bounded treewidth. The main result of this section is an FPTAS for the trimmed connected subgraph sum problem, which serves as the tool for our separation oracles.
\begin{lemma}\label{lem:allsubsetsum}
    Given $G = (V,E)$, a radius bound $k$, vector-valued weights $\vw:V\to \gridnd$, an error parameter $\epsilon> 0$, and a linear function $\ell$, there is an algorithm for $(\ell, \epsilon)$-trimmed connected subgraph sum with strong radius bound that has running time 
  $$O\left( n^{2+2d} \epsilon^{-2d} (\radiusbd+1)^{2(tw+1)} 2^{O(tw \log tw)} \left(\ln R\right)^{2d} \right),$$
  where $n$ is the number of vertices and $tw$ is the treewidth of $G$, and $R = \max_i\sum_{v\in V} w_i(v)$.
\end{lemma}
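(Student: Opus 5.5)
The plan is to enumerate the center $\gamma\in V$ and run a dynamic program over a nice tree decomposition of width $tw$. The DP builds the trimmed set of weight vectors of connected subgraphs $T\ni\gamma$ with $d_{G[T]}(u,\gamma)\le k$ for all $u\in T$. The union over all $\gamma$ of these sets, re-trimmed once more, is the output $W$. Since every set produced is realized by an actual connected, radius-$k$ subgraph, we get $W\subseteq W^*$ automatically; the work is in the covering property \eqref{eq:subgraphsum1}--\eqref{eq:subgraphsum2}. The center enumeration contributes one factor of $n$. We add $\gamma$ to every bag, which raises the width by one.

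\textbf{DP states.} For each node $t$ of the decomposition, with bag $X_t$ and processed vertex set $V_t$, a partial solution is a set $T_t\subseteq V_t$. It is summarized by four pieces of data:
\begin{enumerate}
\item the set $T_t\cap X_t$;
\item a partition of $T_t\cap X_t$ into the connected components of $G[T_t]$, using the standard connectivity encoding, which costs $2^{O(tw\log tw)}$;
\item a distance label $\lambda(v)\in\{0,\dots,k\}$ for each $v\in T_t\cap X_t$, recording the intended final distance $d_{G[T]}(v,\gamma)$;
\item a flag per bag vertex recording whether its label is already certified, i.e.\ $v=\gamma$ with $\lambda=0$, or $v$ has a neighbor in $T$ with label $\lambda(v)-1$.
\end{enumerate}
The labels give the $(k+1)^{tw+1}$ factor. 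We also enforce $|\lambda(u)-\lambda(v)|\le 1$ across edges inside $T$. Any vertex forgotten from the bag must be certified and must not be an isolated component. With these rules, the labels upper-bound the true distances: a certified label chain descends to $\gamma$. Conversely, the true BFS distances are a consistent labeling. So the recorded sets are exactly those of radius-$k$ connected subgraphs centered at $\gamma$.

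\textbf{Trimming the value sets.} For each state we keep a set of vectors in $\mathbb{Z}^d_{\ge0}$, and we trim it after every introduce, forget and join operation. The trimming uses a multiplicative grid with ratio $e^{\epsilon'}$ in each coordinate, plus a separate zero cell. Within each grid cell we keep only the vector maximizing $\ell$. A cell therefore has $O((\ln R)/\epsilon')$ choices per coordinate, and each state holds $O((\ln R/\epsilon')^d)$ vectors. A join combines two such lists by pairwise sums, costing the square, which matches the $(\ln R)^{2d}\epsilon^{-2d}$ term.

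\textbf{Main obstacle: error accumulation under joins.} When two representatives are summed, each coordinate keeps ratio within $e^{\pm(\epsilon_1\vee\epsilon_2)}$, since a ratio bound survives addition of nonnegative numbers. Also, $\ell$ is linear, so domination is preserved under sums. Each trimming step, however, adds another $\epsilon'$. Along a root-to-leaf path of the decomposition this accumulates to at most (depth)$\cdot\epsilon'$. The naive bound on the depth is $O(n)$. We therefore pick $\epsilon'=\epsilon/(2n)$, which produces the $n^{2d}\epsilon^{-2d}$ factor in the stated bound. Alternatively, one could balance the decomposition to depth $O(\log n)$, but the stated running time suggests the cruder choice.

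There is one subtlety to check carefully. When a cell replaces $\vz$ by the $\ell$-maximizer $\vz'$, the later sum $\vz'+\vy'$ must still both approximate and $\ell$-dominate $\vz+\vy$. This holds by linearity of $\ell$ and coordinatewise nonnegativity. I will verify that this invariant, namely that for every true partial solution some kept vector approximates it within $e^{\text{(depth)}\epsilon'}$ and $\ell$-dominates it, is preserved by each of introduce, forget and join.

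Multiplying the factors together gives the claimed running time:
\begin{itemize}
\item $n$ for the choice of center;
\item $O(n)$ decomposition nodes;
\item $(k+1)^{2(tw+1)}2^{O(tw\log tw)}$ for pairs of states at a join;
\item $(n\ln R/\epsilon)^{2d}$ for the pairwise list sums.
\end{itemize}
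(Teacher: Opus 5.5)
Your proposal is correct and follows the paper's proof essentially step for step: you enumerate the center $\gamma$ and add it to every bag of a nice tree decomposition, use states made of the bag partition, a distance profile in $\{0,\dots,k\}$ and a parent/certification flag (forgetting only certified vertices in non-singleton parts), and trim each entry on a geometric grid of step $\Theta(\epsilon/n)$, keeping the $\ell$-maximizer in each cell. The only differences are cosmetic: you bound accumulated error by decomposition depth via the $\max$ in \Cref{lem:trimming}, whereas the paper charges $\epsilon_t$ in proportion to subtree size; and your joins by plain pairwise sums implicitly rely on the paper's convention that a vertex's weight is added only when it is forgotten, so shared bag weight is never subtracted.
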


Armed with this efficient algorithm, we now demonstrate how the separation oracles for our districting problems can be reduced to instances of the trimmed connected subgraph sum problem.

\begin{lemma}[Separation Oracle for $c$-Balanced Districts]\label{lem:separation_balanced2subgraphsum}
Consider the $c$-balanced districting problem where each vertex $v$ has an objective weight $w(v)$ and feature weights $w_1(v), w_2(v)$. The separation oracle seeks a district $S_{\max}$ maximizing objective weight subject to the balance condition and a violated dual constraint so that 
$$S_{\max} \in \arg\max\left\{w(S'): y(S') < (1-\epsilon)w(S') \text{ and $S'$ is $c$-balanced} \right\}.$$

This can be reduced to solving two instances of the trimmed connected subgraph sum problem. We define the vector-valued weight as $\vw = (w_1, w_2, w, y)$  and use the linear functions:
$$ \ell_1(\vz) = (c-1)z_1-z_2 \quad \text{and} \quad \ell_2(\vz) = (c-1)z_2-z_1 $$By finding an $(\ell_1, \epsilon)$-trimmed and $(\ell_2, \epsilon)$-trimmed subset, we can identify a weakly violating district $S$ that satisfies $y(S) < (1-\frac{\epsilon}{2}) w(S)$, approximates the optimal weight $w(S) \ge \frac{1}{2} w(S_{\max})$, and satisfies the balance condition~\cref{eq:def_balanced}.
\end{lemma}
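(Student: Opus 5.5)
We run the trimming algorithm of \Cref{lem:allsubsetsum} twice on the vector weights $\vw=(w_1,w_2,w,y)$, once with $\ell_1$ and once with $\ell_2$. This gives sets $W_1,W_2\subseteq W^*$. Both are $\epsilon'$-trimmings of $W^*$ for a suitably small $\epsilon'$, say $\epsilon'=\Theta(\epsilon)$ with $e^{2\epsilon'}(1-\epsilon)\le 1-\epsilon/2$. Every vector in $W_1\cup W_2$ is realized by an actual connected compact district, so we can keep a witness set $T$ alongside each vector in the DP. The oracle scans $W_1\cup W_2$ and keeps the vectors $\vz'$ that pass two exact tests: $c$-balance, i.e.\ $\ell_1(\vz')\ge 0$ and $\ell_2(\vz')\ge 0$, and weak violation, $z'_4<(1-\epsilon/2)z'_3$. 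Among these it returns the one with largest $z'_3=w(T)$. If no vector passes, it reports that every valid district satisfies $y(S)\ge(1-\epsilon)w(S)$.

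First I note that $c$-balance is exactly the pair of linear inequalities: $w_1\ge (w_1+w_2)/c$ is equivalent to $(c-1)w_1-w_2\ge 0$, and symmetrically for $w_2$. Now take $\vz=\vw(S_{\max})$. It satisfies $\ell_1(\vz)\ge0$, $\ell_2(\vz)\ge0$ and $z_4<(1-\epsilon)z_3$.

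The main obstacle is that a single trimming preserves only one of the two inequalities exactly. Its representative $\vz'$ in $W_1$ has $\ell_1(\vz')\ge\ell_1(\vz)\ge 0$, but the approximation guarantee alone gives only $\ell_2(\vz')\ge -O(\epsilon')(z_1+z_2)$. My plan is a case split on which side of the balance is tight. Since $c\ge2$, at least one of $\ell_1(\vz),\ell_2(\vz)$ has slack. Concretely, if $z_1\le z_2$ then $\ell_2(\vz)=(c-1)z_2-z_1\ge (c-2)z_2$, and the binding constraint is $\ell_1$. In that case I use the $\ell_1$-trimmed representative. Its $\ell_1$ is preserved exactly, and I need to show $\ell_2(\vz')\ge0$ from the multiplicative approximation: $(c-1)z_2'-z_1'\ge (c-1)e^{-\epsilon'}z_2-e^{\epsilon'}z_1$. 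This is nonnegative when $z_1\le z_2$, provided $c-1\ge e^{2\epsilon'}$. That holds for $c\ge 2+\Theta(\epsilon')$, and the boundary case $c=2$ needs care. For $c=2$ both constraints force $z_1=z_2$. Here I would argue directly with the trimming applied to both functions: choose $\ell$ with a huge coefficient on one coordinate, or run the DP with exact tracking. Alternatively I would accept that the reduction really relies on trimming with respect to both $\ell_1$ and $\ell_2$ simultaneously, by taking the $\ell_1$-trimming of the subfamily with $z_1\le z_2$, which the DP can enforce since that condition is itself a linear constraint. I expect this balance-preservation step to be the delicate part.

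The remaining two properties are routine. For violation, $z'_4\le e^{\epsilon'}z_4<e^{\epsilon'}(1-\epsilon)z_3\le e^{2\epsilon'}(1-\epsilon)z'_3\le(1-\epsilon/2)z'_3$ by the choice of $\epsilon'$, so the representative passes the test. The case $z_4=0$ is trivial because then $z_4'=0$. For weight, $z'_3\ge e^{-\epsilon'}z_3\ge z_3/2$, so the returned district has $w(S)\ge\frac12 w(S_{\max})$. Contrapositively, if nothing passes, then no valid $S$ has $y(S)<(1-\epsilon)w(S)$.
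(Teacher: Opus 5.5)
Your argument is the paper's proof. You assume $w_1(S_{\max})\le w_2(S_{\max})$ and take the representative from the $\ell_1$-trimming, so $\ell_1\ge 0$ is inherited by domination. You recover $\ell_2\ge 0$ from the multiplicative approximation when $e^{2\epsilon'}\le c-1$, and then check weak violation and the factor-$\tfrac12$ weight bound coordinatewise. Decoupling the trimming precision $\epsilon'$ from the violation gap is necessary, and here you are more careful than the paper. The paper trims with the same $\epsilon$ and closes with $e^{2\epsilon}(1-2\epsilon)\le 1-\epsilon$, which fails for small $\epsilon$ because the left side is $1-2\epsilon^2+O(\epsilon^3)$.

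Drop the $c=2$ patches, though; none of them can work.
\begin{itemize}
\item The condition $z_1\le z_2$ constrains the final sum, not the partial sums. The DP cannot enforce it without tracking $z_1-z_2$ exactly, which is pseudo-polynomial.
\item No polynomial fix exists. At $c=2$, balance means $w_1(S)=w_2(S)$ exactly. On a star with center $(w_1,w_2)=(0,T)$, leaves $(a_i,0)$, and $y\equiv 0$, the oracle must decide subset sum.
\item The lemma's conclusion itself fails at $c=2$. Take a star with center $(N,N)$ and leaves $(1,0)$, $(0,1)$, with $w=w_1+w_2$, $y\equiv 0$ and $N$ large. In the first two coordinates, $\{(N{+}1,N),(1,0),(0,1)\}$ and $\{(N,N{+}1),(1,0),(0,1)\}$ are valid $\ell_1$- and $\ell_2$-trimmings. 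Neither contains a $2$-balanced vector, yet the center alone is a violating $2$-balanced district.
\end{itemize}
The paper's proof also invokes $e^{2\epsilon}<c-1$, so it silently assumes $c>2$. The right move is to state that hypothesis explicitly: $c>2$ and $\epsilon'$ small relative to $c-2$. That is exactly the range your main argument already covers.
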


\begin{lemma}[Separation Oracle for Threshold Districts]\label{lem:separation_threshold2subgraphsum}
Consider the $B$-threshold districting problem with objective weight $w(v)$ and feature weight $w_1(v)$ subject to $w_1(S) \ge B$. The separation oracle additionally has a dual variable $y$ and seeks a district maximizing objective weight subject to the threshold and dual constraints.  

We can reduce this to solving the trimmed connected subgraph sum problem with vector weights $\vw = (w_1, w, y)$  and the linear function $\ell_{th}(\vz) = z_1$. The resulting set allows us to compute a weakly violating district satisfying $y(S) < (1-\frac{1}{2}\epsilon) w(S)$, approximate objective weight $w(S) \ge \frac{1}{2} w(S_{\max})$, and the threshold constraint~\cref{eq:def:threshold}.
\end{lemma}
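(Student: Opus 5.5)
The reduction will go as follows. Fix the dual values $y$ and the tolerance $\epsilon$, and run the algorithm of \Cref{lem:allsubsetsum} with $\vw=(w_1,w,y)$, linear score $\ell_{th}(\vz)=z_1$, and an error parameter $\epsilon'=\Theta(\epsilon)$. Take $\epsilon'$ small enough that $e^{2\epsilon'}(1-\epsilon)\le 1-\epsilon/2$ and $e^{-\epsilon'}\ge 1/2$. This returns a set $W\subseteq W^*$, and every $\vz'\in W$ is realized by some connected compact district $S'$. We then scan $W$ and keep the vectors $\vz'=(z_1',z_2',z_3')$ with $z_1'\ge B$ (threshold) and $z_3'<(1-\epsilon/2)z_2'$ (weakly violating). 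Among these we output a district with the largest $z_2'$. If none exists, we report that every district satisfies $y(S)\ge(1-\epsilon)w(S)$.

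Correctness comes from the trimming guarantee. Let $S_{\max}$ be an optimal violating threshold district and $\vz=\vw(S_{\max})$. Since $\vz\in W^*$, there is a representative $\vz'\in W$ that $\epsilon'$-approximates $\vz$ and $\ell_{th}$-dominates it.
\begin{itemize}
\item Domination gives $z_1'\ge z_1\ge B$, so the threshold holds exactly. This is why the score is $\ell_{th}=z_1$: the multiplicative approximation alone would only give $z_1'\ge e^{-\epsilon'}B$.
\item The approximation gives $z_3'\le e^{\epsilon'}z_3<e^{\epsilon'}(1-\epsilon)z_2\le e^{2\epsilon'}(1-\epsilon)z_2'\le(1-\epsilon/2)z_2'$, so $\vz'$ passes the violation filter.
\item Also $z_2'\ge e^{-\epsilon'}z_2\ge \tfrac12 w(S_{\max})$.
\end{itemize}
Our choice has $z_2$ at least as large as this $z_2'$, so it satisfies the $\tfrac12$ guarantee. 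Conversely, if no vector passes the filter, then no violating threshold district $S$ with $y(S)<(1-\epsilon)w(S)$ exists, by the same argument applied to $S$. This gives the "report" branch.

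Two technical points need care.
\begin{itemize}
\item Zero coordinates: the ratio condition forces $z_i'=0$ whenever $z_i=0$, which is consistent.
\item Integrality of $y$: the dual values $y'_v$ are reals, while \Cref{lem:allsubsetsum} assumes integer weights. We will scale and round $y$ to polynomially bounded integers. The rounding loss can be absorbed into the slack between $1-\epsilon$ and $1-\epsilon/2$, for example by rounding $y$ down with precision $\epsilon w_{\min}/n$ (or handling $w(S)=0$ trivially).
\end{itemize}

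To make the running time polynomial, we apply the lemma to each $k$-hop neighborhood of every candidate center. This neighborhood has bounded treewidth in planar and apex-minor-free graphs. We restrict the strong-radius computation to districts centered at that vertex, with $d=3$ fixed. I expect the main obstacle to be this integrality and scaling bookkeeping for $y$, together with making sure the extracted set $S'$ is recovered from the DP. The inequality chain itself is routine.
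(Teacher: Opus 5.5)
Your proposal is correct and follows the same route as the paper. The paper writes out only the $c$-balanced analogue (\Cref{lem:separation_balanced2subgraphsum}) and leaves this case implicit: take the trimmed representative of $\vw(S_{\max})$, use $\ell_{th}$-domination to get $w_1(S)\ge B$ exactly, and use the $e^{\pm\epsilon'}$ approximation to get weak violation and $w(S)\ge \frac12 w(S_{\max})$. Your separate choice of $\epsilon'$ (e.g.\ $\epsilon'=\epsilon/4$) and the integer scaling of $y$ are extra care the paper omits. For the rounding, though, use a finer precision such as $\epsilon/(4n)$ together with a correspondingly tightened filter, because precision $\epsilon w_{\min}/n$ only bounds the loss by about $\epsilon w(S)$, which exceeds the $\frac{\epsilon}{2}w(S)$ slack.
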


\begin{example}[Connected Knapsack Problem~\cite{Dey2024-jg}]  Given a graph $G$, where each vertex has a value $\alpha(v)$ and a cost $\beta(v)$, and a global budget $B$, the goal is to find a connected subgraph $U$ maximizing $\alpha(U)$ subject to $\beta(U) \le B$. This is exactly recovered by setting $\vw(v) = (\alpha(v), \beta(v))$ and solving for the appropriate approximation with $\ell(\alpha, \beta) = -\beta$.
\end{example}

\subsection{Proofs to \Cref{thm:k-apex-minor} and \Cref{thm:star-bounded-expansion}}

Finally, we conclude this section by describing how to apply separation oracles provided in the previous section. In particular, the new separation oracle leads to $O(1)$-approximation algorithms for bounded strong radius districts on apex-minor-free graphs.

\begin{proof}[Proof of \Cref{thm:k-apex-minor}]
To apply \Cref{lem:allsubsetsum} (with the reduction adapted from \Cref{lem:separation_balanced2subgraphsum} on $d=4$ and \Cref{lem:separation_threshold2subgraphsum} on $d=3$) in the separation oracle, the algorithm enumerates all vertices $v \in V$ as centers and considers the subgraph induced by vertices within distance $k$ of $v$.
As $G$ is apex-minor-free, each such subgraph has bounded treewidth, which makes \Cref{lem:allsubsetsum} applicable, and therefore the theorem follows.
\end{proof}

\begin{proof}[Proof of \Cref{thm:star-bounded-expansion}]
We adopt the implementation of the separation oracle from~\cite{Dharangutte2025-ry}, enumerating each vertex as a star center and invoking an approximate knapsack algorithm. The theorem follows from combining the analysis of \Cref{subsec:expansion}.
\end{proof}

\section{PTAS for $\delta$-Relaxed Districting on Apex-Minor-Free Graphs}\label{sec:minorfree}


In this section, we show that allowing relaxation to the composition requirements helps us obtain PTAS to the districting problem on bounded treewidth graphs. Combining this with the standard Baker's approach~\cite{Baker1994-ih}, we obtain PTAS for $\delta$-relaxed districting on planar and apex minor-free graphs.

\begin{theorem}
\label{thm:ptas_relax_aminor_free}
    There exists a polynomial-time algorithm that computes an $(1+\eps)$-approximate solution for packing $\delta$-relaxed $c$-balanced or threshold districts with bounded strong or weak radius on planar graphs and apex-minor-free graphs.
\end{theorem}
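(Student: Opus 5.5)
The plan has two parts: a Baker-style layering reduction to bounded-treewidth pieces, and a dynamic program over a tree decomposition that solves the $\delta$-relaxed packing problem on each piece to within a factor $1+\eps$ of the exact optimum.

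\textbf{Layering.} Fix a root $r_0$ and BFS layers $L_0,L_1,\dots$. Set $t=\lceil c_0 k/\eps\rceil$. For each shift $i\in\{0,\dots,t-1\}$, delete the layers with index $\equiv i \pmod t$. Each remaining component spans fewer than $t$ consecutive layers. By Eppstein's diameter--treewidth property for apex-minor-free graphs (and $t$-outerplanarity in the planar case, \cite{Bodlaender1998-ma,Eppstein2000-ry,Demaine2005-sj}), each component has treewidth $g(t)=O_H(t)$. To see this, contract all earlier layers to a single vertex, which gives a graph of radius $O(t)$; then remove that vertex.

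\textbf{Charging OPT to a good shift.} Fix an optimal packing $\mathcal{T}_{\text{OPT}}$. Every district $T$, under either the weak or the strong notion, lies within $d_G$-distance $k$ of its center, so it meets at most $2k+1$ consecutive BFS layers. Hence $T$ is destroyed by at most $2k+1$ of the $t$ shifts. Averaging over the shifts, some $i$ keeps districts of total weight at least $(1-(2k+1)/t)\,w(\mathcal{T}_{\text{OPT}})\ge(1-\eps/2)\,w(\mathcal{T}_{\text{OPT}})$. Each surviving district lies inside a single component. A strong-radius district stays strong-radius there, because its induced subgraph is untouched. For weak radius, the center may have been deleted. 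To handle this, I will run the DP on a piece augmented with the $k$ layers on either side of it. These extra vertices may serve as centers or as distance witnesses but cannot be covered. This keeps the treewidth $O(t+k)$, and the resulting solutions remain disjoint across pieces because they cover only kept vertices. Finally, I take the best shift and union the per-component solutions.

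\textbf{DP on bounded treewidth (main obstacle).} I need an algorithm that, on a graph of treewidth $w'$ with constant $k$, returns a packing of $\delta$-relaxed districts whose weight is at least $(1-\eps/2)$ times the optimum over exact districts. I will use a nice tree decomposition. For each bag, the state records:
\begin{itemize}
\item a partition of the bag vertices into ``uncovered'' and partial districts, with connectivity classes as in standard connectivity DPs;
\item for each partial district, a designated center, which is either a bag vertex or an already-forgotten vertex summarized by distance labels;
\item for each bag vertex, its distance to the center, from $\{0,\dots,k,\infty\}$, measured in the induced partial subgraph (strong) or in $G$ (weak).
\end{itemize}
Distance labels in the strong case are consistent only if paths are tracked through the bag; I will use the standard trick of guessing each bag vertex's final distance to the center and verifying it at introduce and forget events, which costs a factor $(k+2)^{O(w')}$.

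The composition part is the crux, because feature sums can be large. For each partial district I store its feature sums $(w_1,w_2)$ rounded to powers of $(1+\delta')$, with $\delta'=\Theta(\delta/(w'\log n))$. The rounding error compounds only along join depth, which is $O(\log n)$ after balancing the decomposition (Bodlaender's logarithmic-depth decomposition), so the accumulated error stays within a $1+\delta$ factor. This is exactly why the balance (or threshold) condition must be relaxed to $c(1+\delta)$ (resp.\ $B(1-\delta)$). A district is closed when its last vertex is forgotten, and it is accepted if its rounded sums satisfy the relaxed condition. Every exact district passes this test, so the DP optimum is at least the exact optimum restricted to the piece. The objective weight is maximized exactly as the DP value, so there is no loss there.

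\textbf{Running time.} The number of states is $n^{O(w'/\delta')}$-type, concretely $(\log_{1+\delta'} R)^{O(w')}\cdot w'^{O(w')}$, which is polynomial for constant $k,\eps,\delta$. The step I expect to be most delicate is controlling this rounding error under merges, together with correct distance bookkeeping for strong radius. If error compounding proves problematic, I will fall back on trimming in the style of \Cref{lem:allsubsetsum}, applied per partial district.
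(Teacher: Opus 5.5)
Your proposal is correct and follows the paper's proof: Baker layering with $t=\Theta(k/\eps)$, the observation that each optimal district meets at most $2k+1$ consecutive BFS layers and so is destroyed by at most $2k+1$ shifts, pigeonhole over the shifts, and a tree-decomposition DP with geometrically rounded feature sums that is responsible for the $\delta$-relaxation (the paper's \Cref{cor:ptas_relax_bounded}). Your variations are harmless: maximizing the objective exactly, augmenting pieces with neighboring layers for weak-radius centers instead of guessing a global center and using precomputed $d_G$, and the log-depth decomposition, which is in fact unnecessary since rounding at scale $\delta/n$ already yields polynomially many buckets.

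One detail to fix is the acceptance test. Because you store rounded grid values rather than actual representatives (the paper's trimming keeps the latter), you should test at an intermediate threshold such as $c(1+\delta/3)$ (resp.\ $B(1-\delta/3)$), with accumulated rounding error well below $\delta$. Otherwise an accepted district's true sums are only guaranteed to be $c(1+\delta)(1+O(\delta))$-balanced.
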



We first state the result that obtains a $(1+\eps)$-approximate solution for packing $\delta$-relaxed districts on bounded treewidth graphs.  The proof employs dynamic programming, combining \cref{lem:allsubsetsum} with the techniques of \cite{Dharangutte2025-ry}.  We defer the proof for \Cref{cor:ptas_relax_bounded} to~\cref{app:multisubsetsum}.
\begin{lemma}\label{cor:ptas_relax_bounded}    There exists a polynomial-time algorithm that computes an $(1+\eps)$-approximate solution for packing $\delta$-relaxed $c$-balanced or $B$-threshold districts with bounded strong or weak radius on graphs of bounded treewidth.
\end{lemma}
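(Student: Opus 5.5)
The plan is a dynamic program over a nice tree decomposition of width $tw$. Each DP state records the partial districts that meet the current bag, together with a trimmed approximation of their feature weights, in the spirit of \Cref{lem:allsubsetsum}. For each bag $X_t$ the state consists of the following data.
\begin{enumerate}
\item A partition of $X_t$ into vertices that are unused and vertices that belong to partial districts, with the districts labelled.
\item For each labelled district, its connectivity pattern inside the processed subgraph, i.e.\ which bag vertices already lie in the same component.
\item The compactness data. For strong radius we guess, for every district touching the bag, the identity of its center (a bag vertex, or ``already forgotten'') and a distance label in $\{0,\dots,\radiusbd,\infty\}$ for each bag vertex, as in the radius-$\radiusbd$ bookkeeping behind the $(\radiusbd+1)^{2(tw+1)}$ factor of \Cref{lem:allsubsetsum}. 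For weak radius the compactness test is only that all members lie within $\radiusbd$ of a guessed center in $G$; this is checkable locally once the center is fixed, and the center can be guessed as a vertex, which costs only a factor of $n$.
\item For each of the at most $tw+1$ open districts, its accumulated feature weights $(w_1,w_2)$ (or $w_1$ for threshold), rounded to powers of $e^{\epsilon'}$.
\end{enumerate}
The value stored in a state is the maximum objective weight of completed districts plus the weight of the open ones. The number of states is polynomial for fixed $tw,\radiusbd,\delta,\epsilon$, since there are $O((\epsilon'^{-1}\ln R)^{2(tw+1)})$ weight profiles.

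The DP transitions are the standard ones for connectivity DPs: introduce vertex, forget vertex, and join. The key point concerns forget nodes. When the last bag vertex of a district is forgotten, the district is closed. At that moment we check connectivity (it must form a single component) and compactness, and we check the composition constraint against its rounded weights. Districts that are never closed are not allowed. At join nodes, weights of the same district coming from the two children are added and then re-rounded. Each district's weight undergoes $O(n)$ additions, so to keep the final multiplicative error at $e^{\pm\epsilon'\cdot O(\log n)}$ I would round as in the trimming procedure along a balanced tree decomposition of depth $O(\log n)$; such a decomposition exists with width $O(tw)$. Alternatively one can take $\epsilon'=\delta/(cn)$ per step, which still gives polynomially many buckets because $\ln R$ is polynomial.

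The core of the argument is how the relaxation interacts with the rounding. Let $\widetilde w_i$ denote rounded weights with $e^{-\eta}\le \widetilde w_i/w_i\le e^{\eta}$, where $\eta$ is chosen so that $e^{2\eta}\le 1+\delta$. At closure we accept a district iff its rounded weights satisfy the $c(1+\delta)^{1/2}$-balanced condition (respectively $\widetilde w_1\ge B(1-\delta)^{1/2}$). Two consequences follow.
\begin{enumerate}
\item Every truly $c$-balanced district of OPT is accepted, so its representative path through the DP survives.
\item Every accepted district is truly $c(1+\delta)$-balanced (respectively $\delta$-relaxed threshold).
\end{enumerate}
Since the objective weight $w$ is tracked exactly as the DP value, and not as part of the rounded vector, the DP actually finds a solution of weight at least $w(\mathcal{T}_{OPT})$ among $\delta$-relaxed packings. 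So the $(1+\eps)$ factor is only needed to absorb any Baker-type loss elsewhere. The representative argument follows \Cref{lem:allsubsetsum}: for every partial packing consistent with OPT there is a stored state whose rounded weights approximate it and whose value is at least its value.

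I expect the main obstacle to be maintaining the strong-radius constraint across a join when a district's center has been forgotten. The distances through the processed part must be summarized by bag labels, and a district may re-enter the bag after its center has been forgotten. I would handle this by storing, for each bag vertex of a district, its current shortest distance to the center inside the district's processed part. These labels are updated at introduce and join nodes via the connectivity pattern, and at closure every vertex must have label at most $\radiusbd$. This works because shortest paths inside $G[T]$ that pass through forgotten vertices are captured by the separator property of the bags.
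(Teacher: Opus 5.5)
\textbf{There is a genuine gap, in the strong-radius bookkeeping.} Your treatment of the composition constraints is sound. The strong-radius mechanism you settle on in your last paragraph does not work, and that mechanism is the crux of this DP.

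Storing, for each bag vertex, its \emph{current} distance to the center inside the processed part of the district fails in two ways.

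First, a vertex can be forgotten before any short path to the center is visible. Take the district to be a path $v_0v_1\cdots v_{3k}$, processed with bags $\{v_j,v_{j+1}\}$, with center $v_{2k}$. The vertices $v_0,\dots,v_{2k-2}$ are forgotten before $v_{2k}$ is even introduced; this is a state your center field ``bag vertex or already forgotten'' cannot represent. So their labels are $\infty$ when they leave the bag, and they are then discarded, which means ``at closure every vertex must have label at most $k$'' cannot be checked for them. If you reject such vertices at forget time, you lose valid districts (for $k\ge 2$, already the path $v_0\cdots v_{2k}$ with center $v_k$). If you do not reject them, you accept the path above, whose strong radius is $\lceil 3k/2\rceil>k$.

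Second, even for bag vertices the labels cannot be updated ``via the connectivity pattern''. At a join, a shortest path may run from the (possibly forgotten) center through child $1$'s processed part to a bag vertex $a$, and then through child $2$'s processed part to another bag vertex $b$. The correct new label of $b$ is then $d_1(\gamma,a)+d_2(a,b)$. The quantity $d_2(a,b)$ is recorded nowhere in your state: the separator property tells you such paths cross the bag, not how long they are.

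\textbf{The fix: guess final distances.} Instead of maintaining current distances, guess final ones, as the paper does with the traces $(\partition_i,\rho_i,\pi_i,\gamma_i)$ inherited from \Cref{lem:allsubsetsum}:
\begin{itemize}
\item When $u$ enters district $i$, guess $\rho_i(u)=d_{G[U_i]}(\gamma_i,u)\in\{0,\dots,k\}$ and never change it; joins require equal $\rho_i$.
\item Only the stored center $\gamma_i$ (an arbitrary vertex identity, costing a factor $n^{tw+1}$) may usefully carry $\rho_i=0$.
\item Set a parent bit $\pi_i(u)$ when an introduced edge $uv$ inside the district has $\rho_i(v)=\rho_i(u)-1$.
\item Allow $u$ to be forgotten only if $\pi_i(u)=1$.
\end{itemize}
Every vertex is then certified locally at its own forget node. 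Parent chains strictly decrease to $\gamma_i$, giving $d_{G[U_i]}(\gamma_i,u)\le\rho_i(u)\le k$. Completeness follows by guessing the true distances.

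\textbf{Comparison with the paper on composition constraints.} With this substitution the rest of your plan works, and it differs from the paper in a useful way. You keep the objective exactly as a max-plus DP value and round only the feature weights by a fixed rule. Hence the $e^{\pm\eta}$ error bound holds for \emph{every} partial packing, not just for representatives. Your acceptance test is then both complete and sound, and the output even has weight at least $w(\mathcal{T}_{\text{OPT}})$. The paper instead trims whole weight profiles, objective included, while keeping genuine representatives. Its closure test is therefore applied to true weights, so soundness is automatic, at the price of an $e^{\epsilon}$ loss in the objective.

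\textbf{A constant to fix.} In the balanced case the ratio $\tilde w_2/\tilde w_1$ can move by a factor $e^{\pm 2\eta}$. With acceptance at $c\sqrt{1+\delta}$ you therefore need $e^{4\eta}\le 1+\delta$; your condition $e^{2\eta}\le 1+\delta$ is not enough.
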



Using \Cref{cor:ptas_relax_bounded}  as a black box, we prove the main result for $\delta$-relaxed districting.


\begin{proof}[Proof of \Cref{thm:ptas_relax_aminor_free}]

The algorithm is as follows. Start with an arbitrary vertex $r$ as the root and construct a BFS tree and let $L_j$ be the $j^{\text{th}}$ level which are vertices at hop distance $j$ from $r$. Now, for an integer $t > 0$, and  shift $i \in \{0, \cdots, t-1\}$, remove vertices in level $i \mod{t}$, and let $G^{(i)}$ be the induced graph on remaining vertices. 
Notice that each connected component in $G^{(i)}$ is fully contained in some $(t-1)$ consecutive levels of the BFS tree. 
For planar or apex-minor-free $G$, note that each connected component in $G^{(i)}$ now has bounded treewidth \cite{Eppstein2000-ry} (specifically, $O(t)$).
Use the algorithm from \Cref{cor:ptas_relax_bounded} with parameter $\varepsilon/2$ and solve the problem on each connected component separately and let the union of these solutions for a fixed $i$ be $\mathcal{T}_i$. Since districts cannot use vertices from different levels, taking union gives us a valid districting. Return the solution with largest weight $w(\mathcal{T}_i)$ across all $i$.

Now, we show that setting $t = O(k/\varepsilon)$ suffices. Let $\mathcal{T}^*$ be the optimal solution on $G$ with $w(\mathcal{T}^*) = \text{OPT}$. A radius $k$ district in $\mathcal{T}^*$ is valid in some connected component in $G^{(i)}$ if no vertex in the district (or the path considered by the optimal solution for weak radius setting) lies in the deleted levels. Hence, for at most $2k + 1$ different shifts ($i$ values), this particular district (or the path from the center vertex to some vertex in the district) is not valid. Now, for shift $i$, let $S_i$ be the set of districts from $\mathcal{T}^*$ that intersect with some deleted layer. The optimal districting in $G^{(i)}$ has weight $\geq \text{OPT} - w(S_i)$. Now consider the total weight lost across different shifts. We have

$$\sum_{i=0}^{t-1} w(S_i) \leq (2k+1) \sum_{T \in \mathcal{T^*}} w(T) \leq (2k+1) \text{OPT}.$$ 

By pigeonhole principle, for $t=2(2k+1)/\varepsilon$, there is some $i^*$ such that 
\[
w(S_{i^*}) \leq \frac{2k+1}{t} \text{OPT} \leq \frac{\varepsilon}{2} \text{OPT}
\]

As a result, for this particular shift $i^*$, the optimal districting in $G^{(i^*)}$ has weight at least $\geq \text{OPT} - w(S_{i*}) \geq \left(1 - \frac{\eps}{2}\right)\text{OPT}$.
Combining this with the $(1-\varepsilon/2)$ approximation on $G^{(i^*)}$ from \Cref{cor:ptas_relax_bounded}, we obtain that the returned solution has weight at least $\left(1-\frac{\eps}{2}\right)^2 \text{OPT} \geq (1-\varepsilon) \text{OPT}$. 
\end{proof}

\section{Hardness Results}
\label{sec:hardness}
In this section we prove our hardness results. Note that the hardness for maximum packing of $c$-balanced districts were shown in \cite{Dharangutte2025-ry}. Here, we show the hardness of packing under the weight threshold condition. 


\subsection{Hardness of Packing Compact Subgraphs with Weight Threshold}


We first start by showing the problem is NP-hard even when restricting input instances to trees.

\begin{theorem}
    \label{thm:hardness-tree}
    The $B$-threshold districting problem with (weak or strong) radius-$k$ bounded districts is NP-hard when $G$ is a tree.
\end{theorem}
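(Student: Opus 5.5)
We reduce from the (decision version of the) Knapsack problem, as the paper's introduction indicates: given items with values $p_1,\dots,p_m$, sizes $s_1,\dots,s_m$, a capacity $C$ and a target value $P$, decide whether some subset has total size at most $C$ and total value at least $P$. Equivalently, we use the complementary form: a subset of \emph{discarded} items must have total size at least $\sum_i s_i - C$ while the \emph{kept} items have value at least $P$. The lower-bound threshold constraint $w_1(T)\ge B$ naturally encodes "enough size discarded".

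\textbf{Construction.} Build a star (a tree of radius $1$, so it works for every $k\ge 1$ and for both weak and strong radius): a center $r$ and one leaf $u_i$ per item. Set the feature weights $w_1(u_i)=s_i$ and $w_1(r)=0$, and set $B=\sum_i s_i - C$. For the objective weights, let $M$ be larger than $\sum_i p_i$. Put $w(r)=M$, $w(u_i)=0$, and add a second private leaf $a_i$ attached to each $u_i$? That would break the star structure, so instead we handle value directly.

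The key observation is that any district avoiding $r$ is a single leaf $\{u_i\}$. A leaf alone is feasible iff $s_i\ge B$; we can assume this never happens, e.g.\ by padding with a dummy item so that $B$ exceeds every $s_i$. Consequently at most one district exists, namely the one containing $r$, and it is a star $\{r\}\cup\{u_i:i\in D\}$ with $\sum_{i\in D}s_i\ge B$.

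To make the objective reward the items left \emph{out} of $D$, this single district must not cover them. So we use a tree of depth $2$ instead: each $u_i$ gets a pendant vertex $a_i$ with $w(a_i)=p_i$, $w_1(a_i)=s_i$ (large enough that $\{u_i,a_i\}$ forms a valid district) and $w(u_i)=0$, $w_1(u_i)=0$. The radius must be chosen to match, and the exact weights need care; this is the delicate part of the design.

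\textbf{Simpler alternative to verify first.} Objective $w=w_1$, both equal to the sizes, over a star with $w(r)=0$; this asks for a subset sum of at least $B$, which is trivially satisfied by taking everything. So the objective must differ from the feature weight: set $w(u_i)=p_i$, $w_1(u_i)=s_i$, and make the threshold instead bound the kept set from below by sizes. The optimum is then $\max\{\sum_{D}p_i : \sum_D s_i\ge B\}$, which is a "covering knapsack". That is also NP-hard, via reduction from Partition/Subset Sum: take $p_i=s_i=a_i$ with threshold $B=T$ and ask for the \emph{minimum} weight. Our problem maximizes, however, so the minimization must be converted. Adding the center weight $w(r)=M$ and giving leaves the complementary value $w(u_i)=K-a_i$ still rewards taking more leaves, so it does not help.

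\textbf{Plan I commit to.} Reduce from Subset Sum (given $a_1,\dots,a_m,T$, is there $D$ with $\sum_D a_i=T$?) using the depth-2 tree. The root $r$ has $w_1(r)=0$ and large objective $w(r)=M$. Each child $u_i$ has $w_1(u_i)=a_i$ and $w(u_i)=0$, and each $u_i$ has a leaf child $\ell_i$ with $w(\ell_i)=a_i$ and $w_1(\ell_i)=L$, where $L$ is huge, so that $\{u_i,\ell_i\}$ and $\{\ell_i\}$ are feasible with threshold $B=T$ (this requires $L\ge T$).

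The district containing $r$, with radius $1$ around $r$ or $u_i$ and avoiding the $\ell_i$, uses the children in some set $D$ with $\sum_D a_i\ge T$, collecting objective $M$. Every remaining $\ell_i$ with $i\notin D$ is collected as a singleton, worth $\sum_{i\notin D}a_i$. The optimum therefore equals $M+\sum a_i-T$ iff an exact subset sum exists, since $M$ forces $r$ to be covered.

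The main obstacle is ruling out alternative configurations: $r$'s district absorbing some $\ell_i$, which needs radius $2$ and would add $w_1=L$. To handle this, fix the radius to $k$ and subdivide the edges $r u_i$ into paths of $k-1$ zero-weight vertices, so that the leaves lie beyond radius $k$ from any center that also covers $r$. Checking this in both the weak and the strong settings is where I expect the care to be needed.
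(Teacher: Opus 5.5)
There is a genuine gap: your committed construction does not encode Subset Sum, and it already fails for $k=1$, before any subdivision. Since $w_1(\ell_i)=L\ge T=B$, every singleton $\{\ell_i\}$ is a feasible district. It is also disjoint from the root star $\{r\}\cup\{u_i : i\in D\}$, so nothing forces $\ell_i$ to be lost when $i\in D$. Taking $D$ to be all items, together with all singletons $\{\ell_i\}$, gives objective $M+\sum_i a_i$ whenever $\sum_i a_i\ge T$, whether or not an exact subset exists. For example, the NO-instance $a=(2,2)$, $T=3$ has optimum $M+4>M+\sum_i a_i-T$, so your characterization of the optimum fails.

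There is a second, independent leak. The district $\{r,u_i,\ell_i\}$ has radius $1$ around $u_i$ (not radius $2$, as you suggest) and satisfies $w_1\ge L\ge T$. So the reward $M$ at $r$ can be collected without any covering constraint on the $a_i$. Subdividing $ru_i$ does not help: $d(r,\ell_i)=k+1\le 2k$, and $u_i$ is within distance $k$ of both.

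The missing idea is a gadget with two properties. First, placing an item vertex in the root's district must genuinely cost objective weight. Second, the large reward must be reachable only by a district that is forced to satisfy the covering constraint. The paper's construction has both.

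\begin{itemize}
\item Each item vertex $v_i$ has weight equal to the item's utility and gets two leaf children. The \emph{anchor} $x_i$ has weight $B$, so it is feasible alone and always covered, contributing a constant. The \emph{penalty} leaf $y_i$ carries the item's tiny knapsack weight. It is infeasible alone, so it can only be covered through $v_i$, and putting $v_i$ into the root star loses exactly that weight.
\item The reward is not placed on the root $v$ (which has weight $1$). It sits on a pendant leaf $v_0$ of weight $U-1$, with $B=2U$. The only feasible district containing $v_0$ is then a star centered at $v$ whose chosen children have total utility at least $U$.
\end{itemize}

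The optimum is therefore a minimum-weight knapsack cover. The radius-$k$ case follows by subdividing every edge into a path of length $k$; weak and strong radius coincide for connected subsets of a tree.

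Your target quantity $\max\{\sum_{i\notin D}a_i : \sum_{i\in D}a_i\ge T\}$ is the right one, and importing both ingredients repairs your construction (assuming $T\ge 1$):
\begin{itemize}
\item set $w_1(\ell_i)=0$;
\item attach an anchor leaf $x_i$ to $u_i$ with $w_1(x_i)=T$ and $w(x_i)=0$;
\item give $r$ objective weight $0$;
\item attach a new leaf $r_0$ to $r$ with $w(r_0)=M$ and $w_1(r_0)=0$.
\end{itemize}
Then $\{u_i,x_i,\ell_i\}$ collects $a_i$ exactly when $i\notin D$, and $r_0$ is covered only by an $r$-centered star with $\sum_{i\in D}a_i\ge T$.
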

\begin{proof}
    We first focus on star districts and show a reduction from the Knapsack Problem. Given a knapsack problem with $n$ items, where item $i$ has weight $w_i$ and utility $u_i$, we want to select items such that utility is at least $U$, and minimize the total weight of the selected one. This is the dual version of the standard knapsack problem which fixes the cap of the total weight and maximizes utility, and is still NP-hard. Without loss of generality, we can assume that $u_i>1$ and $w_i<1/n$. This can be achieved by scaling up all utility values and the utility goal $U$ and scale down the weights. It does not change the problem structure. 

Now we describe the instance for the districting problem. Consider a tree rooted at $v$ with children of $v$ as $v_1, v_2, \dots, v_n$ corresponding to the items. 
Let $w(v) =1$, and $B=2U$. 
Node $v_i$ has two children $x_i$ and $y_i$.  Define 
$w(x_i)=B$, 
$w(y_i)=w_i$,
$w(v_i)=u_i$.
Further, we add a separate neighbor/child $v_0$ of $v$ with $w(v_0)=U-1$. The tree structure is detailed in \Cref{fig:knapsack-tree}.

    \begin{figure}[htbp]
    \centering
    \includegraphics[width=5cm]{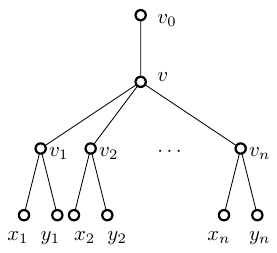}
    \caption{Instance for districting on the tree from a given knapsack instance.}
    \label{fig:knapsack-tree}
    \end{figure}

Now we consider what are the possible candidate districts that have a total weight at least $B$. First we can define a district centered at $v$, including $v_0$ and some set of vertices in $\{v_1, \cdots, v_n\}$. The population of $v$ and $v_0$ is $U=B/2$, so we must include additional vertices from $\{v_i\}$ to gain additional population of at least $U$. On the other hand, we can also define a district centered at $v_i$, which includes $x_i, y_i$ (and possibly $v$). However, we cannot have both a district centered at $v$ including $v_i$ and a district centered at $v_i$, as we want disjoint districts. If $v_i$ is included in the district at $v$, then we can still define a district with a solo vertex $x_i$ (of population $B$), but $y_i$ cannot be included in any district.

Thus the main decision to make is whether we can include a district centered at $v$ or not. To do so, we select a subset of vertices  in $\{v_1, \cdots, v_n\}$ which has a total weight at least $U$. For these ones selected, the corresponding vertices $y_i$ will not be included anywhere -- causing a loss of $w_i$ from the potential districts centered at $v_i$. However, since $\sum_i w_i<1$, it would always be more beneficial to create a district at $v$ whenever possible. Specifically, the total coverage of the two cases is
\begin{itemize}
\item If we do not have a district in $v$, the optimal solution is $1+ nB+\sum_i u_i +\sum_i w_i$.
In the above, the $1$ comes from some district at $v_i$ that can include $v$.

\item If we do include a district in $v$, which includes $v_0$ and a subset of vertices $S\subseteq \{v_1, \cdots, v_n\}$, then the solution is  $B/2+ nB+\sum_i u_i +\sum_i w_i - \sum_{i\in S} w_i$.
\end{itemize}

In other words, the best solution is to try to include a district at $v$ and the set $S$ of children included identifies the items that have the smallest total weight yet have utility to be at least $U$. Notice that this is precisely the optimal solution for the knapsack problem. Next, we note that the same reduction extends to radius-$k$ districts, for any fixed $k \ge 1$, by replacing every edge of the above tree by a path of length exactly $k$, and assigning weight $0$ to all newly introduced subdivision vertices. The resulting graph is still a tree, and all original districts retain the same weights. The same proof goes through by noting that on a tree, the weak-radius and strong-radius notions are equivalent for connected districts.

\end{proof}

Next, we show hardness for general graphs.

\begin{theorem}
\label{thm:hardness-max-packing}
The $B$-threshold districting problem with (weak or strong) radius-$k$ bounded districts is NP-hard when $G$ is a bounded degree graph or a planar graph. Moreover, on general graphs, it is NP-hard to approximate better than $O(n^{1/2-\delta})$ for any $\delta>0$. The results hold even when each subgraph is required to be a star (radius $1$).
\end{theorem}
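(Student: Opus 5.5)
We reduce from Maximum Independent Set (MIS), as the introduction suggests. Given an MIS instance $G=(V,E)$, we attach to each vertex $v$ a private set of gadget vertices so that a district ``at $v$'' can only be completed by claiming resources shared with its neighbors. Concretely, we subdivide every edge $uv\in E$ by a new vertex $e_{uv}$, give every original vertex $v$ weight $w_1(v)=w(v)=1$, give every edge vertex weight $0$ in the objective, and choose $w_1$ and $B$ so that a star centered at $v$ reaches the threshold only if it absorbs all of $v$'s incident edge vertices. Two stars centered at adjacent $u,v$ would both need $e_{uv}$, so they conflict. We then need the converse: any threshold star must be centered at an original vertex and must contain all of its edge vertices, and stars centered at edge vertices (whose neighbors are two original vertices) fail the threshold. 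This should give a one-to-one correspondence between feasible packings and independent sets, with objective equal to the independent set size (up to an additive term of the form $\deg$-dependent edge weights, which we set to zero).

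To make the threshold uniform despite varying degrees, we pad each $v$ with a private pendant leaf $p_v$ of weight $w_1(p_v)=\Delta-\deg(v)$ (and objective weight $0$). We set $w_1(e_{uv})=1$, $w_1(v)=0$, and $B=\Delta$. Then a star centered at $v$ reaching $B$ must take $p_v$ and all $\deg(v)$ edge vertices. A star centered at $e_{uv}$ has $w_1$ at most $1+0+0<\Delta$ for $\Delta\ge 2$. A star centered at $p_v$ contains only $p_v,v$ and has weight $\Delta-\deg(v)<\Delta$ unless $\deg(v)=0$, and isolated vertices are trivially handled. The objective weight $w$ is $1$ on original vertices and $0$ elsewhere, so OPT equals $\alpha(G)$ exactly.

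The three claims then follow from the choice of source problem. MIS is NP-hard on cubic graphs and on planar graphs of maximum degree $3$, and subdividing edges and adding pendants preserves bounded degree and planarity. For inapproximability, MIS is hard to approximate within $N^{1-\delta}$ on $N$-vertex graphs. Our instance has $n=O(N+|E|)=O(N^2)$ vertices, so the ratio becomes $N^{1-\delta}=n^{1/2-\delta'}$, which matches the claimed exponent $1/2$. For radius-$k$ with $k>1$, we subdivide each edge $v$--$e_{uv}$ and $v$--$p_v$ into paths of length $k$ with weight-$0$ internal vertices. The districts must contain whole paths to reach the weight-carrying endpoints, and the weak and strong notions coincide on these subdivided structures near a center, as in \Cref{thm:hardness-tree}.

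The main obstacle is ruling out unintended districts. Since the problem only requires connectivity plus radius, a district need not be a star centered at an original vertex. A star centered at $v$ could be only partially taken, and with subdivisions a district could be centered at an interior path vertex and straddle two original vertices. I would handle this by a counting argument: every district must collect $w_1\ge\Delta$, and the only way to gather $\Delta$ units within radius $k$ is to contain one original vertex together with its entire closed gadget. Here the choice $w_1(v)=0$ and the pendant weights are what should make this tight. If the argument fails, the fallback is to increase the pendant weight to $M\Delta$ and set $B$ accordingly, so that owning $p_v$, which is reachable only through $v$, is mandatory.
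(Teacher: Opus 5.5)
Your proposal is correct and is essentially the paper's proof: reduce from Maximum Independent Set, subdivide every edge, pad each original vertex up to degree $\Delta$ so that a threshold star at $v$ must absorb all of its incident subdivision vertices (so adjacent centers conflict), replace the gadget edges by length-$k$ paths for general radius, and use $n=O(N^2)$ to turn the $N^{1-\delta}$ hardness of independent set into $n^{1/2-\delta}$. The differences are cosmetic: the paper puts unit weight on every vertex, uses $\Delta-\deg(v)$ unit filler leaves, and sets $B=\Delta+1$ (resp.\ $B=1+k\Delta$), whereas you put $w_1$ only on the $e_{uv}$ and a single weighted pendant. Your weighting settles the obstacle you flagged cleanly, because the weight-carrying vertices are pairwise at distance at least $2k$ and the only vertices within distance $k$ of two of them are original vertices, so for $\Delta\ge 2$ no radius-$k$ ball around any other vertex reaches the threshold except at pendants of isolated vertices.
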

\begin{proof}
We show hardness and hardness of approximation for $k=1$ by using the same reduction from maximum independent set as in~\cite{Dharangutte2025-ry}. Specifically, given a graph $G=(V, E)$, the maximum independent set problem finds a subset of vertices $S\subseteq V$ with maximum cardinality such that no vertices in $S$ have edges between them. We now construct a graph $G'$ that starts with $G$. For each edge $(u, v)$ in $G$, split it into two edges in $G'$ -- create a new vertex $p_{uv}$ which connects to $u$ and $v$ and remove the edge $(u, v)$. In addition, suppose the maximum degree in $G$ is $\Delta$. For a vertex $v$ in $G'$ (corresponding to a vertex in $G$) whose degree $d(v)$ is less than $\Delta$, create $\Delta-d(v)$ additional neighbors (called filler nodes). We then set $w(v)=1$ for each vertex in $G'$, $B=\Delta+1$ and $k=1$. Without loss of generality, we can assume that $\Delta\geq 3$ and thus $B\geq 4$. This means that any district with weight at least $B$ in $G'$ must be at a vertex which corresponds to vertices in $G$. Further, this district will need to use all $\Delta$ neighbors in order to have enough total weight. This means that any two districts in $G'$ cannot have their centers to be adjacent in $G$. Thus any independent set $S$ in $G$ can be translated to a collection star districts in $G'$, with each district centered on a vertex in $S$ with all its $\Delta$ neighbors. Similarly, any districting solution will produce a set of vertices in $V$ that are independent.  
The total weight of the optimal districting problem of $G'$ is $(\Delta+1)|S|$ where $S$ is the maximum independent set in $G$. Similar to the argument in~\cite{Dharangutte2025-ry}, since maximum independent set cannot be approximated by a factor of $n^{1-\delta}$ for any $\delta>0$,  this means that our districting problem cannot be approximated better than $n^{1/2-\delta}$ in a general graph for any $\delta>0$, unless $\mathsf{P}=\mathsf{NP}$. Further, it is $\mathsf{NP}$-hard to solve the districting problem for graphs of bounded degree or on planar graphs, since the maximum independent set for these graphs remains $\mathsf{NP}$-hard. 

To extend the hardness for radius-$k$ districts, we simply replace the edges we add in constructing $G'$ with paths of length $k$. So, instead of connecting $u$ and $p_{uv}$ with a single edge, we connect them with a path of length $k$ (i.e., with $k-1$ new vertices on the path from $u$ to $p_{uv}$). We do the same for $v$ and $p_{uv}$ and for each vertex $v$ with their respective $\Delta - d(v)$ filler neighbors. We set $B = 1 + \radiusbd\Delta$. A valid district must be centered at a vertex $v$ corresponding to vertices in the original graph $G$ and include $\Delta$ paths from $v$, each of length $k$. Following the same arguments completes the proof.
\end{proof}

\begin{remark}
    \label{remark:relaxed}
    The reduction in \Cref{thm:hardness-max-packing} also works for the relaxed version of $B$-threshold districting. This is because for fixed $\delta$ and radius-$\radiusbd$ districts, we can always choose $B$ such that $\Delta < (1-\delta)B \leq \Delta + 1$, which forces the $B$-threshold districts to be the exact same as described in the proof. 
\end{remark}

The above results cover the hardness for $B$-threshold districting problem. Now, considering the $c$-balanced districting, a natural direction for generelization is considering three colors for balanced condition. We first start by defining the problem and then prove our hardness result for this version.

\begin{definition}[$c$-balanced district with three weight types]
Let each vertex $v \in V$ have three non-negative weights $w_1(v), w_2(v), w_3(v)$, and define the objective weight of a vertex as $w(v) = w_1(v) + w_2(v) + w_3(v)$. For a subset $S \subseteq V$, let $w_i(S) = \sum_{v \in S} w_i(v)$ for $i \in \{1,2,3\}$, and define $w(S) = \sum_{v \in S} w(v)$.

We say that $S$ is \emph{$c$-balanced} (with respect to three weight types) if
\[
\min\{w_1(S), w_2(S), w_3(S)\} \geq \frac{w(S)}{c}.
\]
\end{definition}

\begin{theorem}
\label{thm:three-colors}
The $c$-balanced districting problem with three weight types is NP-hard to approximate within any multiplicative factor.
\end{theorem}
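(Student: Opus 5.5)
We will prove this by a gap reduction from an NP-complete decision problem. We construct instances whose optimum is either $0$ or strictly positive. An algorithm with any finite multiplicative approximation factor $\alpha$ must return a nonempty (positive-weight) solution whenever $\mathrm{OPT}>0$, since $\alpha\cdot w(\mathcal{T})\ge \mathrm{OPT}>0$. It must return weight $0$ when $\mathrm{OPT}=0$. So such an algorithm would decide the NP-complete problem. It therefore suffices to show that deciding whether \emph{even one} $c$-balanced district (with three weight types, connected, bounded radius) exists is NP-hard.

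The plan is to reduce from 3-Dimensional Matching or, more conveniently, from Exact Cover / Subset-Sum-type problems encoded in the weight vectors. We would use a star (radius-$1$) gadget: a center $v$ and leaves $u_1,\dots,u_n$, so that connectivity and radius impose no constraint beyond choosing a subset of leaves. Note that in the two-type case the balance condition is a single linear inequality on the ratio, which is why a knapsack-style FPTAS oracle works (\Cref{lem:separation_balanced2subgraphsum}). With three types and the tightest choice $c=3$, $c$-balancedness forces exactly $w_1(S)=w_2(S)=w_3(S)$, i.e.\ two independent \emph{equality} constraints. Each leaf $u_i$ gets weight vector $(w_1,w_2,w_3)=(a_i,b_i,0)$ or similar, and the center gets a vector chosen so that a subset is balanced iff $\sum_{i\in S} a_i = T_1$ and $\sum_{i\in S} b_i=T_2$ for prescribed targets. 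Already a single equality, $w_1(S)-w_2(S)=0$, encodes Subset Sum: give the center weight $(0,T,M)$ and each leaf $(a_i,0,m_i)$. We then need $\sum_{S}a_i=T$ together with the third coordinate matching. The second equality is used to pin down the third coordinate, e.g.\ $w_3(u_i)=a_i$ and center third weight $0$, giving $w_3(S)=\sum a_i = T=w_2(S)$ automatically. Then a balanced district containing the center exists iff some subset sums to $T$. We would also make sure no district avoiding the center is balanced: a lone leaf has $w_2=0$, and leaves are pairwise non-adjacent, so every connected set of size at least $2$ contains the center. The empty/zero-weight degenerate case is excluded because the center forces positive weight. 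This gives NP-hardness of finding one district, hence of any approximation. For general $c\ge 3$ we would scale: add a fourth component of slack to the center so that the condition $\min_i w_i(S)\ge w(S)/c$ with the chosen totals is again tight.

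The main obstacle is making sure the gap argument is not defeated by unintended balanced districts. These include subsets not containing the center, or subsets where inequality (rather than equality) suffices when $c>3$, and in the latter case Subset Sum with inequalities becomes easy. For $c=3$ the equality is exact and the construction above should work directly. For general $c$ I would first try to restrict to $c=3$ (the theorem presumably allows choosing $c$). Otherwise I would add a large-weight padding vertex adjacent to the center, with vector chosen so that all three minima become simultaneously tight exactly at the target, forcing equalities again. A second point to check is that Subset Sum is only weakly NP-hard, but since the weights are integers given in binary this suffices for NP-hardness.
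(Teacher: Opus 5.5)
Your $c=3$ reduction is correct and complete. On a star with center $(0,T,0)$ and leaves $(a_i,0,a_i)$:
\begin{itemize}
\item $3$-balancedness means $w_1=w_2=w_3$;
\item a single vertex (center or leaf) is never balanced;
\item so a balanced district exists iff $\sum_{i\in I}a_i=T$ for some $I$;
\item the $0$-versus-positive gap then defeats any approximation factor.
\end{itemize}
This is the paper's strategy: every valid district must contain one special vertex, and validity forces a subset sum to hit an exact target. You are also more explicit than the paper about the graph and about districts that avoid the special vertex. Since $c$ is part of the input in the paper's problem definition, this already proves the statement read literally.

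The paper's construction, however, gives hardness for \emph{every fixed} $c\ge 3$, and your extension to $c>3$ does not work as sketched. (For $c<3$ no positive-weight district is balanced, so $c\ge 3$ is the only nontrivial range.) Your three proposed fixes each fail:
\begin{itemize}
\item There is no fourth weight coordinate in the three-type model.
\item The three minima cannot all be tight when $c>3$, since $3w(S)/c<w(S)$.
\item A padding vertex that is merely adjacent to the center can be left out of the district. That puts you back in your $c=3$ gadget at $c>3$, where the feasible range $T/(c-2)\le A\le (c-1)T/2$ is easy, as you yourself observe.
\end{itemize}

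The missing idea is that you do not need equalities. Two one-sided constraints pin $A$ exactly if the free weight raises one deficient coordinate while only inflating the total against another coordinate that is held fixed.

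The paper does this with a special vertex $((c-2)T,T,0)$ and leaves $(0,0,a_i)$. Then $w_2\ge w/c \iff A\le T$ and $w_3\ge w/c\iff A\ge T$, while $w_1=(c-2)T\ge T$ has slack.

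In your own gadget the same effect comes from putting the padding on the center itself. Take center $((c-3)T,T,0)$ with leaves $(a_i,0,a_i)$, so $w(S)=(c-2)T+2A$. This gives $w_2\ge w/c\iff A\le T$ and $w_3\ge w/c\iff A\ge T$, and at $A=T$ we have $w_1=(c-2)T\ge T$. This construction reduces to yours at $c=3$.
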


\begin{proof}
We construct an instance where each vertex has three weight components $w_1(v), w_2(v), w_3(v)$, and districts must be $c$-balanced.

The instance includes:
\begin{itemize}
    \item One special vertex with $w_1 = \frac{c-2}{c}$, $w_2 = \frac{1}{c}$, and $w_3 = 0$.
    \item A large number of other vertices, each with $w_1 = w_2 = 0$ and various positive values of $w_3$.
\end{itemize}

Any valid $c$-balanced district must include the special vertex and a subset of $w_3$-only vertices such that the total weight satisfies:
\[
\min\left\{ w_1(S), w_2(S), w_3(S) \right\} \geq \frac{w(S)}{c}.
\]

Since $w_1(S) + w_2(S) = \frac{c-1}{c}$ is fixed, and all additional weight comes from $w_3$, the only possible value of $w_3(S)$ that satisfies the balance condition is exactly $\frac{1}{c}$. Any smaller value causes $w_3$ to fall below the threshold; any larger value causes $w_2$ to fall below.

Thus, finding even a single valid district requires finding a subset of vertices whose $w_3$-weights sum exactly to $1/c$, which is equivalent to solving an exact subset sum problem, which is known to be NP-complete. Moreover, since any valid solution must be exact, even approximating the maximum total weight of valid districts within any factor would allow us to solve this exact subset sum instance.
\end{proof}

Finally, we show that if we consider compactness requriment to be diameter bounded subgraphs, then finding even a single district becomes NP-hard.


\begin{theorem}
\label{thm:hardness-diam-single}
    Given a graph $G$ with integer weights $w: V \rightarrow \mathbb{Z}_{\geq 0}$ on vertices, integers $B > 0$ and $k \geq 1$ the problem of finding a vertex subset $S$ such that $G[S]$ is connected, diam$(G[S]) \leq k$ 
    and $w(S) \geq B$ (a $B$-threshold district) is NP-hard. Moreover, the problem remains NP-hard if instead we require the weak diameter of $G[S] \leq k$.
\end{theorem}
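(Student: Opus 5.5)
We reduce from \textsc{Max Clique}, as announced in the introduction. Given an instance $(G=(V,E),K)$ asking whether $G$ has a clique of size at least $K$, we build a graph $G'$ with unit weights, set $k=1$ first, and take $B=K$. For strong diameter $1$ with $G'=G$, a vertex set $S$ with $G[S]$ connected and $\mathrm{diam}(G[S])\le 1$ is exactly a clique. So a $B$-threshold district exists if and only if $G$ has a clique of size at least $K$. The map is clearly polynomial, which settles the case $k=1$ for strong diameter. For weak diameter $1$, $d_G(u,v)\le 1$ for all $u,v\in S$ also means $S$ is a clique, and connectivity of $G[S]$ is then automatic. Both variants therefore follow at $k=1$.

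For general fixed $k\ge 2$ we must keep the correspondence ``diameter-$k$ set of large weight $\Leftrightarrow$ clique''. The plan is to subdivide and pad.
\begin{itemize}
\item Replace each original vertex $v$ by a heavy vertex of weight $1$.
\item Replace each edge $uv$ by a path whose internal vertices have weight $0$. Choose its length $\ell$ so that two original vertices are at distance $\ell\le k$ exactly when adjacent in $G$, and at distance at least $2\ell>k$ otherwise.
\end{itemize}
Taking $\ell=\lfloor k/2\rfloor+1$ gives $\ell\le k<2\ell$ for $k\ge1$ (for $k=1$, $\ell=1$). Any path between nonadjacent originals passes through another original vertex and hence has length at least $2\ell>k$. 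So any set $S$ of weak (hence also strong) diameter at most $k$ contains only pairwise adjacent original vertices, and $w(S)\ge K$ forces a clique of size $K$.

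The converse is the main obstacle. Given a clique $C$, the candidate district is $C$ together with all connecting subdivision paths between its members; it is connected and has weight $|C|$. Its strong diameter, however, includes distances between internal path vertices. Two midpoints of disjoint paths may be about $\ell+\ell=2\ell>k$ apart. The fix I would try first is to add, for every edge path, only the vertices needed, and to choose $S$ as $C$ plus the paths from one fixed clique vertex $c_0$ to the others (a spider). Every vertex is then within $\ell$ of $c_0$, but two leaves are $2\ell$ apart through $c_0$ unless the direct path between them is included, and including it reintroduces distant midpoints. If this fails, I would fall back on a gadget keeping the subdivided edges short: attach to each $uv\in E$ a single middle vertex (path length $2$) and use $k=2$, with nonadjacent originals at distance $4$. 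Odd $k$ and larger $k$ would then be handled by attaching pendant zero-weight paths of suitable length to make the padding uniform. Alternatively, it suffices to observe that the theorem asserts hardness for given input $k$, so the $k=1$ (and $k=2$) instance already proves NP-hardness of the problem with $k$ part of the input.
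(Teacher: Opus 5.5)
Your $k=1$ reduction is correct for both strong and weak diameter, and if $k$ is read as part of the input it already proves the literal statement, as you note at the end. That reading, however, trivializes the theorem. The paper disposes of $k=1$ as the clique problem and then proves NP-hardness for every \emph{fixed} $k\ge 2$, which is what its remark about separation oracles for bounded-diameter districts needs. For fixed $k\ge 2$ your proposal has a genuine gap.

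Your fallback gadget (one zero-weight middle vertex per edge, original edges removed, $k=2$) is exactly your subdivision construction with $\ell=\lfloor k/2\rfloor+1=2$. It fails in the clique $\Rightarrow$ district direction for the same reason:
\begin{itemize}
\item a middle vertex $m_{ab}$ has neighbors only $a$ and $b$, so it is at $G'$-distance $3$ from every original vertex outside $\{a,b\}$;
\item any connected $S$ containing three original vertices must contain some middle vertex, because every neighbor of an original vertex is a middle vertex.
\end{itemize}
Hence no set of strong \emph{or} weak diameter at most $2$ has weight $\ge 3$, whatever $\omega(G)$ is, so yes-instances with $K\ge 3$ become no-instances. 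Attaching pendant zero-weight paths cannot repair this, since a dead-end path never shortens any distance. Reusing the length-$2$ gadgets for larger $k$ would also fail in the other direction: non-adjacent originals with a common neighbor are at distance $4$, which is within $k$ once $k\ge 4$.

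The missing idea is to make the edge-gadget vertices close to \emph{each other}, not merely to subdivide edges. The paper's $k=2$ instance adds a weight-$0$ vertex $x_{(u,v)}$ per edge, adjacent to $u'$ and $v'$, and makes all the $x_{(u,v)}$ a clique. Then adjacent originals are at distance $2$ and non-adjacent ones at distance $3$. For a clique $C$, the set $\{v':v\in C\}\cup\{x_{(u,v)}:u,v\in C\}$ is connected with strong diameter $2$. For $k\ge 3$ the paper uses paths of length $\ell_1=\lfloor k/2\rfloor$ from originals to edge-vertices and paths of length $\ell_2=k+1-2\ell_1$ between every pair of edge-vertices. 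Adjacent originals are then at distance $2\ell_1\le k$ and non-adjacent ones at distance at least $2\ell_1+\ell_2=k+1$. The weak-diameter case uses the same instance, since strong diameter at most $k$ implies weak diameter at most $k$, and the district $\Rightarrow$ clique direction only uses $d_{G'}$.

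One caution if you adopt this scheme: it breaks at $k=3$ for strong diameter, where $\ell_1=1$ and $\ell_2=2$.
\begin{itemize}
\item $G'$ is then bipartite with the edge-vertices on one side. A strong-diameter-$3$ district containing five clique vertices must therefore contain $x_e$ and $x_f$ for disjoint clique edges $e,f$, and hence the midpoint between them.
\item That midpoint is at distance $4$ from the fifth clique vertex.
\item For strong diameter at $k=3$, a single hub vertex adjacent to all edge-vertices works instead. Adjacent originals are at distance $2$ and non-adjacent ones at distance at least $4$, and the clique's district together with the hub has strong diameter $3$.
\end{itemize}
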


\begin{proof}
    Note that for $k = 1$, the problem reduces to finding a clique (which is hard) so we focus on $k \geq 2$. We first consider the case of strong diameter and fixed $k=2$ and show reduction from decision version of clique problem. Recall for the clique problem, given a graph $G$ and integer $T$ deciding if there is a clique of size $\geq T$ is NP-hard. 
    
    Given $G = (V,E)$ and $T$, we construct the instance $G' = (V',E',w)$ for the weight threshold problem as follows. For each $v \in V$, create a corresponding vertex $v' \in V'$ with $w_{v'} = 1$. For each edge $(u,v) \in E$, create a vertex $x_{(u,v)}$ with $w_{x_{(u,v)}} = 0$ and add edges $(u',x_{(u,v)})$ and $(v',x_{(u,v)})$. Add edges between all $x_{(u,v)}$ vertices in $G'$ (make it a clique).  Set $B = T$, the weight threshold for district. Observe that for any $(u,v) \in E$, distance between $u'$ and $v'$ in $G'$ is 2 (through $x_{u,v}$) and for any non-edge $(u,w)$ in $G$, distance between $u'$ and $w'$ is at least 3 (distance is 3 iff $u$, $w$ has at least one incident edge in $G$). 
    Now, for a clique $S$ in $G$, $S' = \{ v' | v \in S\} \cup \{x_{(u,v)} | u \in S \;\text{and}\; v\in S\}$ forms a $B$-threshold district for $k=2$ with $w(S') = |S|$. 
    This is because, $G'[S']$ is connected, pairwise distance of all vertices corresponding to clique vertices is at most 2, distance of $u'$ and $x_{(u,v)}$ for $u$ and $v$ in $S$ is at most 2 and all $x_{(u,v)}$ vertices have edges between them. Now for the other direction let $S'$ be any $B$-threshold district for $k=2$ and $B = T$ in $G'$. Ignoring all the $x_{(u,v)}$ vertices in $S'$, the remaining vertices correspond to a clique in $G$ of size at least $T$. This is again because for vertices in $G'$ corresponding to vertices in $G$, distance is at most 2 iff there is an edge connecting them in $G$, and only these vertices have non-zero weight in $G'$. This completes the proof for $k=2$.


For $k \ge 3$, we modify the construction of $G'$ as follows. 
Fix integers
\[
\ell_1 = \left\lfloor \frac{k}{2} \right\rfloor 
\qquad \text{and} \qquad 
\ell_2 = k+1 - 2\ell_1.
\]
Observe that $2\ell_1 \le k$ and $2\ell_1 + \ell_2 = k+1$.

For each vertex $v \in V(G)$, create a vertex $v' \in V(G')$ with weight $w_{v'} = 1$. 
For each edge $(u,v) \in E(G)$, create a vertex $x_{(u,v)}$ with weight $w_{x_{(u,v)}} = 0$. 
Connect $u'$ to $x_{(u,v)}$ by a path of length $\ell_1$, and connect $v'$ to $x_{(u,v)}$ by a path of length $\ell_1$. 
All internal vertices on these paths are new vertices of weight $0$, and all such paths are internally vertex-disjoint.
Furthermore, for every unordered pair of distinct edge-vertices 
$x_{(u,v)}$ and $x_{(u',v')}$, 
connect them by a path of length $\ell_2$, whose internal vertices are new and have weight $0$. 
All these paths are internally vertex-disjoint from one another and from the previously constructed paths. 
Set the weight threshold $B = T$.

We now verify the distance properties. 
If $(u,v) \in E(G)$, then
\[
d_{G'}(u',v') = 2\ell_1 \le k.
\]
If $(u,v) \notin E(G)$, then any path from $u'$ to $v'$ must pass through some edge-vertex adjacent to $u'$ and some edge-vertex adjacent to $v'$, and therefore has length at least
\[
\ell_1 + \ell_2 + \ell_1 = 2\ell_1 + \ell_2 = k+1,
\]
implying $d_{G'}(u',v') > k$.

Consequently, pairs of vertices corresponding to edges of $G$ have distance at most $k$, whereas pairs corresponding to non-edges have distance strictly greater than $k$. The remainder of the proof proceeds as in the $k=2$ case.

    Now, observe that same instance and argument also hold for the case when weak diameter is restricted to be at most $k$. This is because, for a clique in $G$, we have a $B$-threshold district in $G'$ with strong diameter at most $k$, which implies a $B$-threshold district for weak diameter at most $k$. Also, for any connected $B$-threshold district of weak diameter at most $k$ and weight $ \geq T$, the vertices with non-zero weights correspond to a clique of size $\geq T$ in $G$.
\end{proof}

Observe that the hardness of finding just one district implies the hardness for separation oracle (\Cref{eq:violating-district}) on general graphs. This is because separation oracle requires reporting a violating constraint, which is a $B$-threshold district but with small values of sum of dual variables. For $y_v = 0$ for all $v$, the same reduction above gives us the hardness result. 

\section{Conclusion and Open Problems}

In this paper, we investigated several variants of packing compact subgraphs into a larger graph under balancedness/threshold composition constraints. We extend the work of ~\cite{Dharangutte2025-ry}, who focused only star districts, by studying districts of bounded radius. 

Our work provides algorithmic results for two variants of the problem. On one hand, we design an $O(1)$-approximation algorithm under exact $c$-balanced and $B$-threshold districting requirements on apex-minor-free graphs.
This is done by solving a linear program and then applying the rounding framework.
On the other hand, we obtain a PTAS on apex-minor-free graphs by relaxing the constraints, allowing $(1+\delta)c$-balancedness and $(1-\delta)B$ threshold parameters, for any $\delta>0$. This is achieved by dynamic programming on tree decomposition with Baker's method.

Interestingly, we do not immediately see a clean path to get the best of the two worlds --- a PTAS on apex-minor-free graphs with exact $c$-balanced or $B$ threshold constraints.
The rounding framework leaves an $\Omega(1)$ integrality gap, which seems very hard to overcome (as discussed in~\cite{Dharangutte2025-ry}).

On the other hand, for the dynamic programming approach,
The main bottleneck appears to be maintaining intermediate states that encode multiple partial districts with exact composition requirements simultaneously. 
Our algorithm requires ``state blow-up'' and  ``trimming'' procedures at each DP transition step.
The trimming step maintains an invariant that specifically holds for only one partial district in the description of \Cref{eq:subgraphsum2}. Similar to the 3-color-balancedness hardness result (\Cref{thm:three-colors}),
the current trimming algorithm does not naturally extend well to \Cref{lem:trimming} and does not guarantee \Cref{eq:subgraphsum2}. Clearly, if we choose not to trim at all, the size of the state space grows exponentially with $n$, thereby yielding an intractable algorithm.

However, this problem is not as desperate as we see in \Cref{thm:three-colors}. 
One example is that we can obtain a PTAS for \emph{grid graphs} by applying Baker's idea \emph{twice}: we split the grid into $t^2=O(k/\epsilon)\times O(k/\epsilon)$ square subgrids and consider all $t^2$ possible shifts. Since each connected component has only a constant number $\Theta(t^2)=O_{k, \epsilon}(1)$  of vertices, there is a constant-time algorithm for obtaining the optimal solution on each piece, leading to a PTAS overall.  We are therefore optimistic about the following conjecture.

\begin{conjecture}
There exists a PTAS for packing $c$-balanced or $ B$-threshold districts on apex-minor-free graphs.
\end{conjecture}

\paragraph{Efficient Separation Oracles for Various Compact Subgraph Types.}

Our current rounding framework handles exact balancedness constraints only for districts of bounded \emph{strong radius}. In contrast, the relaxed PTAS relies on districts of bounded weak radius. 

\begin{question}
Is the correlation ratio bounded by a constant for bounded weak-radius districts?
\end{question}

\paragraph{Beyond Apex-Minor-Free Graphs.}
Our current techniques rely on dynamic programming over bounded-treewidth graphs. Extending these results to broader graph classes will likely require fundamentally new ideas. In particular, it is natural to ask whether similar approximation guarantees can be achieved for larger graph classes, such as bounded-expansion graphs.

\begin{question}
Does there exist an efficient separation oracle for bounded strong-radius districts on bounded-expansion graphs?
\end{question}

\paragraph{Graphs Parametrized by Worst-Case Correlation Ratio.}
In this paper, as well as in~\cite{ChanH12,Dharangutte2025-ry}, we have seen rounding algorithms whose approximation guarantee is closely related to the correlation ratio.
Roughly speaking, this correlation ratio describes how districts satisfying a certain composition requirement interact with one another in the worst-case vertex-weight assignment.
From a purely graph-theoretical perspective, it gives a flavor of per-graph-class analysis of approximation ratios.
For example, in \Cref{sec:rounding} we have shown that in planar graphs and bounded expansion graphs have a constant worst-case correlation ratio. In~\cite{Dharangutte2025-ry} it shows an $\Omega(\sqrt{n})$ lower bound ratio on general graphs and a $\Omega(1)$ lower bound ratio on planar graphs.
It is not hard to verify (using balanced separators) that on constant treedepth graphs the ratio is $O(1)$ as well and on bounded treewidth and pathwidth graphs the ratio is $O(\log n)$, but we do not know if this polylogarithmic bound is tight.
We are interested in connecting this parameter to other graph parameter and classes in a reversed way, perhaps leading to universal optimal (per-graph topology) analysis on approximation ratios. Finally, 
we wonder whether there are any interesting observations that can be made from the parametrization of this worst-case correlation ratio.

\begin{question}
Are there interesting results when defining the graph classes according to the correlation ratio?
\end{question}

\bibliographystyle{alpha}
\bibliography{districting,cover}
\appendix

\section{Proofs and details of \Cref{sec:separation}}
\subsection{Preliminary on Tree Decomposition}
Given two sets $A$ and $B$, if they are disjoint, we write $A\uplus B$ as their union. 
More generally, for a family of sets $\partition = (P_0,\dots,P_m)$ and $S$, if $\partition$ is a partition of $S$, we write $S = \uplus_i P_i$.

Given an undirected graph $G = (V, E)$, a \emph{tree decomposition} of $G$ consists of a pair $(\mathbb{T}= (V_\mathbb{T}, E_\mathbb{T}), \mathcal{X} = \{X_t\}_{t\in V_\mathbb{T}})$, where $\mathbb{T}$ is a tree, and each node $t\in V_\mathbb{T}$ has a subset $X_t\subseteq V$ called bag such that the following conditions hold.
\begin{itemize}
    \item $\cup_{t\in V_\mathbb{T}} X_t = V$
    \item For all $(u,v)\in E$, there is a $t\in V_\mathbb{T}$ so that both $u, v\in X_t$, and
    \item For any $v\in V$, the subgraph of $\mathbb{T}$ induced by the set $\{t: v\in X_t\}$ is connected.
\end{itemize}
The width of a tree decomposition is $\max_{t\in V_\mathbb{T}}|X_t|-1$.  The treewidth of $G$ is the minimum width over all tree decomposition of $G$, denoted as $tw(G)$.

Moreover, a tree decomposition $\mathbb{T}$ is a \emph{nice} if $\mathbb{T}$ is a binary tree rooted at some node $root$ with $X_{root} = \emptyset$, and each node is of one of the following kinds:
\begin{itemize}
    \item Leaf node $t$ is a leaf of $\mathbb{T}$, and $X_t = \emptyset$.
    \item Introduce vertex node $t$ with $v\in V$ has only one child $s$ and $X_t = X_s\uplus \{v\}$.  We also say that $t$ introduces vertex $v$.
    \item Forget vertex node $t$ with $v\in V$ has one child $t'$ where $X_t\uplus \{v\} = X_{t'}$.
    \item Introduce edge node $t$ with $(u,v)\in E$ has only one child $t'$ and $\{u,v\}\subseteq X_t = X_{t'}$.
    \item Join node $t$ has two children $t_1$ and $t_2$ so that $X_{t_1} = X_{t_2} = X_t$.
\end{itemize}
Given $t\in V_\mathbb{T}$, let $\mathbb{T}_t$ be the subtree of $\mathbb{T}$ rooted at $t$ with node set $V_{\mathbb{T}_t}$.  We denote $G_t = (V_t, E_t)$ as the subgraph of $G$ where the vertex set is $V_t = \cup_{t'\in V_{\mathbb{T}_t}}X_{t'}\subseteq V$. Similarly, for any subset $U\subseteq V$, let $U_t := U\cap V_t$. 

\subsection{Proof of \Cref{lem:allsubsetsum}}\label{app:allsubsetsum}

\begin{proof}[Proof of \Cref{lem:allsubsetsum}]
We use a standard dynamic programming on tree decomposition~\cite{Dey2024-jg} and utilize the shortest path tree for the strong radius constraints.  We maintain a collection of weights in the DP table to approximate solution restricted to the current subtree.  For any $\gamma\in V$, let $\mathcal{U}_\gamma$ be the set of all valid global solutions, i.e., all connected subgraphs $U \subseteq V$ containing $\gamma$ with strong radius at most $\radiusbd$, and 
$W_\gamma^* = \{w(U): U\in \mathcal{U}_\gamma\}.$  Note that $W^* = \cup_\gamma W_\gamma^*$.  
We modify the nice tree decomposition by adding $\gamma$ to all bags.

\paragraph{Dynamic programming states}
We run a dynamic programming algorithm over the tree decomposition. The DP table $D$ is indexed by $(t,s)$ with a node $t\in V_\mathbb{T}$ and a \emph{trace} $s=(\partition, \rho, \pi)$:
\begin{itemize}
\item The current node $t \in V_\mathbb{T}$ in the tree decomposition.
\item A partition $\partition = (P_0,\dots,P_L)$ of the bag $X_t = \cup_{l = 0}^L P_l$. Here $P_0$ is the set of unselected vertices and selected $P_l$ with $l > 0$ correspond to the intersection of the bag with connected components of the partial solution below $t$.
\item A distance profile $\rho: X_t\setminus P_0 \to \{0, \dots, \radiusbd\}$. $\rho(u)$ represents the guessed shortest-path distance from the center $\gamma$ to selected vertex $u$ in a feasible solution $G[U]$.
\item A parent status $\pi: X_t\setminus P_0 \to \{0, 1\}$. $\pi(u)=1$ indicates that $u$ has found a parent $v$ in the partial solution such that $\rho(u) = \rho(v) + 1$. This helps to verify the guessed distance profile.
\end{itemize}

For any $U \in \mathcal{U}_\gamma$, the \emph{trace} of $U$ on the bag $X_t$ is $s = (\partition, \rho, \pi)$ so that 1) $\partition$ is the partition of $X_t$ induced by $U_t = U \cap V_t$ so that the connected components $C_1, \dots, C_L$ of $G_t[U_t]$ satisfy $P_l = C_l \cap X_t$ for all $l$.  2) $\rho$ and $\pi$ induce a valid shortest path tree on $G[U]$: $\rho(u) = d_{G[U]}(\gamma, u)$ for all $u \in X_t$ which is always bounded by $\radiusbd$ and $\pi(u) = 1$ if and only if $u$ has a neighbor $v$ in $U_t$ such that $d_{G[U]}(\gamma, u) = d_{G[U]}(\gamma, v)+1$.  Conversely, $U$ is \emph{compatible} with $s$ if the trace of $U$ is $s$.

We define the feasible weight set $W^*_\gamma(t, s)$ as the collection of weights of the restrictions of all valid global solutions that match this trace:
$$W^*_\gamma(t, s) := \{w(U \cap (V_t\setminus X_t)) : U \in \mathcal{U}_\gamma \text{ and the trace of $U$ on $X_t$ is $s$} \}$$
Our goal is to ensure that each entry $D(t, s)$ contains an $(\ell, \epsilon_t)$-trimmed subset of $W^*_\gamma(t, s)$.
Additionally, the root $D(r, \{\gamma\}, 0, 1)$ is an $(\ell, \epsilon_t)$-trimmed subset of $W^*_\gamma(r, \{\gamma\}, 0,1 ) = W_\gamma^*$.

\paragraph{Dynamic programming transitions}
We compute the table $D$ using a post-order traversal of the tree decomposition (processing children before parents) with $\gamma$ added to all bags. We initialize all entries of $D$ to the empty set. We use the notation $Z \unionassign Z'$ to denote the update $Z \gets Z \cup Z'$.

\begin{enumerate}
\item If $t$ is a leaf with $X_t = \{\gamma\}$, the only valid partial solution consists solely of the center $\gamma$. We initialize the entry with zero weight: $D(t, \{\gamma\}, \rho(\gamma) = 0, \pi(\gamma) = 1) \unionassign \{0\}$.
\item If $t$ introduces a vertex $v^* \neq \gamma$ with child $t'$, we iterate over all trace $s$ for $t$. Let $s'$ be the restriction of $s$ for $t'$ by removing $v^*$. We update $D(t, s) \unionassign D(t', s')$ when $s$ satisfies one of the following conditions:
\begin{enumerate}
    \item $v^* \in P_0$ (i.e., $v^*$ is not selected).\label{proc:case_intro_vertex1}
    \item $\{v^*\} \in \partition$ forms a singleton component. In this case, we require $\pi(v^*) = 0$ (as the vertex has not found a parent yet) and $\rho(v^*) \in \{0, \dots, \radiusbd\}$.\label{proc:case_intro_vertex2}
\end{enumerate}
\item If $t$ forgets vertex $v^*$ with child $t'$, we verify the parent status and update the weights before removing the vertex. We iterate over every trace $s'$ for node $t'$ and project it to a trace $s$ for $t$ by removing $v^*$.
\begin{enumerate}
    \item If $v^* \in P_0'$ is not selected, we update $D(t, s) \unionassign D(t', s')$.\label{proc:case_forget_vertex1}
    \item If $v^*$ is selected with $v^* \in P_l'$ for some $l \neq 0$, we proceed only if $|P_l'| > 1$ and $\pi'(v^*) = 1$. Since no new edges will connect to $v^*$, we cannot forget a singleton component (as it would become permanently disconnected) and $v^*$ must already have a defined parent. If valid, we update:\label{proc:case_forget_vertex2}
$$D(t, s) \unionassign \{z' + w(v^*) : z' \in D(t', s')\}.$$
\end{enumerate}

\item If $t$ introduces an edge $(u^*, v^*)$ with child $t'$, we iterate over every trace $s'$ in the child table $D(t', \cdot)$ and generate new traces for $t$ based on the following cases:
\begin{enumerate}
    \item If at least one endpoint is not selected, $u^* \in P_0'$ or $v^* \in P_0'$. The edge is not part of the partial solution $G[U]$. The trace remains unchanged $s = s'$, and we update:
    $D(t, s) \unionassign D(t', s')$.\label{proc:case_add_edge1}
    \item When both endpoints are selected, $u^*, v^* \notin P_0'$, and the edge connects the two vertices. Let $\partition$ be the partition obtained by merging the components containing $u^*$ and $v^*$ in $\partition'$. We proceed only if the distance profile is consistent, and consider all valid updates to the parent status:\label{proc:case_add_edge2}
    \begin{itemize}
        \item Triangle Inequality Check: We require $|\rho'(u^*) - \rho'(v^*)| \le 1$. Otherwise, the trace is invalid because the edge contradicts the guessed shortest-path distances.
        \item Parent Status Update: We generate three $\pi$ from $\pi'$: $\pi = \pi'$ (No update), $\pi(u^*) = 1$ if $\rho'(u^*) = \rho'(v^*) + 1$ ($v^*$ is parent of $u^*$), and $\pi(v^*) = 1$ if $\rho'(v^*) = \rho'(u^*) + 1$ ($u^*$ is parent of $v^*$).  For each valid $\pi$, we update:
        $$D(t, \partition, \rho', \pi) \unionassign D(t', \partition', \rho', \pi').$$
    \end{itemize}
\end{enumerate}
\item \label{proc:case_join}
When vertex $t$ joins children $t_1$ and $t_2$, we iterate over all pairs of traces $s_1 = (\partition^1, \rho^1, \pi^1)$ for $t_1$ and $s_2 = (\partition^2, \rho^2, \pi^2)$ for $t_2$. A pair is \emph{compatible} if the set of selected vertices is identical ($P_0^1 = P_0^2 = P_0$) and the distance profiles match ($\rho^1 = \rho^2$).

For each compatible pair, we derive a merged trace $s = (\partition, \rho, \pi)$ as the following:
\begin{itemize}
    \item The partition $\partition$ is formed by the transitive closure of the connectivity in $\partition^1$ and $\partition^2$. Two vertices $u, v \in X_t \setminus P_0$ belong to the same component in $\partition$ if and only if they are connected by a sequence of edges existing within the components of $\partition^1$ or $\partition^2$.
    \item The distance profile is set to $\rho = \rho^1$ (which equals $\rho^2$), and the parent status become $\pi(u) = \max\{\pi^1(u), \pi^2(u)\}$ for all $u \in X_t \setminus P_0$. This ensures a vertex is marked as having a parent if it found one in either the left or right branch.
\end{itemize}
We update the table entry with the sum of weights:
$$D(t, s) \unionassign \{ z_1 + z_2 : z_1 \in D(t_1, s_1), z_2 \in D(t_2, s_2) \}.$$
\end{enumerate}

After computing each entry $D(t,s)$, we apply $\text{Trim}_{\ell,\delta}$ with $\delta = \epsilon/|V_\mathbb{T}|$ on $D$.  Specifically, we set a geometric grid $\{z\in \times_{i = 1}^d[0,R_i]: z_i \in \{0, 1, e^{\delta}, e^{2\delta},\dots\}\}$, and we only keep at most one weight within each bucket from $D(t,s)$ that maximizes the linear function $\ell$. As any two weights in the same bucket differ by at most a factor $e^{\delta}$ in all coordinates, keeping the $\ell$-maximizer ensures the result is an $(\ell, \delta)$-trimming of the original set.

At the root with $X_{root} = \{\gamma\}$, we set $W_\gamma\gets D(r, (\emptyset, \{\gamma\}),0,1)$, and return 
$W_{out} = \cup_{\gamma\in V} W_\gamma$ which union of these feasible weights with all possible choices of the center.

\paragraph{Correctness} 
We use induction on the tree decomposition to show that for every node $t$ and trace $s$, the DP table entry $D(t, s)$ contains an $(\ell, \epsilon_t)$-trimming of the feasible set $W^*_\gamma(t, s)$, where $\epsilon_t := \frac{|V_{\mathbb{T}_t}|}{|V_{\mathbb{T}}|}\epsilon$.

\begin{enumerate}
    \item When $t$ is a leaf, $X_t = \{\gamma\}$. The only valid partial solution is the singleton $U_t = \{\gamma\}$ with distance $\rho(\gamma)=0$ and parent status $\pi(\gamma)=1$ (by definition for the root of the solution). This satisfies the base case, because $U \cap (V_t\setminus X_t) = \emptyset$ for all $U\in \mathcal{U}_\gamma$.

    \item When introducing $v^*$, we iterate over all possible extensions of the child's trace $s'$. 
    If $v^*$ is not selected, the trace simply extends with $v^* \in P_0$. 
    If $v^*$ is selected (\Cref{proc:case_intro_vertex2}), we enumerate all possible distance guesses $r$ for $\rho(v^*)$ which ensure the completeness. We set $\pi(v^*) = 0$ since $v^*$ has no edges yet.
    \item 
    When forgetting $v^*$, we finalize its status and update the weights. Since $v^*$ is removed from the active bag, it cannot be adjacent to any vertices introduced in the future; thus, its parent status $\pi(v^*)$ and distance $\rho(v^*)$ are final. For completeness, consider any valid global solution $U \in \mathcal{U}_\gamma$ containing $v^*$. Because all neighbors of $v^*$ in $G[U]$ are in $V_{t'}$, the trace $s'$ for the child $t'$ must satisfy $\pi'(v^*) = 1$ and $\rho'(v^*) \le \radiusbd$. Moreover, $U$ is connected which ensures $|P_i'| > 1$. Therefore, the trace of $U$ is in the DP table. Finally, if $v^*$ is selected, we add its weight $w(v^*)$ to the DP value, ensuring that the weight of any selected vertex is counted exactly once.
    \item When introducing an edge $(u^*, v^*)$, if either endpoint is unselected (\Cref{proc:case_add_edge1}), we ignore the edge and copy the child's weights. If both are selected (\Cref{proc:case_add_edge2}), we merge their respective components. We then check the triangle inequality and enumerate all valid updates to the parent vector $\pi$. This ensures that if $(u^*, v^*)$ establishes the parent relationship for either $u^*$ or $v^*$ in the optimal solution, the corresponding trace is captured.
    \item  When joining children $t_1$ and $t_2$, we merge pairs of compatible traces sharing the same selected sets and distance profiles. For the approximation error, we sum the weights in sets from the children with $\epsilon_{t_1}$ and $\epsilon_{t_2}$, and then run $\text{Trim}_{\ell,\delta}$.  By \Cref{lem:trimming}, the resulting error bound satisfies $\epsilon_t \ge \epsilon_{t_1} + \epsilon_{t_2} + \delta$.
\end{enumerate}

Finally, at the root node $X_{root} = \{\gamma\}$, every selected vertex $v \neq \gamma$ is introduced and forgotten. By the validity check in~\Cref{proc:case_forget_vertex2}, $v$ has a parent $u$ with $\rho(u) = \rho(v) - 1$. By induction, this parent pointer chains back to the root $\gamma$ with $\rho(\gamma)=0$ which proves that a path of length $\rho(v)$ exists. Additionally, the triangle inequality check ensures that $\rho(v)$ corresponds to the true graph distance $d_{G[U]}(\gamma, v)$. Finally, since we forbid forgetting singleton components, every forgotten vertex is merged into the component containing the center $\gamma$. Thus, the final entry $D(root, \{\gamma\}, 0, 1)$ correctly represents the weights of connected subgraphs rooted at $\gamma$ with a strong radius of at most $\radiusbd$.

\paragraph{Running time}
The DP table is indexed by the tuple $(t, s)$. The tree decomposition has $O(n)$ nodes. For each node $t$, the number of valid partitions $\partition$ of the bag (size at most $tw+1$) is bounded by the Bell number, $2^{O(tw \log tw)}$. The distance profile $\rho$ has at most $(\radiusbd+1)^{tw+1}$ possibilities, and the parent status $\pi$ has $2^{tw+1}$ possibilities.

For any node $t$ and trace $s$, the number of weights $z$ is at most the size of the geometric grid, $O\left(\delta^{-d} \prod_{i=1}^d \ln R_i\right) = O\left(n^d \epsilon^{-d} \prod_{i=1}^d \ln R_i\right)$, as $\delta = \epsilon/|V_\mathbb{T}| = O(\epsilon/n)$.  Combining these factors, the maximum size of the table $D(t, \cdot)$ at any node $t$, denoted by $|D|_{max}$, is:
$$|D|_{max} = 2^{O(tw \log tw)} \cdot (\radiusbd+1)^{tw+1} \cdot 2^{tw+1} \cdot \left( \frac{n}{\epsilon} \right)^d \prod_{i=1}^d \ln R_i.$$

The running time of each DP transition is dominated by the join operation, which takes time quadratic in the table size, $|D|^2_{max}$.  Finally, as there are $O(n)$ nodes in the tree decomposition and we iterate over all $n$ possible centers $\gamma$, the total running time is 
$$O(n^2 \cdot |D|_{max}^2) = O\left( n^{2+2d} \epsilon^{-2d} (\radiusbd+1)^{2(tw+1)} 2^{O(tw \log tw)} \left(\prod_{i=1}^d \ln R_i\right)^2 \right).$$
\end{proof}

\subsection{Additional Lemmas}

\begin{proof}[Proof of \Cref{lem:separation_balanced2subgraphsum}]
Let $W^*$ be the set of connect subset sum, $W_1$ be a $(\ell_1, \varepsilon)$-trimmed $W^*$, and $W_2$ be a $(\ell_2, \varepsilon)$-trimmed $W^*$.  We prove that the final list $W_1\cup W_2$ contains a weakly violating $c$-balanced district $S$.

Let $w^* = w(S_\mathrm{max})\in \mathbb{Z}_{\ge 0}^4$. Without loss of generality, assume that $w_2^*\ge w_1^*$. Then, there exists a district $S\in W_1\subseteq W^*$ with $w' = w(S)$ that $\epsilon$-approximates and $\ell_1$-dominates $w^*$.  Therefore, $\ell_1(w') = (c-1)w_1' - w_2' \ge \ell_1(w^*) = (c-1)w^*_1 - w^*_2\ge 0$ by $\ell_1$-dominance.  Additionally, 
\begin{align*}
(c-1)w_2' - w_1' & \ge e^{-\varepsilon}(c-1)w_2^* - e^{\varepsilon}w_1^*\tag{$\varepsilon$-approximated}\\
&\ge \left(e^{-\varepsilon}(c-1) - e^{\varepsilon}\right)w_1^* \tag{$w_2^*\ge w_1^*$}\\
&\ge 0 \tag{whenever $e^{2\varepsilon}<(c-1)$}
\end{align*}
The above two show that $S$ is $c$-balanced. As $w_4'/w_4^*\in [e^{-\varepsilon}, e^{\varepsilon}]$, we can show that $S$ is a weakly violating constraint:
\begin{align*}
w_4'&\le e^{\varepsilon}\cdot w_4^*\tag{$\varepsilon$-approximated}\\
&\le e^{\varepsilon}\cdot (1-2\epsilon) \cdot  w_3^* \tag{$S_\mathrm{max}$ is strongly violating}\\
&\le e^{\varepsilon}\cdot (1-2\epsilon) \cdot e^\varepsilon w_3'\tag{$\varepsilon$-approximated}\\
&\le e^{2\varepsilon} \cdot (1-2\epsilon) \cdot w_3' \le (1-\epsilon) \cdot w'_3. \tag{small enough $\varepsilon > 0$}
\end{align*}
Finally, $w_3'\ge e^{-\varepsilon}\cdot w_3^*
\ge \frac{1}{2} w_3^*$,
certifying that the output $S$ satisfies all the constraints from the lemma statement.
\end{proof}

The following lemma shows that the parameter $\epsilon$ decay smoothly under composition and addition.
\begin{lemma}\label{lem:trimming}
    Given $\epsilon, \epsilon_1, \epsilon_2\ge 0$ and $\ell$, if $W_1'$ is an $(\ell, \epsilon_1)$-trimmed of $W_1$ and $W_2'$ is an $(\ell, \epsilon_2)$-trimmed of $W_2$, if $W''$ is an $(\ell, \epsilon)$-trimmed of $W_1'+W_2':= \{z_1'+z_2': z_1'\in W_1', z_2'\in W_2'\}$, $W''$ is an $(\ell, \epsilon+\max (\epsilon_1, \epsilon_2))$-trimmed of $W_1+W_2$.
\end{lemma}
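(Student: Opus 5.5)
We argue directly from the definition of an $(\ell,\epsilon)$-trimming: every target vector must have a representative in the trimmed set, and we get it by chaining representatives through the two levels of trimming. Fix an arbitrary $\vz = \vz_1+\vz_2 \in W_1+W_2$ with $\vz_1\in W_1$ and $\vz_2\in W_2$.

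First, since $W_1'$ is an $(\ell,\epsilon_1)$-trimming of $W_1$, we pick $\vz_1'\in W_1'$ that $\epsilon_1$-approximates and $\ell$-dominates $\vz_1$. Likewise we pick $\vz_2'\in W_2'$ that $\epsilon_2$-approximates and $\ell$-dominates $\vz_2$. Then $\vz_1'+\vz_2'\in W_1'+W_2'$. Because $W''$ is an $(\ell,\epsilon)$-trimming of $W_1'+W_2'$, we take $\vz''\in W''$ that $\epsilon$-approximates and $\ell$-dominates $\vz_1'+\vz_2'$. We claim that $\vz''$ is the required representative of $\vz$.

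\emph{Dominance.} This follows from linearity of $\ell$:
\[
\ell(\vz'')\ge \ell(\vz_1')+\ell(\vz_2')\ge \ell(\vz_1)+\ell(\vz_2)=\ell(\vz).
\]

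\emph{Approximation.} This is the one step needing care: the multiplicative errors of the two summands must not add. Work coordinatewise; all entries are nonnegative. We have
\[
e^{-\epsilon_1}z_{1,i}\le z_{1,i}'\le e^{\epsilon_1}z_{1,i}
\quad\text{and}\quad
e^{-\epsilon_2}z_{2,i}\le z_{2,i}'\le e^{\epsilon_2}z_{2,i}.
\]
Setting $m=\max(\epsilon_1,\epsilon_2)$ and summing gives
\[
e^{-m}(z_{1,i}+z_{2,i})\le z_{1,i}'+z_{2,i}' \le e^{m}(z_{1,i}+z_{2,i}).
\]
So the sum incurs only a mediant-type error of $\max(\epsilon_1,\epsilon_2)$, not $\epsilon_1+\epsilon_2$. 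Composing with the $e^{\pm\epsilon}$ bound from $W''$ yields
\[
e^{-(\epsilon+m)}\le z''_i/z_i\le e^{\epsilon+m}.
\]

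\emph{Degenerate coordinates.} If $z_i=0$, then $z_{1,i}=z_{2,i}=0$. The sandwich bounds force $z_{1,i}'=z_{2,i}'=0$ and hence $z''_i=0$, so we read the ratio condition as satisfied in this case.

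Finally, membership: every element of $W''$ lies in $W_1'+W_2'\subseteq W_1+W_2$ whenever each trimmed set is a subset of its original. If the definition in use does not require this containment, the representative property above is all that is needed. The lemma follows.
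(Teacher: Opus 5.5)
Your proof is correct and matches the paper's: you chain representatives $\vz_1',\vz_2'$ and then $\vz''$, use linearity of $\ell$ for dominance, bound the sum's error by $\max(\epsilon_1,\epsilon_2)$ rather than $\epsilon_1+\epsilon_2$, compose with the $\epsilon$ error from $W''$, and get containment from $W''\subseteq W_1'+W_2'\subseteq W_1+W_2$. The paper first proves that $W_1'+W_2'$ is an $(\ell,\max(\epsilon_1,\epsilon_2))$-trimming of $W_1+W_2$ and then composes, which is the same argument. Your explicit handling of zero coordinates is a small piece of care the paper omits.
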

\begin{proof}[Proof of \Cref{lem:trimming}]
    We first show $W_1'+W_2'$ is a $(\ell, \max (\epsilon_1, \epsilon_2))$-trimmed of $W_1+W_2$.  $W_1'+W_2'\subseteq W_1+W_2$ is trivial.  For any $z_1 = (z_{1,1}, \dots, z_{1,d})\in W_1$ and $z_2 = (z_{2,1}, \dots, z_{2,d})\in W_2$, there exists $z_1'\in W_1'$ and $z_2'\in W_2'$ so that 
    $$\ell(z_1')\ge \ell(z_1), \ell(z_2')\ge \ell(z_2)\text{, and }\frac{z_{1,i}}{z_{1,i}'}, \frac{z_{2,i}}{z_{2,i}'}\in [e^{-\epsilon_1}, e^{\epsilon_1}]\text{ for all }i = 1,\dots,d.$$
    Hence, $z_1'+z_2'\in W_1'+W_2'$ and $z_1+z_2\in W_1+W_2$ satisfy 
    $$\ell(z_1'+z_2')\ge \ell(z_1+z_2)\text{ and }\frac{z_{1,i}+z_{2,i}}{z_{1,i}'+z_{2,i}'}\in [e^{-\max (\epsilon_1, \epsilon_2)}, e^{\max (\epsilon_1, \epsilon_2)}]$$
    If $W''$ is an $(\ell, \epsilon)$-trimmed of $W_1'+W_2'$, then $W''\subseteq W_1'+W_2'\subseteq W_1+W_2$, and by a similar argument we have $W''$ is $(\ell, \epsilon+\max(\epsilon_1, \epsilon_2))$-trimmed of $W_1+W_2$.
\end{proof}
\section{Proof of \Cref{cor:ptas_relax_bounded}}
\label{app:multisubsetsum}

We use dynamic programming to find the weights of vertex-disjoint districts. Similar to \cref{lem:allsubsetsum}, for relaxed composition constraints, we only need to compute an approximation of all partial districtings.




For each node $t \in V_\mathbb{T}$, we classify a district $U$ as \emph{open} if it intersects the current bag $X_t$, and closed if it is entirely contained in the already processed subgraph $V_t \setminus X_t$.  Unlike~\Cref{lem:allsubsetsum} tracking a single open district, we need to simultaneously maintain the state (or trace) of multiple open districts along with the total weight of all closed districts. Accordingly, each DP state is indexed by node $t$, and the {combined traces} of the open districts store the list of \emph{weight profiles}.

\subsection{Strong Radius Constraints}
\paragraph{Dynamic programming states}  The DP table $D$ is indexed by a pair $(t, \vs)$, where the \emph{combined trace} $\vs = (s_1, \dots, s_I)$ represents $I$ open districts $U_1,\dots,U_I$. Each individual trace $s_i = (\mathbb{P}_i, \rho_i, \pi_i, \gamma_i)$ with a center $\gamma_i$ similar to~\Cref{lem:allsubsetsum}:

For each open district $U_i$ and bag $X_t$, the trace $s_i$ includes a partition $\partition_i = (P_{i,1},\dots,P_{i,L_i})$ of disjoint subsets of $X_t$, where each $P_{i,l}$ corresponds to the intersection of the bag with a connected component of $U_i$ in the current subgraph $G_t[U_i]$. We denote the set selected vertices in the bag by $U_i$ as $P_i = \cup_{l} P_{i,l}$, and the set of the unselected as $P_0 = X_t \setminus \cup_{i} P_i$.  

To verify compactness and connectivity, we maintain a distance profile $\rho_i: P_i \to \{0, \dots, \radiusbd\}$ relative to a center $\gamma_i$, where $\rho_i(u)$ is the guessed shortest-path distance $d_{G[U_i]}(u,\gamma_i)$ to the center.  Additionally, we track a parent status $\pi_i: P_i \to \{0, 1\}$, where $\pi_i(u) = 1$ indicates if it has a parent $v$ in the partial solution with $\rho_i(u) = \rho_i(v) + 1$.

Given a valid districting solution $\mathcal{T} = \{U_1, \dots\}$ and a node $t$ in the tree decomposition, let $U_1, \dots, U_I$ denote the open districts relative to $t$, and let $\mathcal{T}_{closed} \subseteq \mathcal{T}$ denote the collection of closed districts. The \emph{weight profile} of $\mathcal{T}$ on the bag $X_t$ is 
$$w(\mathcal{T}|t) :=  \left(w(U_1\cap (V_t\setminus X_t)),\dots,w(U_I \cap (V_t\setminus X_t)),\sum_{U\in \mathcal{T}_{closed}}w(U)\right)\in \mathbb{Z}^{(I+1)\times d}$$ which aggregates the weights for the open districts and the sum of closed districts.  Moreover, the combined trace of $\mathcal{T}$ on the bag $X_t$ is $\vs = (s_1,\dots,s_I)$ so that the trace of $U_i$ on bag $X_t$ is $s_i$ with center $\gamma_i$ as defined in \Cref{app:allsubsetsum}.  Conversely, $\mathcal{T}$ is \emph{compatible} with $\vs$, if the trace of $\mathcal{T}$ is $\vs$.

Finally, we define the feasible weight profile set $W^\oplus(t, \vs)$ as the collection of weight profiles for compatible districting: 
$$W^\oplus(t, \vs) := \left\{ w(\mathcal{T}|t) : \text{$\mathcal{T}$ is valid districting and compatible with $\vs$ on bag $X_t$}\right\}$$

\paragraph{Dynamic programming transitions}
We describe the following procedure to update the table $D$ in the depth first search ordering for all node $t\in V_\mathbb{T}$.  The major difference is the forget node, while the rest is similar to \Cref{lem:allsubsetsum}. Recall that we use the notation $Z \unionassign Z'$ to denote the update $Z \gets Z \cup Z'$.

\begin{enumerate}
\item If $t$ is a leaf ($X_t = \emptyset$), we set the empty state with zero weight,$D(t, \emptyset) \unionassign \{0\}$
\item If $t$ introduces a vertex $v^*$ with child $t'$, we iterate over all $\vs$ for $t$. Let $\vs'$ be the restriction of $\vs$ for $t'$ by removing $v^*$. We update $D(t, \vs) \unionassign D(t', \vs')$ when $\vs$ satisfies one of the following conditions:
\begin{enumerate}
    \item $v^* \in P_0$ is not selected.
    \item When $v^*\in P_i$ is selected with $i\neq 0$ and $\{v^*\} \in P_{i,l}$ forms a singleton component, if $v^*\neq \gamma_i$, we require $\pi_i(v^*) = 0$ and $\rho_i(v^*) \in \{0, \dots, \radiusbd\}$. similar to~\Cref{proc:case_intro_vertex2}.  If $v^* = \gamma_i$, we require $\pi_i(v^*) = 1$ and $\rho_i(v^*) = 0$.
\end{enumerate}
\item 
If $t$ forgets vertex $v^*$ with child $t'$, we verify the parent status and update the weights before removing the vertex. We iterate over every trace $\vs'$ for $t'$ and project it to a trace $\vs$ for $t$ by removing $v^*$.

\begin{enumerate}
    \item If $v^*$ is not selected ($v^* \in P'_0$), we update 
    $D(t, \vs) \unionassign D(t', \vs')$ similar to~\Cref{proc:case_forget_vertex1} in~\cref{app:allsubsetsum}.
    \item If $v^*$ is selected in district $i^*$ (i.e., $v^*$ belongs to a component in $\partition'_{i^*}$), we first verify $\pi'_{i^*}(v^*) = 1$. Unlike~\Cref{proc:case_forget_vertex2} in~\cref{app:allsubsetsum}, as we have multiple open districts, consider the following three cases based on the structure of district $i^*$:
    \begin{itemize}
        \item If $\{v^*\} \in \partition'_{i^*}$ is a singleton component but the district has other disjoint components in the bag, $|\partition'_{i^*}| > 1$, we discard the state, because we cannot forget a component that has not yet merged with the rest of its district.
        \item If $v^* \in P'_{i^*,j}$ with $|P'_{i^*,j}| > 1$, we update the weights of district $i^*$
        $$D(t, \vs) \unionassign \{ \vz + w(v^*) \cdot \mathbf{e}_{i^*} : \vz \in D(t', \vs')\}$$
        which updates the weight of $i^*$-th open district.
        \item If $\{v^*\}$ is the only component of district $i$ in the bag, $\partition'_{i^*} = \{ \{v^*\} \}$, we first check the $\delta$-relaxed composition constraints~\cref{eq:def_relax_balanced,eq:def_relax_threshold}.  Once it's valid, we close the district and move the accumulated weight $z'_{i^*}$ to the closed accumulator $z_0$.
        $$D(t, \vs) \unionassign \{ \vz : z_0 = z'_0 + z'_{i^*}+w(v^*), z_{i^*} = 0, \text{ and }z_{\iota} = z'_\iota ,\forall \iota\neq i^*, \vz' \in D(t', \vs')\}$$
    \end{itemize}
\end{enumerate}

\item When $t$ introduces edge $(u^*,v^*)$ and has child $t'$, we merge the partition, verify the distance profile, and update the parent status.  For any state $\vs'$ in $D(t',\cdot)$,
\begin{enumerate}
    \item If $u^* \in P'_i$ and $v^* \in P'_{i'}$ with $i \neq i'$ or $i = i' = 0$, we simply ignore the edge and $D(t,\vs')\unionassign D(t',\vs')$.
    \item If $u^*, v^* \in P'_{i^*}$ are in the same district, we run the same procedure as \Cref{proc:case_add_edge2} in~\cref{app:allsubsetsum} to create $\vs$ and update $D(t, \vs) \unionassign D(t',\vs')$.
\end{enumerate}
\item When $t$ joins children $t_1$ and $t_2$, we iterate over compatible pairs $\vs^1, \vs^2$ in $D(t_1,\cdot)$ and $D(t_2,\cdot)$ respectively so that $P^1_i = P^2_i$, $\rho^1_i = \rho^2_i$, and $\gamma^1_i = \gamma^2_i$ for all $i$ and update $D(t, \vs)$ as \Cref{proc:case_join}.
\end{enumerate}

After computing the transitions for $D(t, \vs)$ at each node $t$, we apply a trimming step with $\delta_{alg} = \min\{\epsilon,\delta/3\}/|V_\mathbb{T}|$ to keep the size of the dynamic programming table polynomial as \Cref{app:allsubsetsum}.  The only difference is we can pick an arbitrary weight profile in each bucket.  At the root of the tree decomposition, $X_{root} = \emptyset$, and all valid districts are closed.  We output $D(root, \emptyset)$.

\paragraph{Correctness} We use induction to show that the DP table entry $D(t, \vs)$ contains an $\epsilon_t$-approximation of the feasible set $W^\oplus(t, \vs)$ with $\epsilon_t := \frac{|V_{\mathbb{T}_t}|}{|V_{\mathbb{T}}|}\epsilon$. 
The proof is similar to \Cref{lem:allsubsetsum}, and we only prove the approximation guarantee for the case of forgetting a vertex.

The structural checks (ensuring radius $\rho_i \le \radiusbd$, parent tracking $\pi_i = 1$, and proper connectivity of forgotten components) ensure no invalid districts are propagated or closed prematurely. By the induction hypothesis, the child node $t'$ contains an $\epsilon_{t'}$-approximation. The trimming step at node $t$ introduces an additional multiplicative error of at most $\delta$. Thus, the combined error is bounded by $\epsilon_{t'} + \delta$. Since $|V_{\mathbb{T}_t}| = |V_{\mathbb{T}_{t'}}| + 1$, we have $\epsilon_{t'} + \delta \le \epsilon_t$. For the forget node, closing a fully formed district shifts its accumulated weight to the closed districts accumulator $z_0$, preserving the multiplicative error.

Finally, the DP evaluates the relaxed composition constraints ($c$-balancedness or $B$-threshold) at the forget node $t$. Because the DP maintains an $\epsilon$-approximation, for any district $U_{i^*}$ satisfying the exact composition constraints and close on node $t$, the DP the tracked weight vector $z$ of a district approximates its true weight $w^* = w(U_{i_*})$.  Therefore, for $B$-threshold districts: The exact optimal solution requires $w^*_1 \ge B$, so its approximate tracked weight satisfies $z_{1} \ge e^{-\delta_{alg}}B \ge e^{-\delta}B\ge (1-\delta)B$ for small enough $\delta$, which passes the check. Similar argument holds for $c$-balanced districts.  This yields an $\epsilon$-approximation to the optimal weight satisfying the $\delta$-relaxed constraints.

\paragraph{Running time}

For a node $t$ in the tree decomposition, the bag size is $|X_t| \le tw+1$. The number of open districts $I$ is at most $tw+1$. To bound the number of states $\vs = (s_1, \dots, s_I)$, we first characterize the choice of $\partition_1,\dots,\partition_I$.  Here we first partition the bag into connected components, and then group them into open districts. Let $S(p,b)$ be the number of ways to partition $p$ items into $b$ non-empty parts (known as the Stirling number of the second kind).  First, we partition the bag into exactly $b$ non-empty parts. Second, we select one part to be the unselected set $P_0$ (or none).  Finally, we group the remaining $b-1$ parts into $c \le b-1$ open districts.  Summing over all possible values of $b$ and $c$, the choice of $\partition_1,\dots,\partition_I$ is bounded by 
$$\sum_{b=1}^{tw+1} S(tw+1,b) \left(\sum_{c=1}^{b-1} b\cdot S(b-1,c) \right)\le (tw+1)\left(\sum_{b=1}^{tw+1} S(tw+1,b)\right)^2 = 2^{O(tw \ln tw)},$$
because the sum $\sum_{b=1}^{tw+1} S(tw+1, b)$ is the number of partition (known as Bell number) is in $2^{O(tw \log tw)}$. 

The number of distance profile and parent status pair is in $O((2(\radiusbd+1))^{tw+1})$, and the number of centers is $n^{tw+1}$  Thus, the total number of valid combined trace $\vs$ at any node $t$ is bounded by $n^{tw+1}2^{O(tw \log tw)} (2(\radiusbd + 1))^{tw+1}$.  For each state $(t,\vs)$, the trimming step bounds the number of weight profiles in $D(t, \vs)$ to the number of buckets $O\left(\left(\frac{n \ln R}{\epsilon}\right)^{(tw+2)d}\right)$ as the dimension of weight profile $\vz$ is at most $d(I+1)\le d(tw+2)$.  Combining these factors, the maximum size of the table $D(t, \cdot)$ at any node $t$, denoted by $|D|_{max}$, is:
$$|D|_{max} = 2^{O(tw \log tw)} \cdot (\radiusbd+1)^{tw+1} \cdot n^{tw+1} \cdot \left( \frac{n}{\epsilon} \right)^{(tw+2)d} \left(\prod_{i=1}^d \ln R_i\right)^{tw+2}.$$
The running time of each DP transition is dominated by the join operation, which takes time quadratic in the table size, $|D|^2_{max}$.  Finally, as there are $O(n)$ nodes in the tree decomposition, the total running time is 
$$O(n \cdot |D|_{max}^2) = O\left(n^{(2d+1)(tw+2)} \epsilon^{-2d(tw+2)} (\radiusbd+1)^{2(tw+1)} 2^{O(tw \log tw)} \left(\prod_{i=1}^d \ln R_i\right)^{2(tw+2)} \right).$$

\subsection{Weak Radius Constraints}
For the weak radius case, the algorithm uses the same dynamic programming framework as in the strong radius case, but with a simpler state space. In the strong radius setting, distances are defined within the induced subgraph and must be tracked dynamically via distance profiles and parent statuses.  However, the weak radius constraint depends on the static distance on the original graph $G$.  Therefore, we can precompute them using All-Pairs Shortest Paths.

We only need to modify the \emph{Introduce Node} step.  When considering the addition of a vertex $v$ to an open district $i$ centered at $\gamma_i$, we verify 
$$d_G(v, \gamma_i) \le \radiusbd$$
If the condition holds, we proceed with the standard partition update; otherwise, the branch is pruned. 
By eliminating the distance profile $\rho$ and parent status $\pi$ from the state, the number of trace for each node is reduced by a $(k+1)^{tw+1}$ factor.

\end{document}